\newcounter{mnotecount}[section]
\theoremstyle{plain}
\newtheorem{theorem}{Theorem}
\newtheorem{proposition}[theorem]{Proposition}
\newtheorem{lemma}[theorem]{Lemma}
\newtheorem*{conjecture}{Conjecture}
\newtheorem{definition}[theorem]{Definition}
\newtheorem{remark}[theorem]{Remark}
\newtheorem{example}[theorem]{Example}
\DeclareMathOperator{\tho}{\text{\textthorn}}
\DeclareMathOperator{\edt}{\eth}
\renewcommand{\c}{\nabla}
\newcommand{\C}{\mathcal{C}}
\newcommand{\dd}{\mathrm{d}}
\renewcommand{\c}{\nabla}
\newcommand{\e}{\epsilon}
\renewcommand{\l}{\lambda}
\newcommand{\w}{\omega}
\newcommand{\T}{\Theta}
\newcommand{\mc}{\mathcal}
\newcommand{\tT}{\tilde{\Theta}}
\numberwithin{equation}{section}
\title{On the geometry of Petrov type II spacetimes}
\author[S. Aksteiner]{Steffen Aksteiner}
\email{steffen.aksteiner@aei.mpg.de}
\address{Albert Einstein Institute, Am M\"uhlenberg 1, D-14476 Potsdam, Germany }
\author[L. Andersson]{Lars Andersson}
\email{lars.andersson@aei.mpg.de}
\address{Albert Einstein Institute, Am M\"uhlenberg 1, D-14476 Potsdam, Germany }
\author[B. Araneda]{Bernardo Araneda}
\email{baraneda@famaf.unc.edu.ar}
\address{ Facultad de Matem\'atica, Astronom\'ia, F\'isica y Computaci\'on,
Universidad Nacional de C\'ordoba,
Instituto de F\'isica Enrique Gaviola, CONICET,
Ciudad Universitaria, (5000) C\'ordoba, Argentina}
\author[B. Whiting]{Bernard Whiting}
\email{bernard@phys.ufl.edu}
\address{Department of Physics, University of Florida, 2001 Museum Road, Gainesville, FL 32611-8440, USA}
\begin{document}

\begin{abstract}
In general, geometries of Petrov type II do not admit symmetries in terms of Killing vectors or  spinors. We introduce a weaker form of Killing equations which do admit solutions. In particular, there is an analog of the Penrose-Walker Killing spinor. Some of its properties, including associated conservation laws, are discussed.
Perturbations of Petrov type II Einstein geometries in terms of a complex scalar Debye potential yield complex solutions to the linearized Einstein equations. The complex linearized Weyl tensor is shown to be half Petrov type N. 
The remaining curvature component on the algebraically special side is reduced to a first order differential operator acting on the potential. 
\end{abstract}

\maketitle

\section{Introduction}

A remarkable property of vacuum spacetimes of Petrov type D is the existence of `hidden symmetries', 
namely, appropriate generalizations of Killing vectors such as Killing tensors and conformal Killing-Yano tensors. 
Penrose and Walker have shown \citep{WalkerPenrose} the existence of a valence 2 Killing spinor, from which 
one can obtain the above mentioned symmetries.
Because of the Goldberg-Sachs theorem, the vacuum type D condition is equivalent to the existence of 
two independent null geodesic congruences that are shear-free.
On the other hand, in general, vacuum spacetimes of Petrov type II do not possess any symmetries or hidden symmetries.
However they do admit a shear-free null geodesic congruence.
In this paper we show that this can be used to define a weaker version of the Killing equations, and we show that 
they are solved by a generalization of the Penrose-Walker Killing spinor.

The construction of solutions to the field equations for linear spinning fields in terms of scalar, tensorial or spinorial potentials 
has a long history and has been widely studied. In the case of the Maxwell field, the main names associated to this procedure 
are Debye and Hertz. Although the usage in the literature is not consistent, see Stewart \cite{1979Stewart}, we shall here refer to scalar potentials of the above 
mentioned type as Debye potentials. In this paper we focus on the spin-2 case, and consider the construction of solutions to the 
linearized Einstein equations on backgrounds of Petrov type II, in terms of Debye potentials. The analogous construction on 
backgrounds of Petrov type D, including the Schwarzschild and Kerr spacetimes has been widely studied, 
see e.g. \cite{CK1979, Wald, Chrzanowski:1975},
and plays an important role in the study of black hole perturbations \cite{Lousto:2002}, \cite{Hollands:2019} and the self-force problem \cite{2016PhRvD..94j4066M}, \cite{Barack:2017oir}. 
The construction of solutions to the linearized Einstein equation by the Debye potential method on backgrounds of Petrov type II is possible and is closely analogous to the type D case. 
Kegeles and Cohen \cite{CK1979} and Stewart \cite{1979Stewart} have carried out a systematic study of the Debye potential construction in this case, and in particular, Stewart calculated the tetrad components of the linearized Weyl tensor. In this work we show that the linearized Weyl tensor is half type N\footnote{The linearized metric generated from a Debye potential is naturally complex and therefore there are two Petrov classifications and half types possible, see below for details.}.

It is worth recalling briefly the situation for vacuum type D spacetimes.
We shall use the conventions and notation of \cite{PR1, PR2}.
A spacetime of Petrov type D admits two repeated principal null directions $l^a, n^a$, 
which have corresponding principal spin dyad $o^A, \iota^A$.
In terms of this principal frame, the only non-vanishing Weyl scalar is $\Psi_2$.
As was shown in \cite{WalkerPenrose}, vacuum spacetimes (for the more general situation see \cite{2014Andersson}) of Petrov type D admit a Killing spinor of the form
\begin{equation}\label{KStypeD}
 \mathring{K}_{AB}=\Psi^{-1/3}_2 o_{(A}\iota_{B)}
\end{equation}
(the ring `` $\mathring{}$ '' here is notation only intended to separate the type D case from the type II case that we discuss below),
which satisfies the equation
\begin{equation}\label{KSequation}
 \nabla_{A'(A}\mathring{K}_{BC)}=0.
\end{equation}
Different kinds of symmetries are associated to the object \eqref{KStypeD}, see \cite{1973hughston}, \cite[Section 6.7]{PR2}.
In particular in vacuum, the vector field defined by 
\begin{equation}
 \mathring{\xi}_{AA'}=\nabla^{B}{}_{A'}\mathring{K}_{AB}
\end{equation}
is a Killing vector, $\nabla_{(a}\mathring{\xi}_{b)}=0$. In the Kerr-NUT class, from \eqref{KStypeD} one can also construct a 
Killing tensor $H_{ab}$ and a second Killing vector $\eta_a=H_{ab}\mathring{\xi}^{b}$, see \cite{1973hughston} for details.

Consider an Einstein spacetime, i.e. the Einstein tensor being proportional to the metric or equivalently vacuum with cosmological constant, of Petrov type~II. By \cite{jeffryes1984}, a spacetime with a Killing spinor of valence 2, cf. eq. \eqref{KSequation}, 
has Weyl tensor of type D, N or O, so a type~II geometry does not admit a Killing spinor of valence~2.
Instead, we prove in section \ref{sec:parallelspinors} below: 

\begin{theorem} 
Let $(\mathcal{M}, g_{ab})$ be a real Einstein spacetime of Petrov type II, and let $o_A, \iota_A$ be a spin dyad, 
such that $o_A$ is a repeated principal spinor. Let $\Psi_i$ be the corresponding Weyl scalars. 
Define 
\begin{align}
 K_{AB} ={}& \Psi^{-1/3}_2 o_{(A}\iota_{B)} - \tfrac{1}{3}\Psi^{-4/3}_2\Psi_3 o_{A} o_{B}, \label{eq:ProjKs} \\
 \xi_{AA'} ={}& \nabla^B{}_{A'} K_{AB}. \label{eq:xiDef}
\end{align}
Then
\begin{enumerate} 
\item $K_{AB}$ solves the ``projected'' Killing spinor equation 
\begin{equation}\label{PKSeq}
 o^A\nabla_{A'(A}K_{BC)}=0.
\end{equation}
\item $\xi_{AA'}$ solves  the ``projected'' Killing equation
\begin{align} 
o^A (\nabla_{AA'}\xi_{BB'}+\nabla_{BB'}\xi_{AA'}) = 0.
\end{align}
\end{enumerate} 
\end{theorem}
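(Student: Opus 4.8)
The plan is to work throughout in the Geroch--Held--Penrose (GHP) formalism adapted to the principal dyad $o_A,\iota_A$, and to reduce both statements to identities supplied by the Bianchi equations. First I would record the consequences of the hypotheses: type II gives $\Psi_0=\Psi_1=0$, and since $o_A$ is the repeated principal spinor the Goldberg--Sachs theorem for Einstein metrics gives $\kappa=\sigma=0$, so the dyad derivatives reduce to $\tho o^A=\edt o^A=0$, $\edt' o^A=-\rho\,\iota^A$, $\tho' o^A=-\tau\,\iota^A$, with the companion relations for $\iota^A$. The reduced Bianchi identities I will need are
\[
\tho\Psi_2=3\rho\Psi_2,\qquad \edt\Psi_2=3\tau\Psi_2,
\]
\[
\tho\Psi_3=\edt'\Psi_2-3\tau'\Psi_2+2\rho\Psi_3,\qquad \tho'\Psi_2=\edt\Psi_3-2\tau\Psi_3+3\rho'\Psi_2.
\]
In dyad components $K_{AB}$ has vanishing $\iota\iota$-part, its $o\iota$-part is $k_1=\tfrac12\Psi_2^{-1/3}$ and its $oo$-part is $k_0=-\tfrac13\Psi_2^{-4/3}\Psi_3$, and these carry the correct GHP weights.

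For part (1) I would expand $S_{A'ABC}:=\nabla_{A'(A}K_{BC)}$ in the dyad. Since $S$ is totally symmetric in $ABC$, the equation $o^A S_{A'ABC}=0$ is equivalent to the vanishing of exactly the three components carrying at least one free $o$-slot, namely the $ooo$, $oo\iota$ and $o\iota\iota$ components for both projections of $A'$; the $\iota\iota\iota$ component is left unconstrained, as it must be in view of the Jeffryes obstruction to a genuine Killing spinor. The $ooo$ component vanishes identically from $\tho o^A=\edt o^A=0$ and the vanishing $\iota\iota$-part of $K$. The $oo\iota$ component reduces to the scalar equations $\tho k_1+\rho k_1=0$ and $\edt k_1+\tau k_1=0$, which are precisely $\tho\Psi_2=3\rho\Psi_2$ and $\edt\Psi_2=3\tau\Psi_2$. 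The $o\iota\iota$ component is the substantive case: here the $oo$-part of $K$ enters, and after substitution the $\edt'\Psi_2$ and $\tau'$ terms, and separately the $\edt\Psi_3$ and $\tau$ terms, cancel in pairs by the remaining two Bianchi identities. This is what pins down both the exponent and the coefficient $-\tfrac13$ of the $\Psi_3$ correction in \eqref{eq:ProjKs}.

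For part (2) I would first use part (1) algebraically: decomposing $\nabla_{A'A}K_{BC}$ into its totally symmetric part plus the two $\epsilon$-traces and contracting with $o^A$, the symmetric part drops by part (1), leaving $o^A\nabla_{A'A}K_{BC}$ as an explicit algebraic multiple of $o_{(B}\xi_{C)A'}$. I would then form the Killing operator $\nabla_{AA'}\xi_{BB'}+\nabla_{BB'}\xi_{AA'}$, contract with $o^A$, and observe that it decouples into a part symmetric in $A'B'$ and a part antisymmetric in $A'B'$, which must vanish separately. Expanding the definition \eqref{eq:xiDef} and commuting the two derivatives produces, besides the terms controlled by the previous identity and by the spin coefficients arising from $\nabla_{BB'}o^A$, a curvature contribution of the schematic form $o^A\Psi_{ABCD}K^{CD}$, the cosmological term being a pure trace that drops by the Einstein condition. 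The outer $o^A$ contraction together with the vanishing $\iota\iota$-part of $K$ and the repeated principal null direction condition $\Psi_0=\Psi_1=0$ removes the $\Psi_3$ and $\Psi_4$ parts of this term, reducing it to a single $\Psi_2$-proportional expression; the claim then follows once this residual piece is seen to cancel against the first-order data inherited from part (1).

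The hard part will be twofold. In part (1) it is the $o\iota\iota$ component: nothing cancels unless the $\Psi_3$ term is present with exactly the stated coefficient and the two mixed Bianchi identities are combined correctly, so keeping the GHP sign conventions mutually consistent there is the delicate point. In part (2) the real work is the curvature bookkeeping after the second derivative, where one must verify that every commutator term surviving the $o$-projection is controlled by the type II algebraic data $\Psi_{ABCD}o^Bo^Co^D=0$ and by $k_2=0$, and that the $\Psi_2$-residue balances the first-order terms exactly. As a fallback, part (2) can be carried out by a direct GHP computation of the components of $\xi_{AA'}$ from \eqref{eq:xiDef} followed by substitution into the projected Killing equation, trading the curvature identities for a longer but elementary reduction.
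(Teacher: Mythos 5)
Your proposal is correct, but it reaches part (1) by a genuinely different route than the paper. You verify the projected Killing spinor equation by direct dyad decomposition: expanding $\nabla_{A'(A}K_{BC)}$, noting the $\iota\iota\iota$ component is unconstrained, and closing the surviving scalar equations with the Bianchi identities $\tho\Psi_2=3\rho\Psi_2$, $\edt\Psi_2=3\tau\Psi_2$ and the two mixed identities for $\Psi_3$ --- all of which you state correctly, and which do yield exactly the four GHP equations $(\tho+\rho)K_1=0$, $(\edt+\tau)K_1=0$, $2(\tho'+\rho')K_1+(\edt+2\tau)K_2=0$, $2(\edt'+\tau')K_1+(\tho+2\rho)K_2=0$ that the paper records in its proposition on the components of projected Killing spinors. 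The paper instead proves part (1) conceptually, via the conformal-GHP connection: $o_A,\iota_A$ are parallel under $\tilde{\mathcal{C}}_{A'}=o^A\mathcal{C}_{AA'}$, the one-form $f_a$ decomposes as in \eqref{f2} with $\phi=\Psi_2^{1/3}$, $\eta=\tfrac{1}{3}\Psi_2^{-1}\Psi_3$ (only here do the Bianchi identities enter), and the projected equation follows from conformal-weight bookkeeping in Proposition~\ref{proposition-PKS}. What the paper's route buys is conformal invariance, the interpretation of $\tilde{\mathcal{C}}_{A'}$ as a flat connection on $\beta$-surfaces, and the extension to non-Einstein (real-analytic) backgrounds; what your route buys is elementary self-containedness and a transparent explanation of why the exponent $-1/3$ and the coefficient $-\tfrac{1}{3}$ in \eqref{eq:ProjKs} are forced. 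For part (2) the paper prints no derivation at all --- the projected Killing equation for $\xi_a$ is asserted in Example~\ref{ex:KvFromKs} and was verified by computer algebra --- so your fallback of computing the GHP components of $\xi_a$, cf.\ \eqref{eq:xiGHP}, and substituting into the projected Killing equation matches what the paper actually does, while your primary commutator route is sketchier at one point: the claim that the cosmological term ``drops as a pure trace by the Einstein condition'' is too quick, since the Einstein condition only kills the $\Phi_{ABC'D'}$ contributions, and $\Box_{AB}K_{CD}$ still produces genuine $\Lambda$-terms of the form $\Lambda\,\epsilon_{A(C}K_{D)B}$ that must be tracked and shown to cancel after the $o^A$-projection; since your fallback is complete, this is a presentational weakness rather than a gap.
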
 

Now, the Debye potential construction produces {\em complex} solutions to the linearized Einstein equations, 
which leads to the possibility of having ``half types'' for the Weyl tensor. 
Recall first that in a {\em real}, four-dimensional orientable manifold with a metric of Lorentzian signature, 
the Hodge star operator $*$ acting on 2-forms satisfies $*^{2}=-1$, thus it has eigenvalues $\pm i$. 
As a consequence, the eigenspaces of $*$ are complex, i.e.
a self-dual (SD) or anti-self-dual (ASD) 2-form is necessarily complex.
Any real 2-form can be written as the sum of a SD part and an ASD part, and these pieces are complex conjugates of each other.
Since the Weyl tensor $C_{abcd}$ can be regarded as a tensor-valued 2-form, a similaR{\'o}zgar discussion applies to it.
Namely, $C_{abcd}$ can be written as the sum of a SD piece $C^{+}_{abcd}$ and an ASD piece $C^{-}_{abcd}$, 
where $C^{+}_{abcd}$ and $C^{-}_{abcd}$ are complex conjugates of each other.
In spinor terms (see section \ref{sec:Preliminaries} for notation and details), these are 
$C^{-}_{abcd}=\Psi_{ABCD}\bar\epsilon_{A'B'}\bar\epsilon_{C'D'}$ 
and  $C^{+}_{abcd}=\bar\Psi_{A'B'C'D'}\epsilon_{AB}\epsilon_{CD}$, where $\Psi_{ABCD}$ is the Weyl curvature spinor, 
see \cite[Section 4.6]{PR1}.

On the other hand, if the spacetime metric is {\em complex}, the above decomposition still holds, 
but the pieces $C^{+}_{abcd}$ and $C^{-}_{abcd}$ are now independent entities, 
so one has two independent Weyl spinors $\Psi_{ABCD}$ and $\tilde{\Psi}_{A'B'C'D'}$, 
see \cite[Section 6.9]{PR2}.
In particular, the Petrov types of $\Psi_{ABCD}$ and $\tilde{\Psi}_{A'B'C'D'}$ are independent; for example, one part may be algebraically 
special while the other one is algebraically general\footnote{Note however that due to results by R{\'o}zga \cite{1977RpMP...11..197R}, in order for a complex 4-dimensional spacetime to admit a real, Lorentzian slice, the algebraic type of $\Psi_{ABCD}$ and $\tilde\Psi_{ABCD}$ must be the same. See also \cite{1977IJTP...16..663W}.}.  
This also applies to linearized gravity, where even if the background metric is real, a complex perturbation will in general have independent 
SD and ASD linearized curvatures.
In what follows we denote the SD and ASD linearized curvature spinors for a complex perturbation by $\dot{\tilde{\Psi}}_{A'B'C'D'}$ 
and $\dot\Psi_{ABCD}$ respectively, see section \ref{sec:Preliminaries} for details.

We shall make use of the Geroch-Held-Penrose (GHP) formalism \cite{GHP}.
For algebraically special spacetimes,
the method of adjoint operators introduced by Wald in \cite{Wald} (see section \ref{sec:Debye} for a brief review) can be used to show that 
if a scalar field $\chi$ of GHP weight $\{-4,0\}$ solves the Debye equation
\begin{align} 
\left(\left( \tho' - \bar{\rho}' \right) \left(\tho + 3 \rho \right) - \left(\edt' - \bar{\tau}\right) \left(\edt + 3 \tau \right)  - 3 \Psi_2 \right) \chi = 0,
\end{align}
then the complex tensor field $h_{ab}=\mathcal{S}^{\dag}(\chi)_{ab}$, where $\mathcal{S}^{\dag}$ is the adjoint of the 
operator $\mathcal{S}$ defined in eq. \eqref{eq:operatorS} below, is a solution to the linearized Einstein vacuum equations (possibly with cosmological constant).
For perturbations of vacuum type D spacetimes,
it was shown in \cite{CK1979} that the ASD linearized Weyl spinor $\dot{\Psi}_{ABCD}$ has special algebraic structure, \textit{viz.} it is of Petrov type N,
\begin{equation}\label{ASDlweyl-typeD}
 \dot{\Psi}_{ABCD}=o_{A}o_{B}o_{C}o_{D} \dot{\Psi}_{4},
\end{equation}
whereas the SD linearized Weyl spinor may be algebraically general. 

Furthermore, while the linearized Weyl scalars associated to a Debye potential $\mathring{\chi}$ are given in general by fourth order differential 
operators applied to $\mathring{\chi}$, it was shown in \cite{CK1979}, see also \cite{AB2019}, that the scalar field $\dot{\Psi}_{4}$ 
in \eqref{ASDlweyl-typeD} is given by the simple expression,
\begin{equation}\label{dpsi4-typeD}
 \dot{\Psi}_{4} = c \mathcal{L}_{\mathring{\xi}}\mathring{\chi},
\end{equation}
with (possibly complex) constant $c$. Here, $ \mathcal{L}_{\mathring{\xi}}$ is the Lie derivative along $\mathring{\xi}^a$, which itself is given by \eqref{eq:xiDef} for the type D Killing spinor. In the Kerr spacetime, \eqref{dpsi4-typeD} is essentially the time derivative of $\mathring{\chi}$, see also \cite{Lousto:2002}. For Petrov type D, this reduction was done by Kegeles and Cohen \cite{CK1979} for the vacuum case and by Torres del Castillo \cite{TorresdelCasillo:1994} including a cosmological constant\footnote{One of the Authors, B.W., also did this tedious computation including a cosmological constant in 1983, but did not publish it.}.

One may think that the remarkably simple structure \eqref{ASDlweyl-typeD}-\eqref{dpsi4-typeD} is 
associated to the very special symmetry properties of vacuum type D spacetimes, i.e. to the existence of the 
``hidden'' symmetry \eqref{KStypeD} and the two associated isometries mentioned before.
In this note we generalize these results to Petrov type II spacetimes, which in general do not possess any isometries:

\begin{theorem} \label{thm}
Consider an Einstein spacetime of Petrov type II with repeated principal spinor $o^A$.
Let $h_{ab}$ be a complex solution to the linearized Einstein vacuum equations
generated by a Debye potential $\chi$. Then
\begin{enumerate}
\item the ASD Weyl spinor of $h_{ab}$ is of Petrov type N,
\begin{equation}\label{eq:linearizedWeyl}
\dot\Psi_{ABCD} = o_{A}o_{B}o_{C}o_{D} \dot\Psi_{4}.
\end{equation}
\item the non-vanishing component of \eqref{eq:linearizedWeyl} is given by
\begin{align}
\dot\Psi_{4}={}& -(\Psi_2^{4/3} \xi^a \Theta_a + 3 \Psi_{2}^2 + 6 \Psi_2 \Lambda) \chi.
\end{align}
Here $\xi^a$ is given by \eqref{eq:xiDef}, where $K_{AB}$ is the projected Killing spinor \eqref{eq:ProjKs}, $\Theta_a$ is the GHP connection and $6\Lambda$ corresponds to the cosmological constant.
\end{enumerate}
\end{theorem}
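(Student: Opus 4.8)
\emph{Outline of proof.}
The plan is to compute throughout in the Geroch--Held--Penrose formalism, using a dyad $o_A,\iota_A$ adapted to the repeated principal spinor. By the Goldberg--Sachs theorem the congruence generated by $o^A$ is geodesic and shear-free, so the associated spin coefficients vanish, $\kappa=\sigma=0$, equivalently $\tho\, o_A=\edt\, o_A=0$; moreover type~II gives $\Psi_0=\Psi_1=0$. The first step is to record the spinor form of the Debye metric. Writing $h_{ab}=\mathcal{S}^\dagger(\chi)_{ab}$ out in the dyad, one finds that every term carries the factor $o_A o_B$, i.e. the perturbation sits in the radiation gauge aligned with the repeated principal null direction,
\begin{equation}
h_{AA'BB'}=o_A o_B\,\Sigma_{A'B'},
\end{equation}
with $\Sigma_{A'B'}$ a symmetric primed-spinor field built from GHP derivatives of $\chi$. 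This alignment, together with $o_A o^A=0$, drives both parts of the theorem, and---crucially---only uses good properties of $o^A$, which is why the argument survives the passage from type~D (where both principal directions are shear-free geodesic) to type~II.

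For part (1) I would prove the equivalent statement $o^A\dot\Psi_{ABCD}=0$; for a totally symmetric spinor this is equivalent to $\dot\Psi_{ABCD}=o_A o_B o_C o_D\dot\Psi_4$, since contraction with $o^A$ annihilates exactly the components $\dot\Psi_0,\dots,\dot\Psi_3$. Applying the second-order operator that produces the ASD linearized Weyl spinor to $h_{AA'BB'}=o_A o_B\Sigma_{A'B'}$ and contracting a free index with $o^A$, one expands $\nabla_{AA'}$ in the dyad. The terms in which a derivative along $l$ or $m$ would act on the factor $o_A$ carried by $h$ drop out because $\tho\, o_A=\edt\, o_A=0$, while the genuine curvature couplings that could spoil the alignment are removed by $\Psi_0=\Psi_1=0$ and the Einstein condition. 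Organizing the remainder with the GHP commutators then yields $o^A\dot\Psi_{ABCD}=0$.

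For part (2) it remains to evaluate the single surviving scalar $\dot\Psi_4=\iota^A\iota^B\iota^C\iota^D\dot\Psi_{ABCD}$. A priori this is a fourth-order operator applied to $\chi$: two derivatives from $\mathcal{S}^\dagger$ and two from the curvature operator, the latter reducing to $\tho'$ and $\edt'$ acting on $\Sigma_{A'B'}$. The reduction to first order proceeds by using the Debye equation to trade the leading second-order-in-$\chi$ combination for lower-order terms, after which the background GHP field equations and Bianchi identities for the type~II Einstein geometry collapse the expression to first order. Finally I would identify the resulting first-order operator with $\Psi_2^{4/3}\xi^a\Theta_a$ plus the zeroth-order curvature terms: inserting the explicit $\xi_{AA'}=\nabla^B{}_{A'}K_{AB}$ with the projected Killing spinor $K_{AB}$ of the preceding theorem, and expanding $\xi^a\Theta_a\chi$ as a weighted directional derivative, reproduces exactly the surviving operator and gives $\dot\Psi_4=-(\Psi_2^{4/3}\xi^a\Theta_a+3\Psi_2^2+6\Psi_2\Lambda)\chi$.

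The main obstacle is this last reduction. In type~D many spin coefficients and Weyl scalars vanish and the Lie derivative along a genuine Killing vector appears directly; in type~II one must instead carry $\Psi_3$ and the additional nonvanishing coefficients throughout, and the cancellations that compress the fourth-order expression into the compact first-order form rely on invoking the Debye equation, the background Bianchi identities, and the precise $\Psi_2$--$\Psi_3$ combination stored in $K_{AB}$ simultaneously. Maintaining consistent GHP weights and recognizing the final operator as the contraction $\xi^a\Theta_a$ is where the real effort lies; by contrast part (1) is essentially forced by the radiation-gauge structure together with Goldberg--Sachs.
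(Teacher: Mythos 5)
Your route coincides with the paper's at the structural level---rewrite $h_{ab}=\mathcal{S}^{\dag}(\chi)_{ab}$ in the aligned form $h_{AA'BB'}=o_Ao_B\Sigma_{A'B'}$, compute the ASD linearized curvature, and reduce $\dot\Psi_4$ using the Debye equation and Bianchi identities---but your part (1) has a genuine gap. Alignment together with $\kappa=\sigma=0$ and $\Psi_0=\Psi_1=0$ forces only $\dot\Psi_0=\dot\Psi_1=0$. It does \emph{not} force $\dot\Psi_2=\dot\Psi_3=0$, and no GHP-commutator bookkeeping that treats $\Sigma_{A'B'}$ as a black box can produce these vanishings: Lemma~\ref{lem:asdDebyeCurvature} computes, off shell, $\dot\Psi_3=-\tfrac14(\tau\tho-\rho\edt)\mathcal{O}^{\dagger}\chi$, which is nonzero for generic weighted $\chi$ even though the metric is then exactly of your aligned form on the same type II Einstein background. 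So $\dot\Psi_3=0$ genuinely requires the field equation $\mathcal{O}^{\dagger}\chi=0$, eq.~\eqref{eq:DebyeEquation1}, which your part (1) argument never invokes (you use only the background Einstein condition, never the equation satisfied by $\chi$ or $h$). Likewise $\dot\Psi_2=0$, while an off-shell identity, is not algebraic: it relies on the specific second-derivative structure $\Sigma_{A'B'}=-\phi^{-2}\tilde{\Theta}_{(A'}(\phi^{2}\tilde{\Theta}_{B')}\chi)$ with $\phi=\Psi_2^{1/3}$, cf.~\eqref{X} and \eqref{phieta}, combined with the flatness identity $\tilde{\Theta}^{A'}\tilde{\Theta}_{A'}=0$ on $\{p,0\}$-weighted quantities, eq.~\eqref{squareTildeTheta}, which is the curvature statement behind Lemma~\ref{lemma-tildeC}. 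Your closing assertion that part (1) is ``essentially forced by the radiation-gauge structure together with Goldberg--Sachs'' is therefore false; to repair it you must either carry the explicit Debye form of $\Sigma_{A'B'}$ through the computation, as the paper does, or supply an independent (nontrivial, essentially Jeffryes-style) argument that $\mathcal{E}(h)_{ab}=0$ for an aligned metric forces $\dot\Psi_2=\dot\Psi_3=0$; neither appears in your sketch.

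Your part (2) plan does match the paper's in outline---express $\dot\Psi_4$ as a fourth-order operator on $\chi$, trade the leading terms using the Debye equation, collapse the remainder with background Ricci and Bianchi identities, and recognize the survivor as $\Psi_2^{4/3}\xi^a\Theta_a$ plus curvature terms---but all of the content lies in steps you defer: the paper requires the projected operators $\tilde{\Theta}_{A'}=o^A\Theta_{AA'}$, $\Theta_{A'}=\iota^A\Theta_{AA'}$, their commutators \eqref{eq:ProjCommutators}, the auxiliary identities \eqref{eq:ThThtSimpSet} and \eqref{eq:Psi4CommEq1}--\eqref{eq:Psi4CommEq4}, and even then $\dot\Psi_4$ retains off-shell remainders proportional to $\mathcal{O}^{\dagger}\chi$, see \eqref{eq:LinPsi4Final}, which drop only after imposing \eqref{eq:DebyeEquation1}. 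Note also that $\sigma'\neq 0$ in type II, so the relevant operator is $\tfrac12\bigl(\Theta_{B'}\Theta_{A'}-\sigma'_{B'}(\tilde{\Theta}_{A'}+2\tilde{f}_{A'})\bigr)$ acting on $\Sigma^{A'B'}$, with $\tilde f_{A'}$ as in \eqref{projectedf}, not merely $\tho'$ and $\edt'$ derivatives as you state.
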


\begin{remark}
Upon finishing this work, we found a virtually unknown preprint by Jeffryes, \cite{Jeffryes:1986}, about half-algebraically special geometries and potentials for field equations. Although Theorem~\ref{thm} can alternatively be obtained from \S 8 of that work, the interpretation of the derivative in terms of a projected Killing vector has not been given there. In fact the focus of that work was on the non-linear case coupled to Yang-Mills and the relation to the situation here is quite intricate, see remark~\ref{rem:RelatedWork} for further details.
\end{remark}

In the case of linearized gravity in Minkowski spacetime, it was shown in \cite{torres1999} that real solutions of the linearized Einstein vacuum equations 
are in one-to-one correspondence with complex solutions with half-flat curvature, 
which in turn are in one-to-one correspondence with solutions of the scalar wave equation.
The result of Theorem \ref{thm} tells us that, while the complex metric perturbation generated by a Debye potential in a type II space is not half-flat, the linearized curvature has a simple structure since it is half type N. 

For vacuum type D spacetimes, it is sometimes assumed that, up to gauge, all real solutions of the linearized Einstein vacuum equations can be obtained, locally, as the real part of a metric generated by a Debye potential; for recent advances in the Schwarzschild and Kerr cases see respectively \cite{2018:Prabhu:Wald} and \cite{Hollands:2019}.
For the more general vacuum type II case, from these considerations we expect the result of Theorem \ref{thm} to be 
of relevance for addressing the following conjecture:
\begin{conjecture}
 All real solutions of the linearized Einstein vacuum equations on a vacuum type II background can be locally obtained, up to gauge, as the real part of the metric generated by a Debye potential.
\end{conjecture}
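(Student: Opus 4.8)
The plan is to adapt Wald's adjoint-operator method \cite{Wald} --- as developed for the type~D case by Prabhu--Wald \cite{2018:Prabhu:Wald} and by Green--Hollands--Zimmerman \cite{Hollands:2019} --- to the type~II setting, using the half type~N structure established in Theorem~\ref{thm}. Write $\mathcal{E}$ for the formally self-adjoint linearized Einstein operator, $\mathcal{T}$ for the Teukolsky map sending a metric perturbation to the decoupled linearized curvature scalar $\dot\Psi_0$ of GHP weight $\{4,0\}$, $\mathcal{O}$ for the associated Teukolsky operator, and $\mathcal{S}$ for the operator appearing in Wald's identity $\mathcal{S}\,\mathcal{E}=\mathcal{O}\,\mathcal{T}$. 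Since $o^A$ is the repeated principal spinor the congruence $l^a$ is shear-free and geodesic, so this identity holds, and taking adjoints with $\mathcal{E}^{\dagger}=\mathcal{E}$ yields $\mathcal{E}\,\mathcal{S}^{\dagger}=\mathcal{T}^{\dagger}\mathcal{O}^{\dagger}$; this is exactly why $\mathcal{S}^{\dagger}(\chi)$ solves the linearized equations when $\chi$ solves the Debye equation $\mathcal{O}^{\dagger}\chi=0$. A preliminary degrees-of-freedom count makes the conjecture plausible: modulo gauge a real linearized vacuum solution carries two real functions of characteristic data, matching the real and imaginary parts of a single complex Debye scalar of weight $\{-4,0\}$.

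Given a real solution $h_{ab}$ with $\mathcal{E}(h)=0$, the first substantive step is to extract its decoupled scalar $\dot\Psi_0(h)=\mathcal{T}(h)$, which satisfies $\mathcal{O}\,\dot\Psi_0(h)=\mathcal{S}\,\mathcal{E}(h)=0$. One then seeks a Debye potential $\chi$ with $\mathcal{O}^{\dagger}\chi=0$ for which $\mathrm{Re}\,\mathcal{S}^{\dagger}(\chi)$ has the same decoupled scalar as $h$. By Theorem~\ref{thm} the reconstructed complex metric $H_{ab}=\mathcal{S}^{\dagger}(\chi)_{ab}$ has ASD Weyl spinor of pure type~N, so $\dot\Psi_0(H)=0$ and the scalar $\dot\Psi_0(\mathrm{Re}\,H)$ is governed entirely by the extreme self-dual linearized curvature $\dot{\tilde\Psi}_{0'}$ of $H$. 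Matching therefore becomes the relation $\dot{\tilde\Psi}_{0'}(\mathcal{S}^{\dagger}\chi)=2\,\overline{\dot\Psi_0(h)}$, to be solved subject to $\mathcal{O}^{\dagger}\chi=0$; this is a fourth-order inversion whose consistency rests on the (half) Teukolsky--Starobinsky identities, and it is the simple first-order form of $\dot\Psi_4$ in Theorem~\ref{thm}, built from the projected Killing vector $\xi^a$, that keeps the companion relation for $\chi$ tractable.

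The main obstacle --- and the reason the statement is a conjecture --- is the final step: showing that once the decoupled scalars agree, the residual $r_{ab}=h_{ab}-\mathrm{Re}\,\mathcal{S}^{\dagger}(\chi)_{ab}$ is pure gauge up to an explicitly identified family of \emph{non-radiative}, algebraically special modes. This requires (i) the existence of a radiation gauge adapted to $o^A$, in which $r_{ab}$ can be written as $\mathcal{L}_{v}g_{ab}$ plus a finite-dimensional remainder, and (ii) a characterization of the \emph{kernel} $\ker\mathcal{T}$ modulo gauge. In the type~D case this kernel is finite-dimensional and is pinned down by the two Killing vectors --- the perturbations of mass and angular momentum --- together with separation of variables. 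Here the situation is genuinely harder: type~II backgrounds admit no isometries, the projected Killing vector $\xi^a$ of the first Theorem is only a weak symmetry and furnishes no conserved charges or separable structure, so $\ker\mathcal{T}$ must be analyzed covariantly rather than mode by mode.

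I would therefore expect the decisive difficulty to be determining the dimension and geometric meaning of $\ker\mathcal{T}$ modulo gauge for a general type~II Einstein background, and proving that the Debye construction, together with the residual non-radiative family, exhausts it. It is precisely at this point that the argument should either close or require additional structural hypotheses on the background --- for instance restricting to subclasses of type~II metrics that retain enough of the type~D rigidity to control the non-radiative sector.
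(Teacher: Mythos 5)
The statement you have been given is stated in the paper as a \emph{conjecture}: the paper offers no proof of it, only a pointer to remarks of Jeffryes \cite{Jeffryes:1986} supporting its validity and the expectation that Theorem~\ref{thm} will be relevant to attacking it. Your submission is, by your own admission, not a proof either but a strategy outline, so the honest verdict is that both you and the paper leave the statement open. Your plan is nonetheless well aligned with how the type~D analogue has been pursued \cite{2018:Prabhu:Wald,Hollands:2019}, your use of the existence of radiation gauges in type~II \cite{Whiting:2007} is the right ingredient for step (i), and your identification of the decisive obstruction --- characterizing $\ker\mathcal{T}$ modulo gauge on a background with no isometries, where the projected Killing vector $\xi^a$ furnishes neither conserved charges nor a separable mode structure --- is exactly the reason the paper stops at a conjecture.

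Beyond the gap you acknowledge, two steps of your outline are asserted rather than argued and would fail as written without substantial new input. First, the matching step: you propose to solve $\dot{\tilde\Psi}_{0'}(\mathcal{S}^{\dagger}\chi)=2\,\overline{\dot\Psi_0(h)}$ subject to $\mathcal{O}^{\dagger}\chi=0$ by appeal to ``(half) Teukolsky--Starobinsky identities,'' but no such identities are established for type~II backgrounds; in type~D they rest on the genuine Killing spinor and its symmetry operators, and it is not known what survives when only the projected Killing spinor \eqref{eq:ProjKs} is available --- the projected equation \eqref{PKSeq} is strictly weaker and, as the paper notes, lacks even the conformal Killing--Yano equivalence. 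Moreover, Theorem~\ref{thm} controls only the \emph{anti-self-dual} curvature of $\mathcal{S}^{\dagger}(\chi)$; the paper computes nothing about the self-dual part $\dot{\tilde\Psi}_{A'B'C'D'}$, which is precisely the object your matching condition requires, so even the fourth-order operator you intend to invert is not written down anywhere. Second, your degrees-of-freedom count (one complex Debye scalar versus two real functions of characteristic data) is heuristic and ignores the non-radiative sector, which already in type~D is nontrivial (mass and angular momentum perturbations) and in type~II is completely uncharted. In short: your route is plausible and consistent with the paper's stated expectations, but the two load-bearing steps --- invertibility of the matching map and the structure of $\ker\mathcal{T}$ modulo gauge --- remain open, the second admitted by you, the first tacit.
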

We also point out, that Jeffryes in \cite{Jeffryes:1986} made remarks supporting the validity of this conjecture. 
\begin{remark}
Metrics generated from a Debye potential are always in radiation gauge. 
It is known that any perturbation of Petrov type II geometries can be transformed into radiation gauge, see \cite{Whiting:2007}.
\end{remark}
Theorem \ref{thm} generalizes the result \eqref{ASDlweyl-typeD}-\eqref{dpsi4-typeD} in the type D case to type II.
We stress that a generic type II spacetime does not possess any ordinary or hidden symmetries, but only the more general objects \eqref{eq:ProjKs} and \eqref{eq:xiDef} introduced in this work.
While it can be shown that the ordinary, valence-2 Killing spinor equation is equivalent to the real, conformal Killing-Yano equation, 
which is itself a generalization to differential forms of the conformal Killing equation, 
for the `projected' Killing spinor equation \eqref{PKSeq} no such equivalence exists. 
We will give a geometric interpretation to the origin of \eqref{PKSeq} in terms of spinors that are parallel under a suitable connection especially adapted to the geometry, which is  the conformal-GHP connection. This also allows us to generalize the result \eqref{eq:ProjKs}-\eqref{PKSeq} to non-vacuum spacetimes in the real-analytic case,
and to derive conservation laws associated to projected Killing spinors.

In appendix~\ref{app:RobinsonTrautman}, we also review the Robinson-Trautman reduction of the Einstein equations which admits solutions of various Petrov types. In particular there is a Petrov type II solution which we use in example~\ref{ex:xiForRT} to compute the projected Killing vector $\xi^a$. 

Most computations were performed with Spinframes \cite{SpinFrames}, based on the symbolic computer algebra package xAct for Mathematica.

\section{Preliminaries}\label{sec:Preliminaries}

\subsection{The 2-spinor formalism}

In this paper we shall make extensive use of the 2-spinor formalism, following the notation and conventions in \cite{PR1, PR2}.
The spinor bundles $\mathbb{S}\to\mathcal{M}$ and $\mathbb{S}'\to\mathcal{M}$ are rank-2 vector bundles 
with symplectic forms $\epsilon_{AB}$ and $\bar\epsilon_{A'B'}$, such that 
$T\mathcal{M}\otimes\mathbb{C}\cong\mathbb{S}\otimes\mathbb{S}'$ and $g_{ab}=\epsilon_{AB}\bar\epsilon_{A'B'}$.
The spaces of SD and ASD 2-forms are written in spinor terms as
$\Lambda^{2}_{+}\cong\mathbb{S}'^{*}\odot\mathbb{S}'^{*}$ and $\Lambda^{2}_{-}\cong\mathbb{S}^{*}\odot\mathbb{S}^{*}$ 
(with $\mathbb{S}^{*}$ the dual of $\mathbb{S}$, etc.);
in other words, a real 2-form $F_{ab}=F_{[ab]}$ has the spinor decomposition 
\begin{equation}\label{2form}
 F_{ab}=\phi_{AB}\bar\epsilon_{A'B'}+\bar{\phi}_{A'B'}\epsilon_{AB},
\end{equation}
where $F^{-}_{ab}=\phi_{AB}\bar\epsilon_{A'B'}$ and $F^{+}_{ab}=\bar{\phi}_{A'B'}\epsilon_{AB}$ are the SD and ASD 
parts of $F_{ab}$ respectively, and $\phi_{AB}=\phi_{(AB)}$.
The Riemann tensor admits a similar decomposition \cite[Eq. (4.6.38)]{PR1}:
\begin{align}
\nonumber R_{abcd} ={}& \Psi_{ABCD}\bar\epsilon_{A'B'}\bar\epsilon_{C'D'}+\bar\Psi_{A'B'C'D'}\epsilon_{AB}\epsilon_{CD} \\
\nonumber & +\Phi_{ABC'D'}\bar\epsilon_{A'B'}\epsilon_{CD}+\Phi_{A'B'CD}\epsilon_{AB}\bar\epsilon_{C'D'} \\
  & +2\Lambda (\epsilon_{AC}\epsilon_{BD}\bar\epsilon_{A'C'}\bar\epsilon_{B'D'}-\epsilon_{AD}\epsilon_{BC}\bar\epsilon_{A'D'}\bar\epsilon_{B'C'}),
  \label{Riemann}
\end{align}
where $\Psi_{ABCD}=\Psi_{(ABCD)}$ is the Weyl conformal spinor, $\Phi_{ABC'D'}=\Phi_{(AB)(C'D')}$ is the trace-free 
Ricci spinor (which is real, i.e. $\bar\Phi_{ab}=\Phi_{ab}$), and $\Lambda=R/24$ represents the scalar curvature 
(which is also real, $\bar\Lambda=\Lambda$)\footnote{In the Einstein case, we have $\Phi_{ABC'D'}=0$ and cosmological constant $6\Lambda$.}.

\subsection{Self-duality}

Let $(\mathcal{M}, g_{ab})$ be a real, four-dimensional Lorentzian manifold. Throughout we make use of the abstract index notation.
We assume the spacetime to be orientable so that there is a volume form $\varepsilon_{abcd}$ and the associated 
Hodge star $*:\Lambda^{k}\to\Lambda^{4-k}$, where $\Lambda^{k}$ is the space of $k$-forms. 
For 2-forms, $*$ satisfies $*^{2}=-1$; this induces a decomposition $\Lambda^2=\Lambda^{2}_{+}\oplus\Lambda^{2}_{-}$, 
where $\Lambda^{2}_{+}$ (resp. $\Lambda^{2}_{-}$) is the rank 3 eigenbundle of $*$ associated to the eigenvalue $+i$ (resp. $-i$). 
Elements of $\Lambda^{2}_{+}$ are called self-dual (SD) 2-forms, and those of $\Lambda^{2}_{-}$ are 
anti-self-dual (ASD) 2-forms.
Since the Riemann curvature tensor has the symmetries $R_{abcd}=R_{[ab][cd]}$, one can also apply the Hodge duality operation to it,
in particular this defines the left- and right-dual Riemann tensors by ${}^{*}R_{abcd}=\frac{1}{2}\varepsilon_{ab}{}^{ef}R_{efcd}$ 
and $R^{*}_{abcd}=\frac{1}{2}\varepsilon_{cd}{}^{ef}R_{abef}$, respectively.
For the Weyl tensor, the left- and right-duals coincide: ${}^{*}C_{abcd}=C^{*}_{abcd}$. 
One then defines the SD and ASD Weyl tensors by
\begin{equation}\label{SD-ASDWeylTensor}
 C^{\pm}_{abcd}:=\tfrac{1}{2}(C_{abcd}\mp i{}^{*}C_{abcd}).
\end{equation}
These tensors satisfy $*C^{\pm}_{abcd}=\pm i  C^{\pm}_{abcd}$.

In spinor terms, the SD and ASD Weyl tensors \eqref{SD-ASDWeylTensor} are
\begin{equation}
 C^{-}_{abcd}=\Psi_{ABCD}\bar\epsilon_{A'B'}\bar\epsilon_{C'D'}, \qquad C^{+}_{abcd}=\bar\Psi_{A'B'C'D'}\epsilon_{AB}\epsilon_{CD}.
\end{equation}
In particular, the SD and ASD Weyl tensors are complex conjugate of each other, 
which follows from \eqref{SD-ASDWeylTensor}.

\subsection{Complex geometries}

For complex geometries, the above decompositions still apply, but now a pair of complex conjugate quantities that appear 
together in a real expression such as \eqref{2form}, is replaced by two independent entities; see \cite[Section 6.9]{PR2}.
For example, for a complex 2-form, the spinor $\bar\phi_{A'B'}$ in \eqref{2form} is replaced by a spinor $\psi_{A'B'}$ which 
is no longer the complex conjugate of $\phi_{AB}$.
Similarly, the Riemann tensor of a complex metric has a spinor decomposition analogous to \eqref{Riemann}, 
but where $\bar\Psi_{A'B'C'D'}$ is now replaced by another spinor $\widetilde{\Psi}_{A'B'C'D'}$ which is no longer the 
complex conjugate of $\Psi_{ABCD}$.
The Ricci spinor $\Phi_{ABC'D'}$ and the scalar curvature $\Lambda$ do not acquire ``tilded'' versions because 
of the original reality conditions $\bar\Phi_{ab}=\Phi_{ab}$ and $\bar\Lambda=\Lambda$, which are a consequence of 
the symmetries of the Riemann tensor; they simply become complex objects.

The fact that $\Psi_{ABCD}$ and $\widetilde{\Psi}_{A'B'C'D'}$ are now two independent entities implies that 
one correspondingly has two independent algebraic classification schemes for the Weyl curvature spinors, 
so we can have for example conformally `half-flat' manifolds if, say, $\Psi_{ABCD}=0$ and $\widetilde{\Psi}_{A'B'C'D'}\neq0$, 
or `half-algebraically special' solutions if $\Psi_{ABCD}$ is algebraically special while $\widetilde{\Psi}_{A'B'C'D'}$ is general. These remarks about complex geometries apply of course also to complex, linear perturbations of real geometries.

\subsection{Conformal-GHP connections}

In section \ref{sec:parallelspinors} we shall discuss some geometric aspects of projected Killing spinors. In order to do this we need an extension of the GHP derivative that includes conformal transformations.
We call such an extension the `conformal-GHP' connection, and give a brief review of some aspects of the construction that are relevant for this work.

Let $(\mathcal{M}, g_{ab})$ be a Lorentzian spacetime, with Levi-Civita connection $\nabla_{a}$.
Let $(o^A,\iota^A)$ be a general spin dyad. We allow two kinds of transformations: 
$(i)$ `GHP transformations' $o^A\to\gamma o^A$, $\iota^A\to\gamma^{-1}\iota^A$, with $\gamma$ a non-vanishing complex scalar field, and 
$(ii)$ conformal transformations $o^A\to o^A$, $\iota^A\to\Omega^{-1}\iota^A$, where $\Omega$ is a positive scalar field.
The transformation $(ii)$ is induced by a transformation of the spin metric $\epsilon^{AB}\to\Omega^{-1}\epsilon^{AB}$, 
which is in turn induced by a conformal transformation of the metric, $g^{ab}\to\Omega^{-2}g^{ab}$.

Let $\mathbb{S}_{\{p,q\}}[w]$ be the vector bundle of conformally weighted spinors with GHP weight 
$\{p,q\}$ and conformal weight $w$, and with an arbitrary index structure.
This means that a section $\varphi^{A...A'...}_{B...B'...}\in\Gamma(\mathbb{S}_{\{p,q\}}[w])$ transforms as 
$\varphi^{A...A'...}_{B...B'...} \to \gamma^p \bar\gamma^q\varphi^{A...A'...}_{B...B'...}$ under GHP transformations, and as 
$\varphi^{A...A'...}_{B...B'...} \to \Omega^{w} \varphi^{A...A'...}_{B...B'...}$ under conformal transformations.
For example, we have $o^A\in\Gamma(\mathbb{S}_{\{1,0\}}[0])$ and $\iota^A\in\Gamma(\mathbb{S}_{\{-1,0\}}[-1])$.
The conformal-GHP covariant derivative is a linear connection
$\mathcal{C}_{AA'}:\Gamma(\mathbb{S}_{\{p,q\}}[w]) \to \Gamma(T^{*}\mathcal{M}\otimes\mathbb{S}_{\{p,q\}}[w])$.
In this work all objects considered have vanishing `$q$-weight', so we shall restrict to quantities of GHP weight $\{p,0\}$.
For the general case and more details see \cite{2018:Araneda} and references therein\footnote{When acting on {\em scalar} quantities, 
the projection of $\mathcal{C}_{AA'}$ on a null tetrad reduces to the conformally invariant GHP operators of 
Penrose and Rindler \cite[Eq. (5.6.36)]{PR1}. 
We note however that, for general spinors and tensors, the operators in \cite[Eq. (5.6.36)]{PR1} do not map conformal densities 
to conformal densities.}.
The action of $\mathcal{C}_{AA'}$ on, say, a spinor field $\varphi^{BB'}_{CC'}$ with GHP weight $\{p,0\}$ and conformal weight $w$, is given by
\begin{align}
\mathcal{C}_{AA'}\varphi^{BB'}_{CC'}={}& \nabla_{AA'}\varphi^{BB'}_{CC'}+(wf_{AA'}+p(\omega_{AA'}+B_{AA'}))\varphi^{BB'}_{CC'} \nonumber \\
 &+\epsilon_{A}{}^{B}f_{QA'}\varphi^{QB'}_{CC'}+\epsilon_{A'}{}^{B'}f_{AQ'}\varphi^{BQ'}_{CC'}-f_{CA'}\varphi^{BB'}_{AC'}-f_{AC'}\varphi^{BB'}_{CA'},
 \label{C}
\end{align}
with
\begin{subequations}
\begin{align}
 & \omega_{a}:=-\epsilon n_a+\epsilon'\ell_a-\beta' m_a+\beta\bar{m}_a, \label{omega} \\
 & B_{a}:=-\rho n_a + \tau \bar{m}_a, \label{formB} \\
 & f_{a}:= \rho n_a + \rho'\ell_a-\tau' m_a-\tau\bar{m}_a, \label{f}
\end{align}
\end{subequations}
where we are using standard GHP notation for spin coefficients.
For spinors with a different index structure, the corresponding action of $\mathcal{C}_{AA'}$ can be deduced from 
\eqref{C} by linearity and the Leibniz rule.
By construction, the connection \eqref{C} is covariant under combined conformal 
and GHP transformations,
\begin{equation*}
 \mathcal{C}_a \varphi^{B...B'...}_{C...C'...} \to \gamma^p \Omega^w \mathcal{C}_a \varphi^{B...B'...}_{C...C'...}.
\end{equation*}
In particular, applying \eqref{C} to the spin frame $(o_A,\iota_A)$ and taking into account that 
$o_A\in\Gamma(\mathbb{S}_{\{1,0\}}[1])$ and $\iota_A\in\Gamma(\mathbb{S}_{\{-1,0\}}[0])$, one finds
\begin{subequations} 
\begin{align}
 & \mathcal{C}_{AA'}o_B=(o^Co^D\nabla_{CA'}o_D)  \iota_A\iota_B,  \label{Co}\\
 & \mathcal{C}_{AA'}\iota_B=(\iota^C\iota^D\nabla_{CA'}\iota_D) o_Ao_B.  \label{Ci}
\end{align}
\end{subequations}

\begin{remark}\label{remark-dyad}
Analogously to the usual GHP connection, the conformal-GHP connection $\mathcal{C}_{AA'}$ depends on the choice 
of a spin dyad $(o^A,\iota^A)$. In what follows, the dyad $(o^A,\iota^A)$  will always be understood to 
be the one associated to $\mathcal{C}_{AA'}$.
\end{remark}

\section{Special geometry and parallel spinors}\label{sec:parallelspinors}

In this section we provide a geometric interpretation for projected Killing spinors and discuss general properties such as integrability conditions, the solution space  and a relation to conservation laws. Except for some specific examples, we do not assume the geometry to be Ricci-flat.
Several results below need a complexification of spacetime and in those situations we impose real-analyticity; this will be explicitly indicated in each case.

We will first show that the projected Killing spinor equation can be understood as a consequence of 
the existence of spinors that are parallel under the conformal-GHP connection.

\begin{remark}
The formulation in this section is conformally invariant (except for the specific examples in which we assume the Einstein condition, see (2) in Lemma~\ref{lemma-f} and examples \ref{ex:KvFromKs} and \ref{example-spinlowering}). This means that all results below are valid not only for a specific metric but for the equivalence class of metrics conformally related to each other, i.e. for conformal structures. For simplicity, however, we shall state the results in terms of spacetimes and not conformal structures.
\end{remark}

Let $(\mathcal{M}, g_{ab})$ be a Lorentzian spacetime, and let $o^{A}$ be a spinor field satisfying
\begin{align}\label{SFR}
 o^A o^B \nabla_{AA'} o_B = 0.
\end{align}
Such an $o^A$ will be called a {\em shear-free ray} (SFR). 
If $o_A$ is SFR, it follows from \eqref{Co} that it is parallel under $\mathcal{C}_{AA'}$. 
In GHP notation, \eqref{SFR} is equivalent to $\kappa=\sigma=0$. 
The condition $\kappa=0$ encodes that the null congruence associated to the vector field $o^{A}\bar{o}^{A'}$ 
is geodesic, whereas $\sigma=0$ states that this congruence is shear-free.
Thus, a non-trivial solution to \eqref{SFR} imposes the existence of a shear-free null geodesic congruence.

\begin{remark}
Several results below require a complexified spacetime, for which we need to impose real-analyticity.
The complex extension of $\mathcal{M}$ will be denoted $\mathbb{C}\mathcal{M}$. 
Our interest here is in the case where $\mathbb{C}\mathcal{M}$ arises as the complexification of a real spacetime with Lorentzian signature, see \cite[Section 6.9]{PR2}. 
\end{remark}

The reason why we need to complexify $\mathcal{M}$ is the following important result about SFRs:
\begin{lemma}[Proposition (7.3.18) in \cite{PR2}]
Let $o^A$ be a spinor, and let $\mu^{A'}$, $\nu^{A'}$ be a primed spin dyad.
Then the complex vector fields $X^a = o^A \mu^{A'}$, $Y^a = o^A \nu^{A'}$ on $\mathbb{C}\mathcal{M}$ are in involution 
if and only if $o^A$ is SFR.
\end{lemma}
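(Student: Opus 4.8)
The plan is to compute the Lie bracket $[X,Y]^a$ in the $2$-spinor formalism and to show that it closes on the distribution spanned by $X^a,Y^a$ exactly when $o^A$ is a SFR. First observe that, because $\mu^{A'},\nu^{A'}$ is a basis of the primed spin space, the module generated by $X^a=o^A\mu^{A'}$ and $Y^a=o^A\nu^{A'}$ over the (holomorphic) functions is precisely the rank-two distribution $\mathcal{D}=\{\,o^A\lambda^{A'}:\lambda^{A'}\in\mathbb{S}'\,\}$. Involution of $X,Y$ therefore amounts to the single requirement $[X,Y]^a\in\mathcal{D}$, i.e.\ that the unprimed factor of $[X,Y]^a$ be proportional to $o^A$.

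Next I expand the bracket via $[X,Y]^a=X^b\nabla_b Y^a-Y^b\nabla_b X^a$ and the Leibniz rule, which gives
\begin{align}
[X,Y]^a = o^B\bigl(\mu^{B'}\nu^{A'}-\nu^{B'}\mu^{A'}\bigr)\nabla_{BB'}o^A + o^A\,o^B\bigl(\mu^{B'}\nabla_{BB'}\nu^{A'}-\nu^{B'}\nabla_{BB'}\mu^{A'}\bigr).
\end{align}
The second summand is manifestly of the form $o^A\lambda^{A'}$ and hence lies in $\mathcal{D}$ for any (not necessarily parallel) primed dyad. For the first summand I use the $2$-spinor identity $\mu^{B'}\nu^{A'}-\nu^{B'}\mu^{A'}=(\mu_{C'}\nu^{C'})\,\epsilon^{B'A'}$, in which $\mu_{C'}\nu^{C'}\neq0$ since $\mu^{A'},\nu^{A'}$ form a dyad; after raising the primed derivative index this summand becomes a nonzero multiple of $o^B\nabla_B{}^{A'}o^A$. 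Thus involution is equivalent to $o^B\nabla_B{}^{A'}o^A\in\mathcal{D}$.

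The crux is then to detect the component of $o^B\nabla_B{}^{A'}o^A$ transverse to $o^A$, obtained by contracting the free unprimed index with $o_A$. Differentiating the identity $o^Ao_A=0$ yields $o_A\nabla_{BB'}o^A=-o^A\nabla_{BB'}o_A$, so that
\begin{align}
o_A\,o^B\nabla_B{}^{A'}o^A = -\,o^A o^B\nabla_B{}^{A'}o_A,
\end{align}
and the right-hand side is, after relabelling the symmetric pair of dummy indices, exactly the SFR quantity $o^A o^B\nabla_{AA'}o_B$ with its primed index raised. Since $A'$ is a free vector index ranging over the whole primed space, the vanishing of this transverse component is equivalent to $o^A o^B\nabla_{AA'}o_B=0$, i.e.\ to $\kappa=\sigma=0$. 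This settles both implications simultaneously: if $o^A$ is SFR the transverse part vanishes and $[X,Y]^a\in\mathcal{D}$, while if $X,Y$ are in involution the transverse part is forced to vanish and $o^A$ is SFR.

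I expect the only real difficulty to be bookkeeping: keeping track of the signs in the $\epsilon$-identity and in the raising $\epsilon^{B'A'}\nabla_{BB'}=-\nabla_B{}^{A'}$, and---more conceptually---checking that the single free primed index in the transverse component genuinely sweeps out both GHP spin coefficients $\kappa$ and $\sigma$, so that involution recovers the full SFR equation rather than one of its components only. It is also worth noting that, although the statement is phrased purely through the Lie bracket, its equivalence with genuine local integrability of $\mathcal{D}$ on $\mathbb{C}\mathcal{M}$ is what the holomorphic Frobenius theorem provides in the real-analytic setting.
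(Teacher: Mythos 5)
Your proof is correct: the Leibniz expansion of $[X,Y]^a$, the $\epsilon$-identity isolating $o^B\nabla_B{}^{A'}o^A$, the reduction of its transverse part via $o_A\nabla_{BB'}o^A=-o^A\nabla_{BB'}o_A$ to the SFR quantity $o^Ao^B\nabla_{AA'}o_B$, and the observation that the free primed index $A'$ sweeps out both $\kappa$ and $\sigma$ are all sound. The paper itself gives no proof but cites Proposition (7.3.18) of \cite{PR2}, whose argument is essentially the same Lie-bracket computation you present, so there is no substantive divergence.
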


Hence by Frobenius' theorem, the distribution defined by $\{X^a,Y^a\}$ is surface-forming in $\mathbb{C}\mathcal{M}$.
The surfaces associated to this distribution are {\em complex}, and they are called {\em $\beta$-surfaces}, 
see e.g. \citep[pp. 309-310]{PR2}:

\begin{definition}[$\beta$-surface]
A $\beta$-surface is a complex 2-dimensional surface in $\mathbb{C}\mathcal{M}$ whose tangent vectors at any one point 
are all of the form $o^A\mu^{A'}$ for some $\mu^{A'}$, where $o^A$ is fixed and satisfies equation \eqref{SFR}. 
\end{definition}

For the following it is convenient to introduce some additional notation.
\begin{definition}\label{Definition-projectedC}
Let $\mathcal{C}_{AA'}$ be the conformal-GHP connection associated to a spin dyad $(o^A,\iota^A)$.
We define the operators
\begin{equation}
 \tilde{\mathcal{C}}_{A'}:=o^A\mathcal{C}_{AA'}, \qquad \mathcal{C}_{A'}:=\iota^A\mathcal{C}_{AA'}. \label{tildeC}
\end{equation}
\end{definition}

It is worth discussing some properties of $\tilde{\mathcal{C}}_{A'}$. It is a linear operator 
that satisfies the Leibniz rule, and which maps a section of $\mathbb{S}_{\{p,0\}}[w]$ into a section of 
$\mathbb{S}'\otimes\mathbb{S}_{\{p+1,0\}}[w]$, where $\mathbb{S}'$ is the primed spin bundle introduced in 
section \ref{sec:Preliminaries}.
Consider now a $\beta$-surface $\Sigma$ in $\mathbb{C}\mathcal{M}$. Any element of the tangent bundle $T\Sigma$ is of the 
form $(x,o^A\mu^{A'})$, where $x\in\Sigma$, $o^A$ is fixed, and $\mu^{A'}$ is some primed spinor at $x$.
Therefore we can identify $T\Sigma$ with the restriction of the primed spin bundle to $\Sigma$, $\mathbb{S}'|_{\Sigma}$.
Similarly, the cotangent bundle $T^{*}\Sigma$ can be identified with the dual $\mathbb{S}'^{*}|_{\Sigma}$.
This means that a primed spinor field $\psi_{A'}$ can be thought of as a ``1-form'' in $T^{*}\Sigma$.
Therefore, restricting to $\beta$-surfaces, the operator $\tilde{\mathcal{C}}_{A'}$ is
$\tilde{\mathcal{C}}_{A'}:\Gamma(\mathbb{S}_{\{p,0\}}[w])\to\Gamma(T^{*}\Sigma\otimes\mathbb{S}_{\{p+1,0\}}[w])$, 
from which we see that it can be regarded as a connection, in the usual sense, on conformally and GHP-weighted 
vector bundles over $\beta$-surfaces on $\mathbb{C}\mathcal{M}$, see \cite{2020:Araneda}. 

We have the following result, which does not require analyticity:
\begin{lemma}\label{lemma-tildeC}
Let $(\mathcal{M}, g_{ab})$ be a Lorentzian spacetime with $\tilde{\mathcal{C}}_{A'}$ given by \eqref{tildeC}, where $o^A$ satisfies \eqref{SFR}. Then 
\begin{equation}\label{flat-tildeC}
 [\tilde{\mathcal{C}}_{A'}, \tilde{\mathcal{C}}_{B'}]=0
\end{equation}
if and only if $o^A$ is a repeated principal spinor of $\Psi_{ABCD}$, that is
\begin{equation}\label{typeII}
 \Psi_{ABCD}o^Bo^Co^D=0.
\end{equation}
\end{lemma}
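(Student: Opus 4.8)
The plan is to compute the curvature of the connection $\tilde{\mathcal{C}}_{A'}$ explicitly and to identify its single independent component with a Weyl scalar. Since $o^A$ is SFR, the coefficient $o^Co^D\nabla_{CA'}o_D$ in \eqref{Co} vanishes, so $o_A$ is parallel, $\mathcal{C}_{AA'}o_B=0$; as $\mathcal{C}_{AA'}$ is compatible with $\epsilon_{AB}$ this also gives $\mathcal{C}_{AA'}o^B=0$. Using the Leibniz rule together with $\mathcal{C}_{AA'}o^B=0$, every term in which $\mathcal{C}$ hits the contracted factor $o^B$ drops out, and the commutator collapses to
\begin{equation*}
 [\tilde{\mathcal{C}}_{A'}, \tilde{\mathcal{C}}_{B'}]\eta = o^A o^B\,[\mathcal{C}_{AA'}, \mathcal{C}_{BB'}]\,\eta .
\end{equation*}
Thus the whole question reduces to understanding the conformal-GHP curvature projected along $o^Ao^B$.

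I would then reduce to weighted scalars. From \eqref{Co}, \eqref{Ci} and $o^Ao_A=0$ one gets $\tilde{\mathcal{C}}_{A'}o_B=0=\tilde{\mathcal{C}}_{A'}\iota_B$, so on any weighted spinor field $\tilde{\mathcal{C}}_{A'}$ differentiates only the dyad components and the commutator acts through those scalars. On a scalar $\eta\in\Gamma(\mathbb{S}_{\{p,0\}}[w])$ the connection \eqref{C} is $\mathcal{C}_{AA'}\eta=\nabla_{AA'}\eta+A_{AA'}\eta$ with $A_a=w f_a+p(\omega_a+B_a)$, so $[\mathcal{C}_{AA'},\mathcal{C}_{BB'}]\eta=F_{ab}\eta$ is the abelian field strength $F_{ab}=2\nabla_{[a}A_{b]}$. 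Writing $F_{ab}=\phi_{AB}\bar\epsilon_{A'B'}+\psi_{A'B'}\epsilon_{AB}$ as in \eqref{2form}, contraction with the symmetric $o^Ao^B$ annihilates the self-dual part and leaves
\begin{equation*}
 [\tilde{\mathcal{C}}_{A'}, \tilde{\mathcal{C}}_{B'}]\eta=(o^Ao^B\phi_{AB})\,\bar\epsilon_{A'B'}\,\eta .
\end{equation*}
Hence $\tilde{\mathcal{C}}_{A'}$ is flat precisely when the single scalar $o^Ao^B\phi_{AB}$, which is linear in $(w,p)$, vanishes on all weighted bundles.

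The substantive step is the evaluation of $o^Ao^B\phi_{AB}$ from \eqref{omega}--\eqref{f} using the GHP Ricci and Bianchi identities under $\kappa=\sigma=0$. Two facts close the argument. First, the purely geometric Ricci identity $\tho\sigma-\edt\kappa=(\ldots)\sigma+(\ldots)\kappa+\Psi_0$ forces $\Psi_0=\Psi_{ABCD}o^Ao^Bo^Co^D=0$ as soon as $\kappa=\sigma=0$, independently of any field equation, so $o^A$ is automatically a principal spinor. Second, after inserting the spin-coefficient expressions for $\omega_a,B_a,f_a$ and eliminating the derivatives of spin coefficients via the Ricci identities, the trace-free Ricci contributions cancel---this cancellation is exactly what the conformally invariant construction of $\mathcal{C}_{AA'}$ guarantees---leaving a weight-dependent, generically nonzero multiple of $\Psi_1=\Psi_{ABCD}o^Ao^Bo^C\iota^D$. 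Therefore $o^Ao^B\phi_{AB}$ vanishes for all weights exactly when $\Psi_1=0$; combined with $\Psi_0=0$ this is equivalent to $\Psi_{ABCD}o^Bo^Co^D=0$, which is \eqref{typeII}.

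The main obstacle is this last computation. The conceptual reductions (the collapse to $o^Ao^B[\mathcal{C}_{AA'},\mathcal{C}_{BB'}]$ and the restriction to weighted scalars) are essentially formal, but the algebra requires carefully disentangling the conformal-weight ($w$) and GHP-weight ($p$) contributions, applying the correct Ricci and Bianchi identities to collapse the spin-coefficient derivatives into $\Psi_0$ and $\Psi_1$, and verifying both that the Ricci ($\Phi$) terms drop out---so that no Einstein assumption is needed---and that the coefficient of $\Psi_1$ is nonvanishing, which is what makes the equivalence an ``if and only if''.
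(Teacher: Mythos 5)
Your argument is correct in substance and follows the same route as the proof this paper relies on: the text gives no in-line proof of Lemma \ref{lemma-tildeC} but delegates it to Lemmas 3.4 and 3.5 of \cite{2020:Araneda}, and the computation there is the one you sketch --- SFR makes the dyad parallel for $\mathcal{C}_{AA'}$, the commutator collapses to the $o^Ao^B$-projected curvature, and the GHP Ricci identities convert it into curvature scalars. Your two ``substantive facts'' do check out: with $\kappa=\sigma=0$ the identity $\tho\sigma-\edt\kappa=(\rho+\bar\rho)\sigma-(\tau+\bar\tau')\kappa+\Psi_0$ forces $\Psi_0=0$ with no field equations, and the derivative combinations produced by the projections of $df$ and $d(\omega+B)$ enter through $\tho\tau-\edt\rho$, where $\tho\tau=\rho(\tau-\bar\tau')+\Psi_1+\Phi_{01}$ and $\edt\rho=(\rho-\bar\rho)\tau-\Psi_1+\Phi_{01}$ give $\tho\tau-\edt\rho=2\Psi_1+\bar\rho\tau-\rho\bar\tau'$, so $\Phi_{01}$ cancels exactly as you assert while a weight-dependent, nonvanishing $\Psi_1$ coefficient survives. (The formal point that on weighted scalars $[\mathcal{C}_a,\mathcal{C}_b]\eta=2\nabla_{[a}A_{b]}\eta$ also holds: the $f$-terms of \eqref{C} acting on the spectator $1$-form index drop out of the antisymmetrization.) Granting the deferred spin-coefficient bookkeeping, this settles the equivalence on weighted scalars, which is where the paper actually applies the lemma (the potentials $\phi$, $\eta$, $\lambda$ and the solution-space discussion all involve scalars), and the ``only if'' direction needs nothing more.

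There is, however, one genuine gap relative to the lemma as stated. The bundle $\mathbb{S}_{\{p,0\}}[w]$ is defined with \emph{arbitrary} index structure, and your reduction to scalars via the $\tilde{\mathcal{C}}$-parallel dyad eliminates only \emph{unprimed} indices: there is no $\tilde{\mathcal{C}}$-parallel primed dyad inside the $\{p,0\}$ class, since the components of a primed-index section in $(\bar{o}^{A'},\bar\iota^{A'})$ carry $q=\pm 1$, where the curvature is $\Phi_{01}$-laden even for an SFR (cf. the $q$-term $q(-\Phi_{01}+\bar\tau'\bar\rho-\bar\kappa\bar\sigma')$ in \eqref{oocommutator}), and the formula \eqref{C} is not even stated there. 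For a section with primed indices the commutator is therefore not captured by your abelian field strength: it acquires the Riemann action on the primed index, whose surviving $o^Ao^B$-projection is $o^Ao^B\Phi_{ABC'D'}$ (the components $\Phi_{00},\Phi_{01},\Phi_{02}$; the $\bar\Psi$ and $\Lambda$ pieces are killed by $o^Ao^B$), and this must be cancelled by the curvature of the $f$-terms in \eqref{C} acting on those indices. The cancellation does hold --- it is encoded in Ricci identities such as $\tho\rho=\rho^2+\Phi_{00}$ and $\edt\tau=\tau^2-\bar\sigma'\rho+\Phi_{02}$ for $\kappa=\sigma=0$, whose right-hand sides display exactly the ``$f^2$ plus $\Phi$'' structure of the curvature of the connection built from $f_a$ --- but it is a separate verification that your scalar reduction does not see, and it is needed for the ``if'' direction of \eqref{flat-tildeC} on the full range of bundles to which the lemma (and its $\beta$-surface interpretation as a flat connection) is meant to apply.
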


This result follows from Lemmas 3.4 and 3.5 in \citep{2020:Araneda}. 
As mentioned, the proof of Lemma \ref{lemma-tildeC} does not require analyticity.
However, from the discussion above we know that {\em on $\beta$-surfaces}, which live in $\mathbb{C}\mathcal{M}$, we can interpret 
$\tilde{\mathcal{C}}_{A'}$ as a connection. In that case the commutator 
$[\tilde{\mathcal{C}}_{A'}, \tilde{\mathcal{C}}_{B'}]$ is the curvature of $\tilde{\mathcal{C}}_{A'}$. Thus, 
the result of Lemma \ref{lemma-tildeC} tells us that, as long as conditions \eqref{SFR} and \eqref{typeII} hold, 
{\em $\tilde{\mathcal{C}}_{A'}$ is a flat connection on $\beta$-surfaces}. 
From \cite[eq. (3.27)]{2020:Araneda} we see that one can associate a (twisted) de Rham complex to $\tilde{\mathcal{C}}_{A'}$.
Local exactness of a de Rham complex allows to find (local) potentials in specific situations; 
for example, we find the following:
\begin{lemma}\label{lemma-f}
Suppose that $o^A$ is SFR \eqref{SFR} and a repeated principal spinor \eqref{typeII}. 
\begin{enumerate}
\item Assume $(\mathcal{M},g_{ab})$ is real-analytic with complexification $\mathbb{C}\mathcal{M}$.
Then there exist scalar fields $\phi$ and $\eta$ on $\mathbb{C}\mathcal{M}$, whose GHP and conformal weights are $\{0,0\}$, $w=-1$ 
 and $\{-2,0\}$, $w=-1$ respectively, such that $f_a$, given in \eqref{f}, takes the form
 \begin{equation}
  f_{AA'}=\nabla_{AA'}\log\phi-o_A\tilde{\mathcal{C}}_{A'}\eta. \label{f2}
\end{equation}
\item Assume that $(\mathcal{M}, g_{ab})$ is real and that $g_{ab}$ is Einstein. Then, with Weyl scalars $\Psi_i$, \eqref{f2} holds for $\phi$ and $\eta$ given by
\begin{align}\label{phieta}
 \phi=\Psi^{1/3}_2, \qquad \eta=\tfrac{1}{3}\Psi^{-1}_2\Psi_3.
\end{align}
\end{enumerate}
\end{lemma}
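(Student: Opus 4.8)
The plan is to prove both parts by decomposing the claimed identity \eqref{f2} along the dyad $(o^A,\iota^A)$: contracting the free unprimed index with $o^A$ and with $\iota^A$ splits \eqref{f2} into two scalar equations, one determining $\phi$ and one determining $\eta$. Writing $f_a$ from \eqref{f} in spinor form, $f_{AA'} = o_A(\rho'\bar{o}_{A'} - \tau'\bar\iota_{A'}) + \iota_A(\rho\bar\iota_{A'} - \tau\bar{o}_{A'})$, so that $o^A f_{AA'}$ and $\iota^A f_{AA'}$ are explicit combinations of spin coefficients. Since $o^A o_A = 0$ and $\iota^A o_A = 1$, the $o_A\tilde{\mathcal{C}}_{A'}\eta$ term survives only the $\iota^A$ projection. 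Thus the $o^A$ projection of \eqref{f2} reads $o^A f_{AA'} = o^A\nabla_{AA'}\log\phi$ (equivalently $\tilde{\mathcal{C}}_{A'}\log\phi = o^A f_{AA'}$, with $\log\phi$ regarded as an unweighted scalar), while the $\iota^A$ projection reads $\tilde{\mathcal{C}}_{A'}\eta = \iota^A\nabla_{AA'}\log\phi - \iota^A f_{AA'}$.

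For part (1) I would read these two equations as potential problems for the flat connection $\tilde{\mathcal{C}}_{A'}$. After complexifying (this is where real-analyticity enters) and restricting to a $\beta$-surface $\Sigma\subset\mathbb{C}\mathcal{M}$, Lemma~\ref{lemma-tildeC} guarantees, under \eqref{SFR} and \eqref{typeII}, that $\tilde{\mathcal{C}}_{A'}$ is a flat connection on the weighted bundles over $\Sigma$. Since for a purely conformal weight $w$ the connection $1$-form of $\tilde{\mathcal{C}}_{A'}$ on scalars is, by \eqref{C}, exactly $w\, o^A f_{AA'}$, flatness forces $o^A f_{AA'}$ to be closed on $\Sigma$; local exactness of the associated twisted de Rham complex (\cite[eq.~(3.27)]{2020:Araneda}) then yields the potential $\log\phi$, i.e.\ $\phi$ of weight $\{0,0\}$, $w=-1$. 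With $\phi$ in hand, the $\iota^A$ equation becomes $\tilde{\mathcal{C}}_{A'}\eta = J_{A'}$ with $J_{A'} := \iota^A\nabla_{AA'}\log\phi - \iota^A f_{AA'}$ a $\{-1,0\}$, $w=-1$ valued $1$-form on $\Sigma$; one checks that $J_{A'}$ is $\tilde{\mathcal{C}}$-closed and again invokes exactness to solve for $\eta$ of weight $\{-2,0\}$, $w=-1$.

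For part (2) I would instead verify the two projected equations directly with $\phi=\Psi_2^{1/3}$ and $\eta=\tfrac13\Psi_2^{-1}\Psi_3$ as in \eqref{phieta}, using the GHP Bianchi identities. In the Einstein case the trace-free Ricci spinor vanishes, so $\nabla^A{}_{A'}\Psi_{ABCD}=0$ exactly as in vacuum and the $\Psi$-Bianchi identities carry no explicit $\Lambda$; under \eqref{SFR} and \eqref{typeII} they reduce to $\tho\Psi_2 = 3\rho\Psi_2$, $\edt\Psi_2 = 3\tau\Psi_2$, together with the two identities expressing $\tho\Psi_3$ and $\edt\Psi_3$ through $\Psi_2,\Psi_3$ and the spin coefficients. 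The $o^A$ projection is then immediate: expanding $o^A\nabla_{AA'}\Psi_2$ in the primed dyad and substituting the first two identities gives $o^A\nabla_{AA'}\log\phi = \tfrac13\Psi_2^{-1}o^A\nabla_{AA'}\Psi_2 = \tau\bar{o}_{A'} - \rho\bar\iota_{A'} = o^A f_{AA'}$, fixing $\phi=\Psi_2^{1/3}$. The $\iota^A$ projection is the computational crux: I would compute $\tilde{\mathcal{C}}_{A'}\eta$ from \eqref{C} (using that $\eta$ has weight $\{-2,0\}$, $w=-1$), observe that the non-GHP-covariant coefficients $\epsilon,\beta$ cancel, leaving $\tilde{\mathcal{C}}_{A'}\eta = \bar{o}_{A'}(\edt + \tau)\eta - \bar\iota_{A'}(\tho + \rho)\eta$, and reduce $(\edt+\tau)\eta$ and $(\tho+\rho)\eta$ via the $\Psi_3$-Bianchi identities; matching the $\bar{o}_{A'}$ and $\bar\iota_{A'}$ coefficients of $\iota^A\nabla_{AA'}\log\phi - \tilde{\mathcal{C}}_{A'}\eta$ against $\iota^A f_{AA'} = \rho'\bar{o}_{A'} - \tau'\bar\iota_{A'}$ then reproduces \eqref{f2} identically.

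The hard part in (1) is establishing the closedness (integrability) of $o^A f_{AA'}$ and $J_{A'}$, since this is precisely where the repeated-principal-spinor condition \eqref{typeII} is indispensable, through the flatness of $\tilde{\mathcal{C}}_{A'}$; without it the twisted de Rham complex has nonzero curvature and no potentials exist. In part (2) the obstacle is purely the bookkeeping of the $\iota^A$ projection: the cancellations close up only with the correct signs and coefficients of the $\Psi_3$-Bianchi identities, and it is worth verifying that no Ricci or commutator identity (hence no explicit $\Lambda$) is secretly needed --- indeed the $\epsilon,\beta$ cancellation and the $\Psi$-only Bianchi identities suffice, consistent with $\phi,\eta$ in \eqref{phieta} being $\Lambda$-independent.
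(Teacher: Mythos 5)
Your part (2) is correct and takes exactly the route the paper indicates: the paper offers no detailed argument, asserting only that \eqref{phieta} ``can be easily demonstrated \dots by using the GHP form of the Bianchi identities'', and your computation fills this in accurately. I checked the details: with the conventions $o^A\iota_A=-1$, $\iota^Ao_A=1$, your $o^A$ projection reduces to the type~II Bianchi identities $\tho\Psi_2=3\rho\Psi_2$, $\edt\Psi_2=3\tau\Psi_2$, giving $o^A\nabla_{AA'}\log\Psi_2^{1/3}=\tau\bar{o}_{A'}-\rho\bar{\iota}_{A'}=o^Af_{AA'}$; your formula $\tilde{\mathcal{C}}_{A'}\eta=\bar{o}_{A'}(\edt+\tau)\eta-\bar{\iota}_{A'}(\tho+\rho)\eta$ is correct for weights $\{-2,0\}$, $w=-1$ (the general coefficient is $w-p=1$, and the non-covariant $\epsilon,\beta$ terms cancel as you claim); and the $\iota^A$ projection then closes via the two remaining identities $\tho\Psi_3-\edt'\Psi_2=-3\tau'\Psi_2+2\rho\Psi_3$ and $\edt\Psi_3-\tho'\Psi_2=-3\rho'\Psi_2+2\tau\Psi_3$. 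Equivalently, your $\iota^A$-projected equation is literally the spinorial Bianchi identity $-3f_{A'}\Psi_2+2\tilde{f}_{A'}\Psi_3+\Theta_{A'}\Psi_2-\tilde{\Theta}_{A'}\Psi_3=0$ that the paper itself invokes later in the proof of Lemma~\ref{lem:asdDebyeCurvature}, and your observation that the Einstein condition makes the $\Psi$-Bianchi identities $\Lambda$-free is also correct.

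For part (1) there is nothing in the paper to compare against: the proof of \eqref{f2} is explicitly deferred to a forthcoming publication, with only the hint that it uses $\beta$-surfaces (hence analyticity). Your outline matches that announced strategy, and the first half is sound: flatness of $\tilde{\mathcal{C}}_{A'}$ (Lemma~\ref{lemma-tildeC}), together with the fact that $[o^A\nabla_{AA'},o^B\nabla_{BB'}]$ vanishes on functions for SFR $o^A$, does force $\tilde{f}_{A'}=o^Af_{AA'}$ to be closed along $\beta$-surfaces, and local exactness of the twisted complex of \cite{2020:Araneda} then produces $\log\phi$ (granting the parameter-dependent Poincar\'e lemma to get holomorphic dependence on the transverse coordinates). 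The genuine gap is the clause ``one checks that $J_{A'}$ is $\tilde{\mathcal{C}}$-closed'': this is precisely the nontrivial integrability statement. Unlike the closedness of $\tilde{f}_{A'}$, it does not follow from the flatness of $\tilde{\mathcal{C}}_{A'}$ alone, because $J_{A'}=\iota^A\nabla_{AA'}\log\phi-\iota^Af_{AA'}$ contains a transverse derivative, and computing $\tilde{\mathcal{C}}^{A'}J_{A'}$ requires mixed commutators of the type \eqref{commutator-tildeThetaTheta}, whose curvature terms do not vanish; moreover $\phi$ is fixed by your first equation only up to a factor constant on $\beta$-surfaces, and closedness of $J_{A'}$ can depend on that choice, so the two potential problems are coupled rather than solvable in strict sequence. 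You correctly flag this as the hard part but do not carry it out, so your part (1) remains a plausible sketch with this one unfilled step --- which is, in fairness, exactly the content the paper itself defers.
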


The proof of eq. \eqref{f2}, along with other results and applications, will be given in a forthcoming publication.
This proof involves the existence of $\beta$-surfaces and that is the reason why it is formulated in the real-analytic setting.
On the other hand, \eqref{phieta} can be easily demonstrated in any Einstein spacetime, not necessarily analytic, by using the GHP form of the Bianchi identities\footnote{Since Bianchi identities are not conformally 
invariant, the expression \eqref{phieta} breaks conformal invariance.}.

\begin{remark}\label{remark-tildeC}
Consider the complexified spacetime $\mathbb{C}\mathcal{M}$.
\begin{itemize}
\item Since $\tilde{\mathcal{C}}_{A'}$ is a flat connection, the equation $\tilde{\mathcal{C}}_{A'}\lambda=0$ has 
non-trivial solutions for any weights $\{p,0\}$ and $w$. In particular, from \eqref{C} we see that,
choosing $w=p=-1$, we have
\begin{equation}
 0=\tilde{\mathcal{C}}_{A'}\lambda=o^{A}\partial_{AA'}\lambda-o^{A}\omega_{AA'}\lambda,
\end{equation}
where we used that $o^{A}f_{AA'}=-o^{A}B_{AA'}$. It then follows that such $\lambda$ satisfies
\begin{equation}\label{lambda}
 o^A\partial_{AA'}\log\lambda = o^A\omega_{AA'}.
\end{equation}
\item Let $\psi$ be a scalar field on $\mathbb{C}\mathcal{M}$ with GHP weight $\{p,0\}$ and conformal weight $w$.
Using \eqref{C}, \eqref{f2} and \eqref{lambda}, we see that the action of $\tilde{\mathcal{C}}_{A'}$ reduces to
\begin{equation}\label{identity-tildeC}
 \tilde{\mathcal{C}}_{A'}\psi=\phi^{-(w-p)}\lambda^{-p} o^A\partial_{AA'}[\phi^{w-p}\lambda^{p}\psi].
\end{equation}
\end{itemize}
\end{remark}

Using formula \eqref{identity-tildeC}, it follows that if the weighted scalar field $\psi$ satisfies $\tilde{\mathcal{C}}_{A'}\psi=0$, 
then defining $F=\phi^{w-p}\lambda^{p}\psi$, where $\phi$ and $\lambda$ are defined in \eqref{f2} and \eqref{lambda} respectively,
this is equivalent to 
\begin{equation}\label{constantbetas}
 o^A\partial_{AA'}F=0.
\end{equation}
On $\mathbb{C}\mathcal{M}$, these functions are said to be `constant on $\beta$-surfaces', since the operator $o^A\partial_{AA'}$ 
represents translations along the $\beta$-surfaces associated to $o^A$.
That is to say, there exist linearly independent spinor fields $\mu^{A'}$, $\nu^{A'}$ such that 
$o^A\mu^{A'}\partial_{AA'}\equiv \partial_{\tilde{z}}$ and $o^A\nu^{A'}\partial_{AA'}\equiv \partial_{\tilde{w}}$,
where $\tilde{z},\tilde{w}$ are coordinates {\em along} the $\beta$-surfaces.
Then equation \eqref{constantbetas} can be interpreted as 
\begin{equation}\label{constantbetas2}
 F=F(z,w), 
\end{equation}
where $z,w$ are coordinates {\em constant} on each $\beta$-surface. Notice that \eqref{constantbetas2} is 
a function on $\mathbb{C}\mathcal{M}$.

\begin{remark}\label{remark-iota}
Let $\iota^A$ be any spinor field independent of $o^A$. Then the equation 
\begin{equation}\label{iotanabla}
 \iota^A\partial_{AA'}f=0
\end{equation}
has only constant functions as solutions. To see this, note that if \eqref{iotanabla} holds we must also have $[\iota^A\nabla_{AA'},\iota^B\nabla_{BB'}]f=0$, but
\begin{equation}
 0=[\iota^A\nabla_{AA'},\iota^B\nabla_{BB'}]f=-\epsilon_{A'B'}(\iota^{A}\iota^{B}\nabla_{A}{}^{C'}\iota_{B})o^{C}\nabla_{CC'}f.
\end{equation}
By definition we know that $\iota^{A}\iota^{B}\nabla_{A}{}^{C'}\iota_{B}=0$ if and only if 
$\iota^A$ is SFR, but this is not the case since $\iota^A$ is arbitrary, so we get $o^C\nabla_{CC'}f=0$, which, 
when combined with \eqref{iotanabla}, implies $\nabla_{AA'}f=0$, i.e. $f$ is constant.
\end{remark}

In terms of this geometric setup, the projected Killing spinor equation can be deduced as follows:
\begin{proposition}\label{proposition-PKS}
Let $(\mathcal{M}, g_{ab})$ be a Lorentzian spacetime. 
Let $(o_A,\iota_A)$ be the dyad associated to $\mathcal{C}_{AA'}$ (cf. Remark \ref{remark-dyad}), and suppose that $o_A$ is SFR.
\begin{enumerate}
\item $o_A$ and $\iota_A$ are parallel spinors for \eqref{tildeC}:
\begin{equation}
 \tilde{\mathcal{C}}_{A'}o_B=0, \qquad \tilde{\mathcal{C}}_{A'}\iota_B=0. \label{PS}
\end{equation}
\item Suppose in addition that $(\mathcal{M}, g_{ab})$ is real-analytic and that $o^A$ is also a repeated principal spinor, 
so that eq. \eqref{f2} holds. Then the spinor field 
\begin{equation} \label{PKSnon-vacuum}
 K_{AB}=\phi^{-1}[o_{(A}\iota_{B)}-\eta o_Ao_B]
\end{equation}
satisfies the projected Killing spinor equation,
\begin{equation}
 o^{A}\nabla_{A'(A}K_{BC)}=0.
\end{equation}
\end{enumerate}
\end{proposition}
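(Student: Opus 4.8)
I would treat the two parts separately. Part (1) is immediate from the explicit formulas \eqref{Co}--\eqref{Ci}. Contracting \eqref{Co} with $o^A$ and using that $o_A$ is SFR, the coefficient $o^Co^D\nabla_{CA'}o_D$ vanishes by \eqref{SFR}; in fact $\mathcal{C}_{AA'}o_B=0$ identically, so a fortiori $\tilde{\mathcal{C}}_{A'}o_B=0$. For $\iota_B$ the coefficient in \eqref{Ci} need not vanish, but the right-hand side carries a factor $o_A$, so $\tilde{\mathcal{C}}_{A'}\iota_B=(\iota^C\iota^D\nabla_{CA'}\iota_D)(o^Ao_A)o_B=0$ since $o^Ao_A=0$. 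This proves \eqref{PS}.

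For part (2) the structural input is \eqref{f2}. I would first record two consequences: projecting $f_{AA'}$ along $o^A$ kills the $\eta$-term (as $o^Ao_A=0$), which gives $\tilde{\mathcal{C}}_{A'}\phi=0$ and hence $\tilde{\mathcal{C}}_{A'}\phi^{-1}=0$; projecting along $\iota^A$ retains it and yields $\mathcal{C}_{A'}\phi^{-1}=-\phi^{-1}\tilde{\mathcal{C}}_{A'}\eta$, where $\mathcal{C}_{A'}=\iota^A\mathcal{C}_{AA'}$ is as in \eqref{tildeC}. Combining $\tilde{\mathcal{C}}_{A'}\phi^{-1}=0$, part (1) and the Leibniz rule, every factor of $K_{BC}=\phi^{-1}[o_{(B}\iota_{C)}-\eta\,o_Bo_C]$ except the scalar $\eta$ is $\tilde{\mathcal{C}}_{A'}$-parallel, so $\tilde{\mathcal{C}}_{A'}K_{BC}=-\phi^{-1}(\tilde{\mathcal{C}}_{A'}\eta)\,o_Bo_C$, which is proportional to $o_Bo_C$. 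I would then trade the Levi-Civita equation for a conformal-GHP one: writing \eqref{C} for a spinor of GHP weight $\{0,0\}$ and conformal weight $2$ gives $\mathcal{C}_{AA'}K_{BC}=\nabla_{AA'}K_{BC}+2f_{AA'}K_{BC}-f_{BA'}K_{AC}-f_{CA'}K_{AB}$, and upon total symmetrization over $A,B,C$ the three $f$-terms cancel ($2-1-1=0$), so $\nabla_{A'(A}K_{BC)}=\mathcal{C}_{(A|A'|}K_{BC)}$.

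It then suffices to show $o^A\mathcal{C}_{(A|A'|}K_{BC)}=0$, which I would verify by contracting with $o^Bo^C$, $o^{(B}\iota^{C)}$ and $\iota^B\iota^C$. The ingredients are $\mathcal{C}_{AA'}o_B=0$ (hence $\mathcal{C}_{A'}o^B=0$), $\mathcal{C}_{A'}\iota_C\propto o_C$ from \eqref{Ci}, together with $o^BK_{BC}=-\tfrac12\phi^{-1}o_C$, $o^B\iota^CK_{BC}=-\tfrac12\phi^{-1}$ and $o^Bo^CK_{BC}=0$. The only nontrivial piece is $o^B\iota^C\mathcal{C}_{A'}K_{BC}=\tfrac12\phi^{-1}\tilde{\mathcal{C}}_{A'}\eta$, obtained from $\mathcal{C}_{A'}\phi^{-1}=-\phi^{-1}\tilde{\mathcal{C}}_{A'}\eta$ via the Leibniz rule. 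The $\iota^B\iota^C$ component then collapses to $\tfrac13(-1+\tfrac12+\tfrac12)\phi^{-1}\tilde{\mathcal{C}}_{A'}\eta=0$, whereas the $o^Bo^C$ and $o^{(B}\iota^{C)}$ components vanish through $o^Ao_A=0$ and $\mathcal{C}_{A'}o^B=0$. Thus $o^A\mathcal{C}_{(A|A'|}K_{BC)}=0$, which is the projected Killing spinor equation.

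The crux is the single identity $o^B\iota^C\mathcal{C}_{A'}K_{BC}=\tfrac12\phi^{-1}\tilde{\mathcal{C}}_{A'}\eta$, since this is precisely where the special form \eqref{f2} of $f_a$ (equivalently Lemma \ref{lemma-f}, whose existence rests on the repeated-principal-spinor hypothesis) enters. Without the $-o_A\tilde{\mathcal{C}}_{A'}\eta$ term in \eqref{f2}, the weighted cancellation $-1+\tfrac12+\tfrac12=0$ in the $\iota^B\iota^C$ component would fail, and the full unprojected Killing spinor equation would close only in the degenerate type D case $\eta=0$. All remaining steps are Leibniz bookkeeping built on the parallel property \eqref{PS}.
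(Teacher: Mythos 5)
Your proof is correct. Part (1) coincides with the paper's argument: both read off $\mathcal{C}_{AA'}o_B=0$ from \eqref{Co} under the SFR condition and kill $\tilde{\mathcal{C}}_{A'}\iota_B$ via the factor $o^Ao_A=0$ in \eqref{Ci}. For part (2) you rest on the same two pillars as the paper --- the parallelism \eqref{PS} and the structural form \eqref{f2} of $f_a$ --- but organize the computation differently. The paper never decomposes into dyad components: it exploits the total symmetry $\mathcal{C}_{A'A}(o_B\iota_C)=\sigma'_{A'}o_Ao_Bo_C$ (eq.\ \eqref{coi}) to replace the derivative of $o_B\iota_C$ by its symmetrization, substitutes \eqref{f2}, rewrites the $\eta$-term as $\iota_{(A}\tilde{\mathcal{C}}_{|A'|}(\phi^{-1}\eta o_Bo_{C)})$ using the parallel properties, and only then invokes the weight-$2$ cancellation (applied just to the piece $\phi^{-1}\eta o_Bo_C$) so that the combination \eqref{PKSnon-vacuum} emerges organically at the end. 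You instead apply the $2-1-1=0$ cancellation once to the whole $K_{BC}$ (whose weights you correctly identify as $\{0,0\}$, $w=2$), obtaining the clean intermediate identity $\nabla_{A'(A}K_{BC)}=\mathcal{C}_{A'(A}K_{BC)}$, and then verify $o^A\mathcal{C}_{A'(A}K_{BC)}=0$ componentwise; since the $o^A$-contraction of the totally symmetrized expression is still symmetric in $B,C$, the three contractions you list do exhaust the content. Your dyad values $o^BK_{BC}=-\tfrac12\phi^{-1}o_C$, $o^B\iota^CK_{BC}=-\tfrac12\phi^{-1}$, the crucial scalar identity $\mathcal{C}_{A'}\phi^{-1}=-\phi^{-1}\tilde{\mathcal{C}}_{A'}\eta$, and the relation $\tilde{\mathcal{C}}_{A'}K_{BC}=-\phi^{-1}(\tilde{\mathcal{C}}_{A'}\eta)\,o_Bo_C$ all check out in the Penrose--Rindler convention $o_A\iota^A=1$, and the final $-1+\tfrac12+\tfrac12=0$ cancellation in the $\iota^B\iota^C$ component is indeed precisely where the $-o_A\tilde{\mathcal{C}}_{A'}\eta$ term of \eqref{f2} (hence the repeated-principal-spinor hypothesis, via Lemma~\ref{lemma-f}) enters, as your closing comment correctly isolates. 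What your route buys is a reusable ``symmetrized $\mathcal{C}=\nabla$'' identity for the full $K_{AB}$ and a transparent localization of the key hypothesis; what the paper's route buys is a derivation that stays entirely within covariant symmetric-spinor calculus and constructs, rather than verifies, the candidate $K_{AB}$.
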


\begin{proof}
Equation \eqref{PS} follows simply from \eqref{Co}--\eqref{Ci}, so we see that analyticity is not required for this item.
For the second item, applying the definition \eqref{C} to the field $o_B\iota_C$, whose weights are $\{0,0\}$ and $w=1$, we have
\begin{equation}
 \mathcal{C}_{A'A}(o_B\iota_C)=\c_{A'A}(o_B\iota_C)+f_{A'A}o_B\iota_C-f_{A'B}o_A\iota_C-f_{A'C}o_B\iota_A. \label{coi0}
\end{equation}
But from \eqref{Co}--\eqref{Ci} we see that
\begin{equation}
 \mathcal{C}_{A'A}(o_B\iota_C)=\sigma'_{A'}o_Ao_Bo_C, \label{coi}
\end{equation}
thus $\mathcal{C}_{A'A}(o_B\iota_C)=\mathcal{C}_{A'(A}(o_B\iota_{C)})$. Combining with \eqref{coi0}, we get
\begin{equation}
 \mathcal{C}_{A'A}(o_B\iota_C)=\mathcal{C}_{A'(A}(o_B\iota_{C)})=(\nabla_{A'(A}-f_{A'(A})o_B\iota_{C)}.
\end{equation}
Using \eqref{f2}, this is
\begin{align*}
 \mathcal{C}_{A'A}(o_B\iota_C)={}& \phi\nabla_{A'(A}[\phi^{-1}o_B\iota_{C)}]+o_{(A}o_{B}\iota_{C)}\tilde{\mathcal{C}}_{A'}\eta \\
 ={}& \phi \left[ \nabla_{A'(A}[\phi^{-1}o_B\iota_{C)}]+\iota_{(A}\tilde{\mathcal{C}}_{|A'|}(\phi^{-1}\eta o_{B}o_{C)}) \right],
\end{align*}
where in the second line we used that $\tilde{\mathcal{C}}_{A'}o_B=0$, $\tilde{\mathcal{C}}_{A'}\phi=0$.
Contracting this equation with $o^A$ and using \eqref{coi} gives
\begin{equation}
 0=o^A\nabla_{A'(A}[\phi^{-1}o_B\iota_{C)}]+o^A\iota_{(A}\tilde{\mathcal{C}}_{|A'|}(\phi^{-1}\eta o_{B}o_{C)}).
\end{equation}
Let us compute the second term in the right hand side.
Using $o^A\tilde{\mathcal{C}}_{A'}(\cdot)=\tilde{\mathcal{C}}_{A'}(o^A\cdot)$, we get
\begin{equation*}
 o^A\iota_{(A}\tilde{\mathcal{C}}_{|A'|}(\phi^{-1}\eta o_{B}o_{C)})=-\tfrac{1}{3}\tilde{\mathcal{\C}}_{A'}(\phi^{-1}\eta o_Bo_C)
 =-o^A\mathcal{C}_{A'(A}[\phi^{-1}\eta o_{B}o_{C)}].
\end{equation*}
Noticing that the weights of the field $\phi^{-1}\eta o_{B}o_{C}$ are $\{0,0\}$ and $w=2$, and using the definition \eqref{C}, 
we have
\begin{equation}
 \mathcal{C}_{A'(A}[\phi^{-1}\eta o_{B}o_{C)}]=\nabla_{A'(A}[\phi^{-1}\eta o_{B}o_{C)}].
\end{equation}
Therefore
\begin{equation*}
 o^A\iota_{(A}\tilde{\mathcal{C}}_{|A'|}(\phi^{-1}\eta o_{B}o_{C)})=-o^A\nabla_{A'(A}(\phi^{-1}\eta o_{B}o_{C)})
\end{equation*}
and thus
\begin{equation}
0=o^A\nabla_{A'(A}\left[\phi^{-1}o_B\iota_{C)}-\phi^{-1}\eta o_{B}o_{C)}\right],
\end{equation}
hence the result follows.
\end{proof}

\begin{remark}
Any spinor field of the form $\w_{A}=f o_A$, where $f$ is of type $\{-1,0\}$, $w=0$, 
and satisfies $\tilde{\mathcal{C}}_{A'} f=0$, is a solution to the ``projected twistor equation''
\begin{equation}\label{ProjTE}
 o^A\nabla_{A'(A}\w_{B)}=0.
\end{equation}
This follows from the identity 
$o^A\mathcal{\C}_{A'(A}\w_{B)}=o^A\nabla_{A'(A}\w_{B)}$ replacing $\w_A=f o_A$.
\end{remark}

From Proposition \ref{proposition-PKS} we have that the spinor field \eqref{PKSnon-vacuum} is 
a projected Killing spinor in the general class of spacetimes where $o_A$ is SFR and a repeated PND but not necessarily Einstein, although in the non-Einstein case it must be real-analytic. This includes, in particular, the Kerr-(A)dS and (analytic) Kerr-Newman-(A)dS spacetimes, where $K_{AB}$ satisfies the usual, i.e. non-projected, Killing spinor equation.

\begin{example} \label{ex:KvFromKs}
Suppose that $(\mathcal{M}, g_{ab})$ is Einstein, i.e. $g_{ab}$ satisfies the vacuum Einstein equations with 
cosmological constant. Then, using \eqref{phieta}, we find that \eqref{PKSnon-vacuum} reduces to the 
expression \eqref{eq:ProjKs} given in the introduction.
Furthermore, the vector field 
\begin{align} \label{eq:xiDef2}
\xi_{AA'} = \nabla^B{}_{A'} K_{AB}
\end{align}
solve the projected Killing equation
\begin{align} 
o^A (\nabla_{AA'}\xi_{BB'}+\nabla_{BB'}\xi_{AA'}) = 0.
\end{align}
In GHP notation, \eqref{eq:xiDef2} takes the form
\begin{align} \label{eq:xiGHP}
\xi_a ={}&
 \frac{3}{2\Psi_2^{1/3}} \left( \rho' l_a - \rho n_a - \tau' m_a + \tau \bar{m}_a \right)
 + \frac{(\edt - 4\tau) \Psi_{3}}{2 \Psi_{2}^{4/3}} l_a
 -  \frac{(\tho - 4\rho) \Psi_{3}}{2 \Psi_{2}^{4/3}} m_a,
\end{align}
and from the Bianchi identities it follows that
\begin{align}
\xi^a \nabla_a \Psi_2 = 0.
\end{align} 
\end{example}
To give an explicit example, we review the Robinson-Trautman geometries in appendix~\ref{app:RobinsonTrautman}. There are known solutions of various Petrov types in this class and in particular there is one of Petrov type II.
\begin{example} \label{ex:xiForRT}
For the type II Robinson-Trautman solution given in \eqref{eq:TypeIIRTsol}, the vector \eqref{eq:xiGHP} is of the form
\begin{align}
\xi^a = \frac{-2 (-1)^{2/3}}{3 m^{4/3}} \left( (\zeta + \bar{\zeta})^3 (\partial_\zeta)^a + m (\partial_u)^a \right).
\end{align}
We note that it does not reduce to $\partial_u$, which is a Killing vector of that solution. 
\end{example}

Let us now discuss the space of projected Killing spinors. 
In type D spacetimes one can show that the space of valence--2 Killing spinors \eqref{KStypeD} is 1-dimensional, 
see e.g. \cite{jeffryes1984}.
For type II spacetimes, we first show the following:

\begin{proposition}
Let $(\mathcal{M}, g_{ab})$ be a Lorentzian spacetime, where $o^A$ is SFR and 
a repeated principal spinor. Let
\begin{align}\label{K}
K_{AB}={}& K_{2} o_{A} o_{B} - 2 K_{1} o_{(A}\iota_{B)} + K_{0} \iota_{A} \iota_{B}
\end{align}
be a projected Killing spinor, i.e. solving \eqref{PKSeq}. Then the components satisfy
\begin{subequations}
\begin{align}
  K_0 ={}& 0, \\
  \tilde{\mathcal{C}}_{A'} K_1 ={}& 0, \label{pks1} \\
  \tilde{\mathcal{C}}_{A'}K_2+2\mathcal{C}_{A'}K_1 ={}& 0. \label{pks2}
\end{align}
\end{subequations}
\end{proposition}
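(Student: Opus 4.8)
The plan is to decompose the projected Killing equation $o^{A}\nabla_{A'(A}K_{BC)}=0$ into its three components along the dyad in the free indices $B,C$, and then to use the flatness of $\tilde{\mathcal{C}}_{A'}$ to upgrade the resulting transport equation for $K_0$ into the algebraic statement $K_0=0$.

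First I would trade $\nabla$ for the conformal-GHP connection. Since $o_A$ is SFR, \eqref{Co} gives $\mathcal{C}_{AA'}o_B=0$, while \eqref{Ci} gives $\mathcal{C}_{AA'}\iota_B=\sigma'_{A'}o_Ao_B$ with $\sigma'_{A'}:=\iota^C\iota^D\nabla_{CA'}\iota_D$; in particular $\tilde{\mathcal{C}}_{A'}o_B=\tilde{\mathcal{C}}_{A'}\iota_B=0$ as in \eqref{PS}. Giving $K_{AB}$ its natural weight $\{0,0\}[2]$, the algebraic terms distinguishing $\mathcal{C}$ from $\nabla$ cancel in $o^{A}\nabla_{A'(A}(\cdot)_{BC)}$, so the equation becomes $o^{A}\mathcal{C}_{A'(A}K_{BC)}=0$. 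Inserting $K_{BC}=K_2o_Bo_C-2K_1o_{(B}\iota_{C)}+K_0\iota_B\iota_C$ and using the Leibniz rule with the dyad derivatives above, the cleanest route is to contract directly with the three dyad projectors; e.g.\ $\mathcal{C}_{AA'}o_B=0$ lets one pull $o^{B}o^{C}$ through, giving $o^{A}o^{B}o^{C}\mathcal{C}_{AA'}K_{BC}=\tilde{\mathcal{C}}_{A'}K_0$ at once, while the $\sigma'_{A'}$ term below is produced by $\mathcal{C}_{AA'}\iota_B\neq0$.

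This should yield the system
\begin{align}
\tilde{\mathcal{C}}_{A'}K_0 &= 0, \label{pp1}\\
2\tilde{\mathcal{C}}_{A'}K_1+\mathcal{C}_{A'}K_0 &= 0, \label{pp2}\\
\tilde{\mathcal{C}}_{A'}K_2+2\mathcal{C}_{A'}K_1-2\sigma'_{A'}K_0 &= 0, \label{pp3}
\end{align}
from the $\iota_B\iota_C$, $o_{(B}\iota_{C)}$ and $o_Bo_C$ parts respectively. To reach $K_0=0$ I would differentiate \eqref{pp2}: applying $\tilde{\mathcal{C}}_{[B'}$ and using $\tilde{\mathcal{C}}_{[B'}\tilde{\mathcal{C}}_{A']}K_1=0$ (Lemma~\ref{lemma-tildeC}) gives $\tilde{\mathcal{C}}_{[B'}\mathcal{C}_{A']}K_0=0$. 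Because \eqref{pp1} makes $\mathcal{C}_{A'}\tilde{\mathcal{C}}_{B'}K_0$ vanish, this antisymmetrised second derivative collapses to the commutator $[\tilde{\mathcal{C}}_{[B'},\mathcal{C}_{A']}]K_0$, which acts purely algebraically on the weighted scalar $K_0$; using $\mathcal{C}_{AA'}o_B=0$ and $\mathcal{C}_{AA'}\iota_B=\sigma'_{A'}o_Ao_B$ it equals $\bar\epsilon_{A'B'}$ times the boost-weight-zero curvature of the weight bundle of $K_0$. For GHP weight $\{2,0\}$ this curvature scalar is a nonzero multiple of $\Psi_2$ (modulo $\Lambda,\Phi$ terms), so one obtains a relation $(\,c\,\Psi_2+\dots)K_0=0$ with $c\neq0$; since $\Psi_2\neq0$ for Petrov type~II, this forces $K_0=0$. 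Feeding $K_0=0$ back into \eqref{pp2}--\eqref{pp3} gives precisely $\tilde{\mathcal{C}}_{A'}K_1=0$ and $\tilde{\mathcal{C}}_{A'}K_2+2\mathcal{C}_{A'}K_1=0$.

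The main obstacle is exactly the passage from the transport equation \eqref{pp1} to the vanishing of $K_0$: one must compute the boost-weight-zero part of $[\tilde{\mathcal{C}}_{B'},\mathcal{C}_{A'}]$ on a weight-$\{2,0\}$ scalar and confirm that its coefficient is a genuinely nonvanishing curvature scalar controlled by $\Psi_2$, and not cancelled by the conformal-weight or $\Lambda,\Phi$ contributions; otherwise \eqref{pp1} alone would leave a one-function family of nonzero $K_0$.
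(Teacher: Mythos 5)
Your reduction is correct as far as it goes, and it tracks the paper's proof closely in its second half. Giving $K_{AB}$ the weights $\{0,0\}$, $w=2$ does make the symmetrized projected derivative insensitive to the difference between $\nabla_{AA'}$ and $\mathcal{C}_{AA'}$ (the $f$-terms cancel under total symmetrization), and your three component equations are exactly right, including the $\sigma'_{A'}K_0$ term; once $K_0=0$ they reduce to \eqref{pks1}--\eqref{pks2}, which is precisely how the paper derives those two equations. Your structural observation that $\tilde{\mathcal{C}}_{[B'}\mathcal{C}_{A']}K_0$ collapses to a purely algebraic curvature endomorphism is also sound: since $\mathcal{C}_{AA'}o^B=0$ and the $\sigma'_{A'}o_Ao^B$ term from $\mathcal{C}_{AA'}\iota^B$ is annihilated by $o^Ao_A=0$, the mixed commutator on a weighted scalar is just the $o^C\iota^D$-contraction of the curvature two-form of the weight-$(p,w)$ line bundle, with no residual first-derivative terms.

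The genuine gap is exactly where you place it: the step that delivers $K_0=0$ is asserted, not proved, and its asserted form is not obviously true in the stated generality. You need the mixed anti-self-dual curvature component of the $\{2,0\}$, $w=2$ bundle to be a \emph{nonvanishing} multiple of $\Psi_2$, but you never compute it, and your antisymmetrized-primed contraction has no structural mechanism forcing the other contributions to drop out: the scalar in question is built from both $\dd(\omega+B)$ and $w\,\dd f$ and is schematically $c\,\Psi_2+c'\Lambda+(\text{Ricci terms away from the Einstein case})+(\text{quadratic spin-coefficient terms})$. Since the proposition assumes only a Lorentzian spacetime with $o^A$ SFR and repeated principal -- no Einstein condition -- and since even with the Einstein condition a $\Lambda$-contribution survives a priori, $\Psi_2\neq0$ does not imply that this combination is nonzero pointwise; it could vanish on an open set of a particular type II background, leaving precisely the one-function family of spurious $K_0$ you worry about. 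Moreover your route yields a single scalar condition, whereas the mechanism the paper uses yields two. The paper avoids all of this by computing the \emph{totally symmetrized} integrability condition $\Theta_{(A}{}^{A'}Q_{BC)A'}$ for $Q_{BCA'}=o^A\nabla_{A'(A}K_{BC)}$: by the standard Killing-spinor integrability mechanism this is algebraic for \emph{any} SFR geometry and contains only Weyl scalars, see \eqref{eq:KSIntCond}, so that with $\Psi_1=0$ one gets both $\Psi_2K_0=0$ and $\Psi_3K_0=0$ at once, and $\Psi_2\neq0$ kills $K_0$ cleanly. So the repair is to replace the antisymmetrized derivative of your second component equation by the symmetrized second derivative of the full projected equation \eqref{PKSeq}; short of that, your proof of the one nontrivial assertion, $K_0=0$, is incomplete -- as you yourself concede in your closing paragraph.
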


\begin{proof}
We first derive integrability conditions for \eqref{PKSeq} by introducing
\begin{align}
Q_{BCA'} =  o^A\nabla_{A'(A}K_{BC)}.
\end{align}
For any geometry satisfying \eqref{SFR} we find
\begin{align} \label{eq:KSIntCond}
\Theta_{(A}{}^{A'} Q_{BC)A'} = 
- 4 (\Psi_3 K_0 - \Psi_1 K_2) o_{(A} o_B \iota_{C)}
+ 4 (\Psi_2 K_0 - \tfrac{2}{3} \Psi_1 K_1) o_{(A}\iota_B \iota_{C)} 
- 2 \Psi_1 K_0 \iota_A \iota_B \iota_C .
\end{align}
The integrability conditions are then obtained by setting $Q_{BCA'}=0$, so the left hand side of \eqref{eq:KSIntCond} vanishes.
If $\Psi_1 \neq 0$, it follows that $K_{AB} = 0$ is the only solution to \eqref{PKSeq}, so we need to have at least $\Psi_1 = 0$ 
to have non-trivial solutions. 
Since we are interested in type II, we can impose $\Psi_1 = 0$.
In that case \eqref{eq:KSIntCond} leads to the restrictions
\begin{align}
\Psi_3 K_0 = 0, && \Psi_2 K_0 = 0.
\end{align}
If $K_0 \neq 0$, the Weyl spinor would be of Petrov type N (i.e. only $\Psi_4 \neq 0$). In that case \eqref{eq:KSIntCond} 
yields no restrictions on $K_{AB}$. For Petrov type D or II, we have $\Psi_2 \neq 0$ which forces $K_0 = 0$.
For $K_1$ and $K_2$, a compact form of the equations to be satisfied can be obtained in terms of the 
covariant derivative $\C_{AA'}$ and its projections $\tilde{\mathcal{C}}_{A'}$, $\mathcal{C}_{A'}$ given in \eqref{tildeC}. 
Using $K_0 = 0$, and requiring $K_{AB}$ to have conformal weight $2$ so that 
the projected Killing spinor equation is conformally invariant, eq. \eqref{PKSeq} is
\begin{equation}
 0=\tilde{\mathcal{C}}_{A'}K_{BC}+o_B\mathcal{C}_{A'C}K_1+o_C\mathcal{C}_{A'B}K_1.
\end{equation}
Contraction with $o^B$ and with $\iota^B\iota^C$ leads to,
\begin{subequations}
\begin{align}
  \tilde{\mathcal{C}}_{A'} K_1 ={}& 0, \\
  \tilde{\mathcal{C}}_{A'}K_2+2\mathcal{C}_{A'}K_1 ={}& 0,
\end{align}
\end{subequations}
or in GHP notation
\begin{subequations}
\begin{align}
(\edt + \tau) K_{1}={}&0,&&&
(\tho + \rho) K_{1} ={}&0,\\
2 (\tho' + \rho') K_{1} + (\edt + 2\tau) K_{2}={}&0, &&&
 2 (\edt' + \tau') K_{1} + (\tho + 2\rho) K_{2}={}&0.
\end{align}
\end{subequations}
\end{proof}

\begin{remark}[Solution space of the projected Killing spinor equation]
In the real-analytic case, we see from eqs. \eqref{pks1}-\eqref{pks2} that the space of solutions 
to the projected Killing spinor equation is infinite dimensional. 
For example, one solution is given in \eqref{PKSnon-vacuum}, but we also see that, taking $K_1\equiv 0$, equations 
\eqref{pks1}-\eqref{pks2} reduce to $\tilde{\mathcal{C}}_{A'}K_2=0$.
It follows from equations \eqref{identity-tildeC}, \eqref{constantbetas} and \eqref{constantbetas2} that there are infinitely many 
solutions to this equation.
Any function of the form $F=\l^{-2}\phi^2 K_2$, see Remark~\ref{remark-tildeC} and Lemma~\ref{lemma-tildeC} 
for the definition of $\lambda$ and $\phi$,
which is constant on $\beta$-surfaces leads to a projected Killing spinor $K_{AB}=\l^{2}\phi^{-2}F o_Ao_B$.
We also mention that one could in principle consider $K_1 \neq 0$ and 
$K_2 \equiv 0$ (or $\tilde{\mathcal{C}}_{A'}K_2=0$), but then \eqref{pks1}-\eqref{pks2} 
leads to $\mathcal{C}_{AA'}K_1=0$, which, taking into account that the weights of $K_1$ are 
$p=0$ and $w=1$, is explicitly $\c_{a}K_1+f_a K_1 =0$. This implies $f_a=-\c_a\log K_1$ 
and so $\c_{[a}f_{b]}=0$, which is a restriction on the geometry: 
this is satisfied for vacuum type D but not for arbitrary type II spacetimes.
\end{remark}

Finally, an interesting property of projected Killing spinors is that they give rise to solutions of the vacuum 
Maxwell equations, and therefore to conservation laws:
\begin{lemma}\label{lemma-maxwell}
Let $(\mathcal{M}, g_{ab})$ be real-analytic, and let $o^A$ be SFR and a repeated principal spinor.
Let $K_{AB}$ be a projected Killing spinor as in \eqref{K}. Assume $K_1 = c\phi^{-1}$, where $\phi$ was introduced in \eqref{f2}
and $c$ is an arbitrary constant, possibly zero.  Then the spinor field
\begin{equation}\label{lemma-nonvacuum}
  \varphi_{AB}=-2\phi^3 K_{AB}
\end{equation}
is a solution to the vacuum Maxwell equations.
\end{lemma}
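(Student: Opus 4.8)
The plan is to verify the source-free Maxwell equation $\nabla^{A}{}_{A'}\varphi_{AB}=0$ by passing to the conformal-GHP connection, under which $o_A$ is parallel and $\phi$ is annihilated by $\tilde{\mathcal{C}}_{A'}$. The first step is to note that $\varphi_{AB}=-2\phi^{3}K_{AB}$ has GHP weight $\{0,0\}$ (so $p=0$) and conformal weight $w=-1$, since $K_{AB}$ carries $w=2$ and $\phi$ carries $w=-1$. For a symmetric spinor of this weight the extra terms in \eqref{C} organise themselves so that, after raising the derivative index and contracting against the first index of $\varphi$, the $p(\omega+B)$ terms vanish identically ($p=0$) and the $f$-dependent pieces cancel (two against each other, the remaining one by $\varphi_{AB}=\varphi_{(AB)}$); thus $\mathcal{C}^{A}{}_{A'}\varphi_{AB}=\nabla^{A}{}_{A'}\varphi_{AB}$. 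This is just the conformal invariance of the spin-$1$ equation at weight $-1$, and it reduces the problem to showing $\mathcal{C}^{A}{}_{A'}\varphi_{AB}=0$.

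Next I would expand $\varphi_{AB}$ in the dyad. By the preceding Proposition a projected Killing spinor \eqref{K} has $K_0=0$, so $\varphi_{AB}=\varphi_2\,o_Ao_B-2\varphi_1\,o_{(A}\iota_{B)}$ with $\varphi_2=-2\phi^3K_2$ and $\varphi_1=-2\phi^3K_1$. I then apply $\mathcal{C}^{A}{}_{A'}$ by the Leibniz rule, using three facts established earlier: (i) $\mathcal{C}_{AA'}o_B=0$, since $o_A$ is SFR (cf. \eqref{Co}); (ii) $\tilde{\mathcal{C}}_{A'}\phi=0$, noted in the proof of Proposition \ref{proposition-PKS}, which forces $\mathcal{C}_{AA'}\phi=o_A\,\mathcal{C}_{A'}\phi$ to be proportional to $o_A$; and (iii) the component equations $\tilde{\mathcal{C}}_{A'}K_1=0$ and $\tilde{\mathcal{C}}_{A'}K_2=-2\mathcal{C}_{A'}K_1$ from \eqref{pks1}--\eqref{pks2}. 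Because of (i), every term in which the derivative lands on a factor of $o$ drops out; in particular the $\mathcal{C}_{AA'}\iota_B=(\,\cdots)o_Ao_B$ contribution is annihilated by the contraction $o^Ao_A=0$, so the primed spin coefficient appearing in \eqref{Ci} never enters. Only the derivatives hitting the scalar coefficients $\varphi_1,\varphi_2$ survive.

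Assembling these, and using \eqref{pks2} to trade $\tilde{\mathcal{C}}_{A'}K_2$ for $\mathcal{C}_{A'}K_1$, I expect the two surviving terms to combine — via the Leibniz identity $\phi\,\mathcal{C}_{A'}K_1+K_1\,\mathcal{C}_{A'}\phi=\mathcal{C}_{A'}(\phi K_1)$ — into
\begin{equation*}
\mathcal{C}^{A}{}_{A'}\varphi_{AB}=-6\,\phi^{2}\,\mathcal{C}_{A'}(\phi K_1)\,o_B
\end{equation*}
(up to a convention-dependent overall sign from the $\epsilon$-raising). This is precisely where the hypothesis enters: $K_1=c\phi^{-1}$ says exactly that $\phi K_1=c$ is a constant, and a constant scalar of weight $\{0,0\},\,w=0$ is annihilated by $\mathcal{C}_{A'}$ (on which $\mathcal{C}$ reduces to $\nabla$). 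Hence the right-hand side vanishes and $\varphi_{AB}$ solves the vacuum Maxwell equations.

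The computation itself is routine; the step most likely to cause trouble is the bookkeeping in the second paragraph, where one must carry the primed-index contractions and the $\epsilon$-raising signs carefully so that the two non-vanishing contributions assemble into the single total derivative $\mathcal{C}_{A'}(\phi K_1)$ rather than an inequivalent combination. Once that recognition is made, the structural role of the assumption $\phi K_1=\mathrm{const}$ — including the degenerate case $c=0$, where $K_1\equiv0$ and $\varphi_{AB}=\varphi_2\,o_Ao_B$ is a null field — is transparent.
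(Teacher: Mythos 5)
Your proposal is correct and follows essentially the same route as the paper's proof: both pass to the conformal-GHP connection, use the weights $\{0,0\}$, $w=-1$ to identify $\mathcal{C}_{A'}{}^{A}\varphi_{AB}$ with $\nabla_{A'}{}^{A}\varphi_{AB}$, expand in the dyad with $K_0=0$ using $\mathcal{C}_{AA'}o_B=0$, $\tilde{\mathcal{C}}_{A'}\phi=\tilde{\mathcal{C}}_{A'}K_1=0$ and \eqref{pks2}, and arrive at the same expression $\nabla_{A'}{}^{A}\varphi_{AB}=-6\phi^{2}o_{B}\,\mathcal{C}_{A'}(\phi K_1)$, which vanishes since $\phi K_1=c$ is constant and $\mathcal{C}_{A'}$ reduces to $\iota^A\nabla_{AA'}$ on weight-$(\{0,0\},0)$ scalars. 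The only difference is that the paper additionally proves the converse (that Maxwell forces $K_1=c\phi^{-1}$, via Remark \ref{remark-iota}), which the lemma as stated does not require.
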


The result of this Lemma is valid regardless of the Einstein condition, but it imposes the real-analyticity restriction, because we need the existence of the scalar field $\phi$ introduced in Lemma~\ref{lemma-f}.
If the Einstein condition is assumed, then real-analyticity is not needed, see Example~\ref{example-spinlowering} below.

\begin{proof}[Proof of Lemma~\ref{lemma-maxwell}]
The proof is straightforward in terms of the conformal connection \eqref{C}.
Since the field \eqref{lemma-nonvacuum} has GHP weight $\{0,0\}$ and conformal weight $w=-1$,
it follows from \eqref{C} that $\nabla_{A'}{}^{A}\varphi_{AB}=\mathcal{\C}_{A'}{}^{A}\varphi_{AB}$.
Replacing expression \eqref{K} for $K_{AB}$ with $K_0=0$, and using $\mathcal{\C}_{AA'}o_B=0$,
we get
\begin{align}
\nonumber -\tfrac{1}{2}\nabla_{A'}{}^{A}\varphi_{AB}={}& 
-o_B\tilde{\mathcal{C}}_{A'}(\phi^3K_2)+\tilde{\mathcal{C}}_{A'}(\phi^3 K_1\iota_B)-o_B\mathcal{C}_{A'}{}^{A}(\phi^3K_1\iota_A)\\
\nonumber ={}& -o_B\phi^3\tilde{\mathcal{C}}_{A'}K_2+o_B\mathcal{C}_{A'}(\phi^3 K_1) \\
\nonumber ={}& 2o_B\phi^3\mathcal{C}_{A'}K_1+o_B\mathcal{C}_{A'}(\phi^3 K_1) \\
\nonumber ={}& 3o_B\phi^3\mathcal{C}_{A'}K_1+o_BK_1\mathcal{C}_{A'}\phi^3 \\
 ={}& 3o_B\phi^2\mathcal{C}_{A'}(\phi K_1),
\end{align}
where in the second line we used $0=\tilde{\mathcal{C}}_{A'}\phi=\tilde{\mathcal{C}}_{A'}K_1=\tilde{\mathcal{C}}_{A'}\iota_B$ 
and also $\mathcal{\C}_{A'}{}^{A}\iota_A=0$ (which follows from \eqref{Ci}), and in the third line we used eq. \eqref{pks2}.
Now, the scalar $\phi K_1$ has weights $\{0,0\}$ and $w=0$, so from the definition \eqref{C} we have 
$\mathcal{\C}_{A'}(\phi K_1)=\iota^A\nabla_{AA'}(\phi K_1)$.
Therefore $\nabla_{A'}{}^{A}\varphi_{AB}=0$ if and only if $\iota^A\nabla_{AA'}(\phi K_1)=0$. Since we are assuming 
a generic type II spacetime with SFR $o^A$, this is true if and only if $K_1=c\phi^{-1}$ for some constant $c$, 
which can be zero.
This follows from the discussion in Remark \ref{remark-iota}.
\end{proof}

The result of Lemma \ref{lemma-maxwell} tells us that, as long as $K_1=c\phi^{-1}$, 
we have the conservation law
\begin{equation}\label{conservationlaw}
 {\rm d}\mc{F}=0, \qquad \mc{F}_{ab}=\varphi_{AB}\e_{A'B'}.
\end{equation}
We note that, choosing $c=0$, the Maxwell field \eqref{lemma-nonvacuum} is $\varphi_{AB}=-2\phi^3K_2 o_Ao_B$, 
with $\tilde{\mathcal{\C}}_{A'}(\phi^3K_2)=0$. These are the \textit{Robinson null fields} discussed for example in \citep[Theorem 7.3.14]{PR2}.

\begin{example}[Spin lowering]\label{example-spinlowering}
If we assume that $(\mathcal{M}, g_{ab})$ is Einstein, then from \eqref{phieta} we know that $\phi=\Psi^{1/3}_2$ 
and we do not need to restrict to real-analytic spacetimes. 
The Maxwell field \eqref{lemma-nonvacuum} is $\varphi_{AB}=-2\Psi_2 K_{AB}$.
Choosing $c=1$, it follows that
\begin{equation}\label{maxwell-vacuum}
 \varphi_{AB}=\Psi_{ABCD}K^{CD}.
\end{equation}
Thus, in this case, the result of Lemma \ref{lemma-maxwell} can be interpreted as a form of Penrose's spin lowering:
\begin{equation}
 \nabla^{AA'}\varphi_{AB}=(\nabla^{AA'}\Psi_{ABCD})K^{CD}+\Psi_{ABCD}(\nabla^{A'(A}K^{CD)})=0
\end{equation}
where the first term on the right vanishes since $\nabla^{AA'}\Psi_{ABCD}=0$ are the Bianchi identities for an Einstein space, 
and the vanishing of the second term follows from the fact that the projected Killing spinor equation \eqref{PKSeq}
implies $\nabla^{A'(A}K^{CD)}=A^{A'}o^{A}o^{C}o^{D}$ for some $A^{A'}$, and then one uses the type II condition \eqref{typeII}.
\end{example}

\begin{example}[Sachs' conservation law]
In \cite[eq.(5.23)]{1961Sachs}, Sachs found a conservation law for vacuum type II spacetimes, where he 
chose a tetrad rotation such that $\Psi_3 = 0$ to derive the conservation law
\begin{align}
 {\rm d} \left( \Psi_2^{2/3} o_{(A}\iota_{B)} \bar{\epsilon}_{A'B'} \right) = 0.
\end{align}
From Example \ref{example-spinlowering} we see that this conservation law can be interpreted as a form of 
Penrose's spin lowering without the necessity of imposing the tetrad gauge $\Psi_3 = 0$, since 
the components of the Maxwell field \eqref{maxwell-vacuum} are 
\begin{align}
\varphi_{0}={}0, &&
\varphi_{1}={}\Psi_{2}^{2/3},&&
\varphi_{2}={}\frac{2 \Psi_{3}}{3 \Psi_{2}^{1/3}},
\end{align}
which, using \eqref{conservationlaw}, shows the equivalence to the result of Sachs.
\end{example}

\section{Perturbation theory in terms of a Debye potential}\label{sec:Debye}

In this section we review first order perturbation theory of Einstein spacetimes of Petrov type II in terms of scalar potentials and prove Theorem~\ref{thm}. We make use of the adjoint operator method introduced by Wald in \citep{Wald} to generate solutions to the linearized Einstein equations from solutions of a scalar wave-like equation.

First, recall the expression for the linearized ASD Weyl spinor $\dot\Psi_{ABCD}$ in terms of the linearized metric for an Einstein background with metric $g_{ab}$, as given by Penrose and Rindler in \cite[Eq. (5.7.15)]{PR1},
\begin{equation}\label{linearizedWeyl}
 \dot\Psi_{ABCD}=\tfrac{1}{2} \nabla_{(A}{}^{A'}\nabla_{B}{}^{B'} h_{CD)A'B'}  + \tfrac{1}{4} g^{ef}h_{ef} \Psi_{ABCD}.
\end{equation}
\begin{remark}
We note that the spinor variational operator $\vartheta$ introduced in \cite{2016:Backdahl:ValienteKroon} leads to a minus sign of the trace term in \eqref{linearizedWeyl}. Since the linearized metrics generated from a scalar potential are trace-free, i.e. $g^{ef}h_{ef}=0$, see \eqref{eq:MetricFromDeybe} for details, the result would be the same for $\vartheta\Psi_{ABCD}$. However, since we are only interested in the ASD Weyl curvature here, there is no need to introduce the additional structures involving the $\vartheta$ variation.
\end{remark}
To state the underlying operator identity of Wald's method, we begin with
\begin{definition}\label{def:AdjOp}
Let $x_{ab}$ be a (possibly complex) symmetric 2-tensor, $\Phi$ be a complex scalar field of GHP weight $\{4,0\}$ and  denote $l^{a}l^{b}x_{ab}$ by $x_{ll}$ etc. for tetrad components. Define the differential operators $\mathcal{S}, \mathcal{E}, \mathcal{O}, \mathcal{T}$ by
\begin{subequations}
\begin{align}
\mathcal{S}(x_{ab}) :={}& (\edt-4\tau-\bar{\tau}')\left((\tho-2\bar{\rho})x_{lm}-(\edt-\bar{\tau}')x_{ll}\right) \nonumber \\ 
{}& +(\tho-4\rho-\bar{\rho})\left(\bar{\sigma}'x_{ll}+(\edt-2\bar{\tau}')x_{lm}-(\tho-\bar{\rho})x_{mm}\right), \label{eq:operatorS}\\
 \mathcal{E}(x)_{ab} :={}& (-\tfrac{1}{2}\Box + 6\Lambda) x_{ab} -\tfrac{1}{2}\nabla_{a}\c_{b}(g^{cd}x_{cd}) +\nabla^{c}\nabla_{(a}x_{b)c}
 -\tfrac{1}{2}g_{ab}(\nabla^{c}\c^{d}x_{cd}-\Box (g^{cd}x_{cd}) ), \label{eq:operatorE} \\ 
 \mathcal{O}(\Phi) :={}& 2[(\tho-4\rho-\bar{\rho})(\tho'-\rho')-(\edt-4\tau-\bar{\tau}')(\edt'-\tau')-3\Psi_2]\Phi, \label{eq:operatorO}\\
    \mathcal{T}(x_{ab})  :={}&
       \tfrac{1}{2} \left((\edt - 2 \bar{\tau}')\edt - \bar{\sigma}' (\tho - 2 \bar{\rho}) - 2(\tho \bar{\sigma}')\right) x_{ll}  + \tfrac{1}{2} (\tho - 2 \bar{\rho})\tho x_{mm} \nonumber \\
 {}& -  \left((\tho - 2 \bar{\rho})(\edt -  \bar{\tau}') - (\edt \bar{\rho}) \right)x_{lm}.   \label{eq:operatorT}
\end{align}
\end{subequations}
\end{definition}

\begin{remark}
\begin{enumerate}
 \item Note that $\mathcal{E}$ defined in \eqref{eq:operatorE} is the linearized Einstein operator plus a cosmological term. This operator is formally self adjoint,
 \begin{align} \label{eq:Eselfadjointness}
 \mathcal{E}^\dagger = \mathcal{E}.
 \end{align}
 \item  Acting on a linearized metric $h_{ab}$, the operator $\mathcal{T}$ defined in \eqref{eq:operatorT} yields the 0-component of the linearized anti-self dual Weyl curvature \eqref{linearizedWeyl},
\begin{align}
\mathcal{T}(h_{ab}) ={}& o^{A}o^{B}o^{C}o^{D} \dot\Psi_{ABCD} =  \dot\Psi_0.
\end{align}
 \item For any $h_{ab}$ such that $\mathcal{E}(h)_{ab}=0$, a decoupled wave-like (Teukolsky) equation is given by
 \begin{align} \label{eq:Psi0TME}
  \mathcal{O}\mathcal{T}(h_{ab}) = \mathcal{O}(\dot\Psi_0) = 0.
 \end{align}
\end{enumerate}
\end{remark}

\begin{theorem}[Wald \cite{Wald}]
On Einstein spacetimes of Petrov type II with repeated principal spinor $o^A$, the operators of definition~\ref{def:AdjOp} satisfy the identity
\begin{equation}\label{SEOT}
 \mathcal{S}\mathcal{E}=\mathcal{O}\mathcal{T}.
\end{equation}
Thus, in particular, if $\chi$ is a complex scalar field of GHP weight $\{-4,0\}$ solving
\begin{align} \label{eq:DebyeEquation1}
\mathcal{O}^{\dag} \chi=0,
\end{align} 
then the complex metric $h_{ab} = \mathcal{S}^{\dag}(\chi)_{ab}$ solves the linearized Einstein equation
\begin{align}
\mathcal{E}(h)_{ab} =\mathcal{E}(\mathcal{S}^{\dag}(\chi))_{ab} = 0.
\end{align}
\end{theorem}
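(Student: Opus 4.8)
The statement has two parts: the off-shell operator identity $\mathcal{S}\mathcal{E} = \mathcal{O}\mathcal{T}$, and the conclusion that $h_{ab} = \mathcal{S}^{\dagger}(\chi)_{ab}$ solves $\mathcal{E}(h)_{ab} = 0$. The second part is the formal ``Wald trick'' and I would dispatch it first. Taking the formal adjoint of the identity and using $(\mathcal{A}\mathcal{B})^{\dagger} = \mathcal{B}^{\dagger}\mathcal{A}^{\dagger}$ gives $\mathcal{E}^{\dagger}\mathcal{S}^{\dagger} = \mathcal{T}^{\dagger}\mathcal{O}^{\dagger}$, and substituting the self-adjointness \eqref{eq:Eselfadjointness} yields $\mathcal{E}\mathcal{S}^{\dagger} = \mathcal{T}^{\dagger}\mathcal{O}^{\dagger}$. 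Hence any $\chi$ of GHP weight $\{-4,0\}$ with $\mathcal{O}^{\dagger}\chi = 0$ produces $\mathcal{E}(\mathcal{S}^{\dagger}(\chi)) = \mathcal{T}^{\dagger}(\mathcal{O}^{\dagger}\chi) = 0$. The adjoints are taken with respect to the natural bilinear pairing (integration by parts), under which $\mathcal{S}^{\dagger}$ sends a weight $\{-4,0\}$ scalar to a symmetric tensor and the weights match throughout; the self-adjointness of $\mathcal{E}$ is the standard symmetry of the Hessian of the Einstein--Hilbert action.

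The real content is thus the operator identity, which must hold when applied to an \emph{arbitrary} symmetric tensor $x_{ab}$, not merely to solutions. I would establish it by retracing the Teukolsky decoupling at the level of operators. Since, by the remark following Definition~\ref{def:AdjOp}, $\mathcal{T}(x_{ab})$ equals the $o^{A}o^{B}o^{C}o^{D}$-contraction of the linearized Weyl formula \eqref{linearizedWeyl} for every $x_{ab}$, the quantity $\mathcal{O}\mathcal{T}(x_{ab})$ is $\mathcal{O}$ applied to (the operator giving) $\dot\Psi_{0}$. The plan is to compute this directly and to show, using only GHP commutators and the background field equations, that it reorganizes into $\mathcal{S}$ acting on the tetrad components of the linearized Einstein tensor $\mathcal{E}(x)_{ab}$. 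Concretely, one eliminates $\dot\Psi_{1}$ between the two relevant linearized Bianchi identities by applying a first-order GHP operator to each and subtracting; this reproduces the second-order operator $\mathcal{O}$ acting on $\dot\Psi_{0}$ on the left, while the right-hand sides, which off shell are the linearized Ricci (equivalently Einstein) sources rather than zero, accumulate the prefactors that make up $\mathcal{S}$ in \eqref{eq:operatorS}.

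Two uses of the background geometry are essential to the decoupling. First, the cross term in $\dot\Psi_{1}$ cancels precisely because the commutator of the two GHP operators is proportional to $\Psi_{0}$ and $\Psi_{1}$, which vanish exactly when $o^{A}$ is a repeated principal spinor; by Goldberg--Sachs this is the shear-free geodesic condition $\kappa = \sigma = 0$, which aligns the remaining commutator terms. Second, the Einstein condition $\Phi_{ABC'D'} = 0$ with constant $\Lambda$ eliminates the trace-free Ricci contributions and fixes the $\Lambda$-dependent terms shared by $\mathcal{O}$ and $\mathcal{E}$ in \eqref{eq:operatorO}, \eqref{eq:operatorE}, so that the source terms collapse onto $\mathcal{E}(x)$ exactly; the background GHP Bianchi identities are used to trade derivatives of $\Psi_{2}, \Psi_{3}$ for spin-coefficient combinations where they appear.

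The main obstacle is not the conceptual route but the off-shell bookkeeping. The operator $\mathcal{O}\mathcal{T}(x)$ depends only on the three components $x_{ll}, x_{lm}, x_{mm}$, whereas $\mathcal{S}\mathcal{E}(x)$ a priori involves every component of $x_{ab}$ through $\mathcal{E}$; the nontrivial point is that all dependence on the remaining components ($x_{nn}, x_{ln}, x_{mn}, x_{m\bar{m}}, \dots$) must cancel once the GHP commutators and the type II and Einstein conditions are imposed. Verifying these cancellations and matching the two fourth-order GHP operators term by term is a lengthy but mechanical computation, which I would perform and independently cross-check with the symbolic tools (Spinframes/xAct) already used in the paper. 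With $\mathcal{S}\mathcal{E} = \mathcal{O}\mathcal{T}$ in hand, the solution-generating statement follows from the adjoint argument of the first paragraph.
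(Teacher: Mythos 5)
Your proposal is correct and takes essentially the same route as the paper: the paper itself gives no in-text proof of $\mathcal{S}\mathcal{E}=\mathcal{O}\mathcal{T}$, attributing it to Wald \cite{Wald} with the explicit type~II form of $\mathcal{S}$ from \cite{2018:Prabhu:Wald}, and the solution-generating conclusion is exactly your adjoint computation $\mathcal{E}\mathcal{S}^{\dagger}=\mathcal{T}^{\dagger}\mathcal{O}^{\dagger}$ via the self-adjointness \eqref{eq:Eselfadjointness}. Your sketch of the identity itself---Teukolsky-style elimination of $\dot\Psi_{1}$ from the linearized Bianchi identities kept off shell, using $\kappa=\sigma=\Psi_{0}=\Psi_{1}=0$ and the Einstein condition, with the remaining component cancellations verified mechanically (symbolically)---is precisely the content of the cited derivation, so nothing essential is missing.
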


In \cite{Wald} the general idea was outlined and applied for Petrov type D, see \cite{2018:Prabhu:Wald}\footnote{It seems that there is a factor 4 missing in the operator identity in that reference.} where the explicit form of $\mathcal{S}$ is given on Petrov type II. The scalar field $\chi$ is called Debye potential and \eqref{eq:DebyeEquation1} is the Debye equation.

\begin{lemma}
Let $\chi$ be a complex scalar field of GHP weight $\{-4,0\}$.
\begin{itemize}
 \item The operators adjoint to $\mathcal{S}$ and $\mathcal{O}$ are given by
\begin{subequations} \label{eq:SOadj}
\begin{align}
\mathcal{S}^\dagger(\chi)_{ab} ={}&  \left( l_{(a}m_{b)}(\tho - \rho + \bar\rho) -l_a l_b (\edt-\tau) \right)(\edt+3\tau)\chi \nonumber \\
{}& + \left(l_{(a}m_{b)}(\edt-\tau+\bar\tau') - l_a l_b \bar\sigma' - m_a m_b (\tho - \rho)  \right)(\tho + 3\rho)\chi, \label{eq:Sadj}\\
\mathcal{O}^{\dag}\chi ={}& 2\left(\left( \tho' - \bar{\rho}' \right) \left(\tho + 3 \rho \right) - \left(\edt' - \bar{\tau}\right) \left(\edt + 3 \tau \right)  - 3 \Psi_2 \right) \chi. \label{eq:Oadj}
\end{align}
\end{subequations}
\item If $\chi$ solves the Debye equation \eqref{eq:DebyeEquation1}, then the complex metric $h_{ab} = \mathcal{S}^{\dag}(\chi)_{ab}$ solving the linearized Einstein equation is given by
\begin{align}\label{eq:MetricFromDeybe}
h_{ab} = {}& h_{nn} l_a l_b - 2 h_{n\bar{m}} l_{(a} m_{b)} + h_{\bar{m}\bar{m}} m_a m_b, 
\end{align}
with components
\begin{subequations} \label{eq:MetricFromDeybeComps}
\begin{align}
h_{nn}={}& -(\edt+2\tau) \edt \chi - \bar{\sigma}' \tho \chi,\\
h_{n\bar{m}}={}& -(\tho +\rho ) \edt \chi - (\tau + \bar{\tau}')\tho\chi,\\
h_{\bar{m}\bar{m}}={}& -(\tho+2\rho)\tho\chi.
\end{align}
\end{subequations} 
\end{itemize}
\end{lemma}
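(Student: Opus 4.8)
The plan is to base both parts on the formal-adjoint calculus for the GHP operators. With respect to the pairing given by integration against the spacetime volume form, the weighted scalar operators satisfy
\begin{align*}
\tho^\dagger &= -(\tho - \rho - \bar\rho), & \tho'^\dagger &= -(\tho' - \rho' - \bar\rho'),\\
\edt^\dagger &= -(\edt - \tau - \bar\tau'), & \edt'^\dagger &= -(\edt' - \tau' - \bar\tau),
\end{align*}
the correction terms being the divergences of the tetrad legs; multiplication operators are self-adjoint, and $(\mathcal{A}\mathcal{B})^\dagger = \mathcal{B}^\dagger\mathcal{A}^\dagger$. First I would check these rules against the stated $\mathcal{O}^\dagger$: reading $\mathcal{O}$ from \eqref{eq:operatorO} as a sum of two second-order products and a multiplication, I reverse each product and adjoint the factors, so that $(\tho-4\rho-\bar\rho)(\tho'-\rho')$ yields $(\tho'-\bar\rho')(\tho+3\rho)$ and $(\edt-4\tau-\bar\tau')(\edt'-\tau')$ yields $(\edt'-\bar\tau)(\edt+3\tau)$, reproducing \eqref{eq:Oadj}.

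For $\mathcal{S}^\dagger$ I would exploit that $\mathcal{S}$ maps a symmetric tensor to a scalar through its tetrad components, writing $\mathcal{S}(x_{ab}) = A(x_{ll}) + B(x_{lm}) + C(x_{mm})$ with scalar operators $A, B, C$ read off from \eqref{eq:operatorS}. Pairing $\mathcal{S}(x_{ab})\chi$ and integrating by parts moves every GHP operator onto $\chi$; since $x_{ll} = x^{ab}l_al_b$, $x_{mm} = x^{ab}m_am_b$ and $x_{lm} = x^{ab}l_{(a}m_{b)}$ (using the symmetry of $x_{ab}$), the result reassembles as $\mathcal{S}^\dagger(\chi)_{ab} = l_al_b\,A^\dagger\chi + l_{(a}m_{b)}\,B^\dagger\chi + m_am_b\,C^\dagger\chi$, with no spurious combinatorial factors. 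Applying the four adjoint rules factor by factor to $A, B, C$ then gives \eqref{eq:Sadj}; for instance the $x_{mm}$-operator $C = -(\tho-4\rho-\bar\rho)(\tho-\bar\rho)$ adjoints to $-(\tho-\rho)(\tho+3\rho)$, which is exactly the coefficient of $m_am_b$ in \eqref{eq:Sadj}.

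For the component formulas I would read off the coefficients of $l_al_b$, $l_{(a}m_{b)}$ and $m_am_b$ in $\mathcal{S}^\dagger(\chi)$, which are respectively $h_{nn}$, $-2h_{n\bar m}$ and $h_{\bar m\bar m}$ in the parametrization \eqref{eq:MetricFromDeybe}, and reduce each to the claimed form \eqref{eq:MetricFromDeybeComps}. This is where the background hypotheses enter: the shear-free ray condition \eqref{SFR}, i.e. $\kappa=\sigma=0$, the type II condition \eqref{typeII}, i.e. $\Psi_0=\Psi_1=0$, and the Einstein condition $\Phi_{ab}=0$. Expanding the nested operators and using the GHP field equations one finds $\tho\rho = \rho^2$, which collapses $-(\tho-\rho)(\tho+3\rho)\chi$ to $-(\tho+2\rho)\tho\chi$; an analogous field-equation identity relating $\edt\tau$, $\tau^2$ and $\bar\sigma'\rho$ handles the $l_al_b$ coefficient; and the mixed $l_{(a}m_{b)}$ coefficient is reduced by first commuting the two second-order derivatives through $[\tho,\edt]$ (with $\kappa=\sigma=0$) and then clearing the residual zeroth-order terms with the field equations. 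I expect this reduction to be purely geometric, so that the stated forms hold for every $\chi$, the Debye equation \eqref{eq:DebyeEquation1} entering only as the hypothesis that makes $h_{ab}$ an actual solution of the linearized Einstein equations.

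The conceptual content here is light; the genuine difficulty is bookkeeping. The main obstacle is controlling signs and the tetrad-component reassembly in the adjoint of $\mathcal{S}$, together with assembling precisely the right combination of GHP commutators, field equations and Bianchi identities (specialized to \eqref{SFR}, \eqref{typeII} and the Einstein condition) in the component reduction. This is a long but mechanical computation, best carried out and cross-checked with computer algebra such as Spinframes.
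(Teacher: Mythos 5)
Your proposal is correct and follows essentially the same route as the paper's own proof: the formal GHP adjoint rules $\tho^\dagger=-\tho+\rho+\bar\rho$, $\edt^\dagger=-\edt+\tau+\bar\tau'$ (and primed versions) together with $(\mathcal{A}\mathcal{B})^\dagger=\mathcal{B}^\dagger\mathcal{A}^\dagger$ yield \eqref{eq:SOadj}, and the component reduction uses exactly the commutator $[\edt,\tho]\chi=(\bar\tau'\tho-\bar\rho\edt)\chi$ and the Ricci identities $\tho\rho=\rho^2$, $\edt\tau=\tau^2-\bar\sigma'\rho$, $\tho\tau=\rho(\tau-\bar\tau')$, $\edt\rho=\tau(\rho-\bar\rho)$ valid on type II Einstein backgrounds with repeated principal spinor $o^A$. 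Your remark that the Debye equation \eqref{eq:DebyeEquation1} enters only to make $h_{ab}$ an actual solution, while \eqref{eq:MetricFromDeybeComps} holds identically for any $\chi$, is likewise consistent with the paper.
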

\begin{proof}
The formal adjoints of the GHP operators are given by
\begin{align}
\tho^\dagger ={}& - \tho + \rho + \bar\rho, &&&
\tho'^\dagger ={}& - \tho' + \rho' + \bar\rho', &&&
\edt^\dagger ={}& - \edt + \tau + \bar\tau', &&&
\edt'^\dagger ={}& - \edt' + \tau' + \bar\tau,
\end{align} 
see e.g. \cite{aksteiner:thesis} for details. Using this, together with the rule $(AB)^\dagger = B^\dagger A^\dagger$ for compositions, to compute the adjoint of \eqref{eq:operatorS} and \eqref{eq:operatorO} leads to \eqref{eq:SOadj}.

The equations \eqref{eq:MetricFromDeybeComps} are the simplified components of \eqref{eq:Sadj}, using the commutator 
\begin{align}
[\edt,\tho]\chi ={}& (\bar\tau'\tho -\bar\rho\edt)\chi, 
\end{align}
and Ricci identities,
\begin{align}
\tho \rho ={}& \rho^2, &&&
\edt \tau ={}& \tau^2 - \bar\sigma'\rho, &&&
\tho\tau ={}& \rho(\tau-\bar\tau'), &&&
\edt\rho ={}& \tau(\rho-\bar\rho),
\end{align}
valid on type II Einstein spacetimes with repeated principal spinor $o^A$.
\end{proof}

Before proceeding to the proof of Theorem \ref{thm}, we need to introduce some additional identities and operators. 
First, similarly to Definition \ref{Definition-projectedC}, we introduce: 
\begin{definition}
 Let $(o^A,\iota^A)$ be a spin dyad, $\Theta_{AA'}$ the associated GHP connection, and $f_{AA'}$ the 1-form \eqref{f}. We define
 \begin{align}
  \tilde{\Theta}_{A'}:=&o^{A}\Theta_{AA'}, & \Theta_{A'}:=&\iota^{A}\Theta_{AA'}, \label{projectedTheta} \\
  \tilde{f}_{A'}:=&o^{A}f_{AA'}, & f_{A'}:=&\iota^{A}f_{AA'}, \label{projectedf} \\
  && \sigma'_{A'}:=&\iota^A\iota^B\c_{AA'}\iota_B. \label{sigma'A'}
 \end{align}
\end{definition}

\begin{proposition}
If $o^A$ is SFR, then
\begin{align} \label{eq:FirstStructureEq}
  \tilde{\T}_{A'}o^B = 0,  \qquad
  \T_{A'}o^B = -\tilde{f}_{A'}\iota^B, \qquad
  \tilde{\T}_{A'}\iota^B = -f_{A'}o^B, \qquad
  \T_{A'}\iota^B = \sigma'_{A'} o^B.
\end{align}
\end{proposition}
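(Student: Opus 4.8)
The plan is to prove the four structure equations in \eqref{eq:FirstStructureEq} by unwinding the definitions \eqref{projectedTheta}--\eqref{sigma'A'} and exploiting the GHP action on the spin dyad. Recall that the GHP connection $\Theta_{AA'}$ acting on an unweighted spinor is just $\nabla_{AA'}$ corrected by the connection 1-forms; concretely, for a spinor carrying GHP weight the action differs from $\nabla$ by a term built from $B_a$ (and its conjugate), but the essential input here is how $\Theta$ differentiates the frame spinors $o^B$ and $\iota^B$. The first identity $\tilde{\Theta}_{A'}o^B = o^A\Theta_{AA'}o^B = 0$ should follow directly from the SFR condition \eqref{SFR}: writing out $o^A\nabla_{AA'}o^B$ and using $\kappa = \sigma = 0$, the only potentially surviving term is the one proportional to $o^A o^B \nabla_{AA'}o_B$, which vanishes by hypothesis. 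I would first establish this base case carefully, since the remaining three identities are structurally similar but involve $\iota^B$ and the non-vanishing coefficients $\tilde{f}_{A'}$, $f_{A'}$, $\sigma'_{A'}$.

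First I would expand each projection $o^A\Theta_{AA'}$ and $\iota^A\Theta_{AA'}$ acting on $o^B$ and $\iota^B$ in terms of the Levi-Civita derivative and the spin coefficients, using the standard decomposition of $\nabla_{AA'}o_B$ and $\nabla_{AA'}\iota_B$ along the dyad basis $\{o_A o_B,\, o_{(A}\iota_{B)},\, \iota_A \iota_B\}$. The key observation is that under the SFR assumption the coefficients $\kappa$ and $\sigma$ drop out, which kills exactly the terms that would otherwise spoil $\tilde{\Theta}_{A'}o^B = 0$. For the mixed identities $\Theta_{A'}o^B = -\tilde{f}_{A'}\iota^B$ and $\tilde{\Theta}_{A'}\iota^B = -f_{A'}o^B$, I expect the projections $f_{A'} = \iota^A f_{AA'}$ and $\tilde{f}_{A'} = o^A f_{AA'}$ of the 1-form \eqref{f} to emerge naturally as the coefficients multiplying the ``opposite'' frame spinor, after collecting the surviving spin coefficients $\rho, \rho', \tau, \tau'$ that constitute $f_a$. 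The last identity $\Theta_{A'}\iota^B = \sigma'_{A'}o^B$ is essentially the definition \eqref{sigma'A'} of $\sigma'_{A'}$ itself, so once I have reduced $\iota^A\Theta_{AA'}\iota^B$ to its component along $o^B$, the remaining coefficient is precisely $\iota^A\iota^B\nabla_{AA'}\iota_B$.

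The main obstacle will be bookkeeping: making sure the GHP weight corrections are applied consistently. Since $o^A$ and $\iota^A$ carry definite GHP weights, the connection $\Theta_{AA'}$ includes weight-dependent terms proportional to $B_a$ (and its primed analog), and I must verify that these either cancel or recombine into the $f$-projections as claimed. Concretely, I would use the relation $o^A f_{AA'} = -o^A B_{AA'}$ noted earlier (in the first item of Remark~\ref{remark-tildeC}) to handle the weight term on $o^B$, and the analogous bookkeeping for $\iota^B$. A useful consistency check is that the four right-hand sides respect the GHP and conformal weights on both sides; for instance, $\tilde{\Theta}_{A'}\iota^B$ must land in the span of $o^B$ because the weights force it, which matches $-f_{A'}o^B$. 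Once the SFR-induced cancellations and the weight recombinations are tracked correctly, each identity reduces to reading off a single spin-coefficient combination, so the real work is purely the careful expansion rather than any conceptual difficulty.
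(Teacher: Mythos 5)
Your strategy --- expanding the projections of the GHP connection on the dyad in spin coefficients, letting the SFR condition $\kappa=\sigma=0$ remove the $\iota_B$-component in the first identity, reading off $\tilde f_{A'}$, $f_{A'}$ from the surviving weighted coefficients $\rho,\rho',\tau,\tau'$ of \eqref{f}, and recognizing the fourth identity as essentially the definition \eqref{sigma'A'} --- is exactly the direct computation that the paper leaves unwritten for this proposition. However, your bookkeeping plan contains one genuine error that would derail a literal execution: the weight correction of the GHP connection $\Theta_a$ is built from $\omega_a$ of \eqref{omega}, i.e.\ $\Theta_a=\nabla_a+p\,\omega_a+q\,\bar\omega_a$ on $\{p,q\}$-weighted quantities, and \emph{not} from $B_a$ of \eqref{formB}. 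The 1-forms $B_a$ and $f_a$ enter only the conformal-GHP connection $\mathcal{C}_a$, via the combination $wf_a+p(\omega_a+B_a)$ in \eqref{C}, and the relation $o^Af_{AA'}=-o^AB_{AA'}$ you quote from Remark~\ref{remark-tildeC} is used there for $\mathcal{C}_{AA'}$ at weights $w=p=-1$; it plays no role in the present proposition. Concretely, with the conventions of \cite{PR1} ($o_A\iota^A=1$, so $o^A\iota_A=-1$),
\begin{equation*}
 o^A\nabla_{AA'}o_B=\left(-\epsilon\,\bar\iota_{A'}+\beta\,\bar{o}_{A'}\right)o_B+\left(\kappa\,\bar\iota_{A'}-\sigma\,\bar{o}_{A'}\right)\iota_B,
 \qquad
 o^A\omega_{AA'}=\epsilon\,\bar\iota_{A'}-\beta\,\bar{o}_{A'},
\end{equation*}
so the weight term ($p=1$) $o^A\omega_{AA'}\,o_B$ cancels the $o_B$-component exactly and SFR kills the remainder, giving $\tilde{\Theta}_{A'}o_B=0$. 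Had you corrected with $B_a$ instead, the $\epsilon,\beta$ terms would survive and a spurious term proportional to $\tilde f_{A'}\,o_B$ would appear, contradicting the first identity.

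Relatedly, your first paragraph's claim that in $o^A\nabla_{AA'}o_B$ ``the only potentially surviving term'' is the one killed by \eqref{SFR} is not right at the level of the bare $\nabla$: as displayed above, the $o_B$-component (the $\epsilon$- and $\beta$-type terms) survives and is removed only by the $\omega_a$ correction. Once $\omega_a$ is put in place, the rest of your plan closes as described: one finds $\Theta_{A'}o_B=(\rho\,\bar\iota_{A'}-\tau\,\bar{o}_{A'})\iota_B=-\tilde f_{A'}\iota_B$ and $\tilde{\Theta}_{A'}\iota_B=(\tau'\,\bar\iota_{A'}-\rho'\,\bar{o}_{A'})o_B=-f_{A'}o_B$ upon comparing with the projections of \eqref{f} (note that these two identities, and the fourth, actually hold without assuming SFR; only the first uses it), while for the fourth identity the $\iota_B$-component of $\iota^A\nabla_{AA'}\iota_B$ is again cancelled by the $\omega_a$ term (now with $p=-1$) and the coefficient of $o_B$ is precisely $\sigma'_{A'}=\iota^A\iota^B\nabla_{AA'}\iota_B$ as defined in \eqref{sigma'A'}, the weight term dropping out of that contraction against $\iota^B\iota_B=0$, as you anticipated.
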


\begin{remark}
If $o^A$ is SFR and a repeated principal spinor, using \eqref{f2} we note that $\tilde{f}_{A'}$ can be written as
\begin{equation}\label{tildef}
 \tilde{f}_{A'}=\tilde{\Theta}_{A'}\log\phi.
\end{equation}
Since in this section we are interested in the Einstein case, we can use \eqref{phieta} and replace $\phi=\Psi^{1/3}_2$.
\end{remark}

\begin{proposition}
Let $(\mathcal{M}, g_{ab})$ be a type II Einstein spacetime with repeated principal spinor $o^A$. On spinors of GHP weight $\{p,0\}$ with an arbitrary number of primed indices the operators in \eqref{projectedTheta} satisfy the following commutator relations,
\begin{subequations} \label{eq:ProjCommutators}
\begin{align}
  [\tT_{A'},\tT_{B'}] ={}& 0, \label{commutator-tildeTheta} \\
 [\tT_{A'},\T_{B'}] ={}& \e_{A'B'}\frac{p}{2}(2\tT_{C'}f^{C'}-\tilde f_{C'}f^{C'} +  6\Lambda)+\Box^{\T}_{A'B'}-f_{A'}\tT_{B'}+\tilde{f}_{B'}\T_{A'}, \label{commutator-tildeThetaTheta}\\
 [\T_{A'},\T_{B'}] ={}& \e_{A'B'}\left(\frac{p}{2}(\Theta_{C'}f^{C'} - \tT_{C'}\sigma'^{C'} - \tilde f_{C'}\sigma'^{C'})-\sigma'^{C'}\tT_{C'}\right). \label{commutator-Theta}
\end{align}
\end{subequations}
\end{proposition}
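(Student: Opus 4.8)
The plan is to reduce each projected commutator to a contraction of the full GHP commutator $[\T_{AA'},\T_{BB'}]$ with a dyad pair, collecting along the way the lower-order terms that the structure equations \eqref{eq:FirstStructureEq} generate automatically. Concretely, I would expand, e.g., $\tT_{A'}\T_{B'}\psi = o^A\T_{AA'}(\iota^B\T_{BB'}\psi)$ by the Leibniz rule, move $o^A$ and $\iota^A$ through the connections using \eqref{eq:FirstStructureEq}, and antisymmetrize in $A',B'$. Because $\tT_{A'}\iota^B=-f_{A'}o^B$ and $\T_{B'}o^A=-\tilde f_{B'}\iota^A$, this immediately produces the terms $-f_{A'}\tT_{B'}+\tilde f_{B'}\T_{A'}$ in \eqref{commutator-tildeThetaTheta}; likewise $\T_{A'}\iota^B=\sigma'_{A'}o^B$ produces, after antisymmetrization and the identity $2X_{[A'}Y_{B']}=-\e_{A'B'}X^{C'}Y_{C'}$, the term $-\e_{A'B'}\sigma'^{C'}\tT_{C'}$ in \eqref{commutator-Theta}. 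For the first commutator the relation $\tT_{A'}o^B=0$ leaves no lower-order terms at all, so $[\tT_{A'},\tT_{B'}]\psi=o^Ao^B[\T_{AA'},\T_{BB'}]\psi$.

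In each case the residual ``principal'' piece is $o^{A}o^{B}$, $o^{A}\iota^{B}$, or $\iota^{A}\iota^{B}$ contracted into $[\T_{AA'},\T_{BB'}]$. Here I would insert the standard curvature decomposition $[\T_{AA'},\T_{BB'}]=\e_{AB}\Box^{\T}_{A'B'}+\e_{A'B'}\Box^{\T}_{AB}$ of the GHP connection. Contracting with a symmetric pair kills the $\e_{AB}$ term, so $o^Ao^B$ and $\iota^A\iota^B$ retain only $\e_{A'B'}\Box^{\T}_{AB}$, whereas $o^A\iota^B$ retains both pieces and, since $o^A\iota^B\e_{AB}\neq0$, this is precisely where the primed curvature operator $\Box^{\T}_{A'B'}$ in \eqref{commutator-tildeThetaTheta} originates.

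It then remains to evaluate the scalar weight-curvatures $o^Ao^B\Box^{\T}_{AB}$, $o^A\iota^B\Box^{\T}_{AB}$, $\iota^A\iota^B\Box^{\T}_{AB}$ on a type II Einstein background acting on weight $\{p,0\}$ primed-index spinors. Since the background is Einstein, the Levi-Civita part of $\Box_{AB}$ annihilates primed indices (the obstruction is $\Phi_{ABC'D'}=0$), so only the weight-$p$ curvature survives and the result is $p$ times a curvature scalar. For the first commutator I would argue by a bootstrap: evaluating on $o^{C}$, the index (Weyl) contribution is $o^{A}o^{B}\Psi_{ABCD}o^{D}=0$ by the type II condition \eqref{typeII}, and the residual $\L$-term dies against $o^{A}o_{A}=0$; since $\tT_{A'}o^{C}=0$ already forces $o^{A}o^{B}\Box^{\T}_{AB}o^{C}=0$, the weight-$p$ scalar itself must vanish, giving \eqref{commutator-tildeTheta} for every $\{p,0\}$ spinor. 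For the mixed and $\iota\iota$ contractions the surviving Weyl scalars $\Psi_2,\Psi_3,\Psi_4$ and $\L$ are nonzero, and the task is to rewrite them, via \eqref{tildef} and the GHP Ricci and Bianchi identities already used above, as the combinations $\tfrac{p}{2}(2\tT_{C'}f^{C'}-\tilde f_{C'}f^{C'}+6\L)$ and $\tfrac{p}{2}(\T_{C'}f^{C'}-\tT_{C'}\sigma'^{C'}-\tilde f_{C'}\sigma'^{C'})$ respectively. I expect this last identification to be the main obstacle: it requires expressing the curvature of the boost--spin weight connection explicitly in terms of the projected connection forms $\tilde f_{A'},f_{A'},\sigma'_{A'}$ and their $\tT$-derivatives, and checking that all signs, weight factors, and $\L$-terms assemble exactly as stated. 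The combinatorial bookkeeping of these lower-order and curvature terms, rather than any conceptual difficulty, is the crux.
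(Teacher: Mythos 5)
Your proposal is correct and follows essentially the same route as the paper: the paper likewise proves \eqref{commutator-tildeTheta} by expanding the projected derivatives with the Leibniz rule and the structure equations \eqref{eq:FirstStructureEq}, reducing to $o^Ao^B[\Theta_{AA'},\Theta_{BB'}]$, splitting off the Levi-Civita curvature $\Box_{AB}$ (annihilated on primed indices by the Einstein condition) and the GHP weight curvature $p(-\Psi_1+\sigma\tau'-\kappa\rho')$ (annihilated by the type II/SFR conditions, of which your bootstrap evaluation on $o^C$ is a harmless variant), and then declares that \eqref{commutator-tildeThetaTheta} and \eqref{commutator-Theta} follow analogously. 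The bookkeeping you defer --- identifying the weighted curvature scalars with $\tfrac{p}{2}(2\tT_{C'}f^{C'}-\tilde f_{C'}f^{C'}+6\Lambda)$ and $\tfrac{p}{2}(\T_{C'}f^{C'}-\tT_{C'}\sigma'^{C'}-\tilde f_{C'}\sigma'^{C'})$ --- is precisely the part the paper also leaves implicit (verified by direct computation), so your proposal matches the published proof in both method and level of detail.
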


\begin{proof}
We prove \eqref{commutator-tildeTheta} explicitly, the other equations follow analogously. In a general spacetime, using definition \eqref{projectedTheta} and 
acting on an arbitrary spinor $\varphi^{E...E'...}_{F...F'...}$ with GHP weight $\{p,q\}$ we get
\begin{equation}
 [\tilde{\Theta}_{A'},\tilde{\Theta}_{B'}]\varphi^{E...E'...}_{F...F'...}
 =\bar\epsilon_{A'B'}(o^Ao^B\nabla_{A}{}^{C'}o_{B})\iota^{C}\Theta_{CC'}\varphi^{E...E'...}_{F...F'...}
 +o^{A}o^{B}[\Theta_{AA'},\Theta_{BB'}]\varphi^{E...E'...}_{F...F'...}. \label{commutator-tildeTheta-general}
\end{equation}
For the second term in the right, using GHP notation we find
\begin{align}
\nonumber o^{A}o^{B}[\Theta_{AA'},\Theta_{BB'}]\varphi^{E...E'...}_{F...F'...}
 ={}& \bar\epsilon_{A'B'} o^{A}o^{B}\Theta_{AC'}\Theta_{B}{}^{C'}\varphi^{E...E'...}_{F...F'...}  \\
\nonumber ={}& \bar\epsilon_{A'B'} o^{A}o^{B}\Box_{AB}\varphi^{E...E'...}_{F...F'...} \\
 & +\left( p(-\Psi_1+\sigma\tau'-\kappa\rho')+q(-\Phi_{01}+\bar\tau'\bar\rho-\bar\kappa\bar\sigma') \right)\varphi^{E...E'...}_{F...F'...} 
 \label{oocommutator}
\end{align}
where $\Box_{AB}=\nabla_{A'(A}\nabla_{B)}{}^{A'}$ is the usual spinor curvature operator \cite[Eq. (4.9.2)]{PR1}.
Suppose now that the spacetime is Einstein and of Petrov type II. 
Restricting the identity \eqref{oocommutator} to {\em primed} spinor fields $\varphi^{E'...}_{F'...}$ with GHP weight $\{p,0\}$, 
we get that each of the terms in the right hand side vanishes: $\Box_{AB}\varphi^{E'...}_{F'...}=0$ because this only involves 
contractions with $\Phi_{ABA'B'}$, which vanishes because of the Einstein condition; the term with $q$ vanishes because 
we are restricting to weight $\{p,0\}$; and the term with $p$ vanishes because the vacuum type II condition implies 
$\kappa=\sigma=\Psi_1=0$. 
Furthermore, from this last condition we get $o^Ao^B\nabla_{A}{}^{C'}o_{B}=0$ (see \eqref{SFR}), therefore 
\eqref{commutator-tildeTheta-general} vanishes identically and we get the result \eqref{commutator-tildeTheta}.
\end{proof}

\begin{remark}
Since $[\tilde{\Theta}_{A'},\tilde{\Theta}_{B'}]=2\tilde{\Theta}_{[A'}\tilde{\Theta}_{B']}=-\bar{\epsilon}_{A'B'}\tilde{\Theta}^{C'}\tilde{\Theta}_{C'}$,
the result \eqref{commutator-tildeTheta} can be equivalently stated as
\begin{equation}\label{squareTildeTheta}
 \tilde{\Theta}^{A'}\tilde{\Theta}_{A'} \varphi^{E'...}_{F'...} = 0 
\end{equation}
for any primed spinor $\varphi^{E'...}_{F'...}$ (or scalar) with weight $\{p,0\}$. This identity will be useful below.
\end{remark}

\begin{proposition}
Let $\tilde{\Theta}_{A'}$ be as in \eqref{projectedTheta}, and let $\phi$ be given by \eqref{phieta}.
The linearized metric \eqref{eq:MetricFromDeybe} can be expressed as 
\begin{equation}\label{ooX}
 h_{AA'BB'}=o_Ao_B X_{A'B'},
\end{equation}
where 
\begin{align} 
 X_{A'B'}
 ={}& -\phi^{-2} \tilde{\Theta}_{(A'}\left( \phi^{2} \tilde{\Theta}_{B')}\chi\right).\label{X}
\end{align}
\end{proposition}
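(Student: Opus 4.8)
The plan is to reduce the spinor identity \eqref{ooX}--\eqref{X} to three scalar GHP identities and to check these against the explicit components \eqref{eq:MetricFromDeybeComps}. First I would observe that the form \eqref{ooX} is immediate from \eqref{eq:MetricFromDeybe}: since $l_a=o_A\bar{o}_{A'}$ and $m_a=o_A\bar\iota_{A'}$ both carry the factor $o_A$, every term of $h_{ab}$ is proportional to $o_Ao_B$, and stripping it off gives the symmetric primed spinor
\begin{equation*}
 X_{A'B'}=h_{nn}\,\bar{o}_{A'}\bar{o}_{B'}-2h_{n\bar{m}}\,\bar{o}_{(A'}\bar\iota_{B')}+h_{\bar{m}\bar{m}}\,\bar\iota_{A'}\bar\iota_{B'}.
\end{equation*}
Dually, $h_{\bar{m}\bar{m}}=\bar{o}^{A'}\bar{o}^{B'}X_{A'B'}$, $h_{n\bar{m}}=\bar{o}^{A'}\bar\iota^{B'}X_{A'B'}$ and $h_{nn}=\bar\iota^{A'}\bar\iota^{B'}X_{A'B'}$, so it suffices to contract the right-hand side of \eqref{X} in these three ways and match.

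Next I would unpack $\tilde{\Theta}_{A'}$ in GHP form. By \eqref{projectedTheta} one has $\bar{o}^{A'}\tilde{\Theta}_{A'}=\tho$ and $\bar\iota^{A'}\tilde{\Theta}_{A'}=\edt$ on weighted scalars, so $\tilde{\Theta}_{A'}\chi=(\edt\chi)\,\bar{o}_{A'}-(\tho\chi)\,\bar\iota_{A'}$, and writing $Y_{B'}:=\phi^2\tilde{\Theta}_{B'}\chi$ one gets $\bar{o}^{B'}Y_{B'}=\phi^2\tho\chi$, $\bar\iota^{B'}Y_{B'}=\phi^2\edt\chi$. To differentiate $Y_{B'}$ once more I need the action of $\tilde{\Theta}_{A'}$ on the \emph{primed} dyad, which are the complex conjugates of the standard GHP dyad derivatives $\tho o^A=-\kappa\iota^A$, $\edt o^A=-\sigma\iota^A$, $\edt'o^A=-\rho\iota^A$, $\tho'o^A=-\tau\iota^A$ and their primes. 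Under the SFR condition \eqref{SFR}, i.e.\ $\kappa=\sigma=0$, the ones I require reduce to
\begin{equation*}
 \tho\,\bar{o}^{A'}=0,\quad \edt\,\bar{o}^{A'}=-\bar\rho\,\bar\iota^{A'},\quad \tho\,\bar\iota^{A'}=-\bar{\tau}'\,\bar{o}^{A'},\quad \edt\,\bar\iota^{A'}=-\bar{\sigma}'\,\bar{o}^{A'}.
\end{equation*}
Finally I would record the scalar derivatives of $\phi$: from \eqref{tildef}, \eqref{phieta} and \eqref{f} one has $\tilde{f}_{A'}=o^Af_{AA'}=\tau\,\bar{o}_{A'}-\rho\,\bar\iota_{A'}$ and $\tilde{f}_{A'}=\tilde{\Theta}_{A'}\log\phi$, so comparing components yields $\tho\log\phi=\rho$ and $\edt\log\phi=\tau$, hence $\tho\phi^2=2\rho\phi^2$ and $\edt\phi^2=2\tau\phi^2$.

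With these ingredients the two pure contractions are routine. For $h_{\bar{m}\bar{m}}$, contracting \eqref{X} with $\bar{o}^{A'}\bar{o}^{B'}$ and using $\tho\bar{o}^{B'}=0$ leaves $-\phi^{-2}\tho(\phi^2\tho\chi)=-(\tho+2\rho)\tho\chi$; for $h_{nn}$, contraction with $\bar\iota^{A'}\bar\iota^{B'}$ together with $\edt\bar\iota^{B'}=-\bar{\sigma}'\bar{o}^{B'}$ gives $-\phi^{-2}\big[\edt(\phi^2\edt\chi)+\bar{\sigma}'\phi^2\tho\chi\big]=-(\edt+2\tau)\edt\chi-\bar{\sigma}'\tho\chi$. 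Both reproduce \eqref{eq:MetricFromDeybeComps} exactly. I expect the mixed component $h_{n\bar{m}}$ to be the main obstacle: the symmetrization in \eqref{X} forces me to average the $\bar{o}^{A'}\bar\iota^{B'}$ and $\bar\iota^{A'}\bar{o}^{B'}$ contractions, producing both orderings $\tho\edt\chi$ and $\edt\tho\chi$ as well as the conjugate spin-coefficient terms $\bar\rho$ and $\bar{\tau}'$ coming from $\edt\bar{o}^{B'}$ and $\tho\bar\iota^{B'}$. The delicate step is to invoke the GHP commutator $[\edt,\tho]\chi=(\bar{\tau}'\tho-\bar\rho\edt)\chi$ to rewrite $\tho\edt\chi+\edt\tho\chi=2\tho\edt\chi+\bar{\tau}'\tho\chi-\bar\rho\edt\chi$: the resulting $\bar\rho\,\edt\chi$ then cancels the one generated by $\edt\bar{o}^{B'}$, while the $\bar{\tau}'\,\tho\chi$ contributions assemble with the $\edt\phi^2$ term into the correct coefficient $\tau+\bar{\tau}'$, leaving precisely $-(\tho+\rho)\edt\chi-(\tau+\bar{\tau}')\tho\chi=h_{n\bar{m}}$. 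It is the conformal weight $w=2$ of $X_{A'B'}$, equivalently the $\phi^{\pm2}$ prefactors supplying exactly the $\rho$ and $\tau$ coefficients through $\tho\log\phi$ and $\edt\log\phi$, that makes all three contractions GHP-covariant and match \eqref{eq:MetricFromDeybeComps}.
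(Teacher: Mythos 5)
Your proposal is correct and follows essentially the same route as the paper: identify the $o_Ao_B$ structure, project \eqref{X} onto the primed dyad, and match the three components of \eqref{eq:MetricFromDeybeComps} using $\tho\log\phi=\rho$, $\edt\log\phi=\tau$ together with the GHP dyad derivatives and the commutator $[\edt,\tho]\chi=(\bar\tau'\tho-\bar\rho\edt)\chi$. The only cosmetic difference is that the paper drops the symmetrization in \eqref{X} at the outset by invoking the flatness property \eqref{commutator-tildeTheta}, whereas you retain it and absorb the antisymmetric part via the $[\edt,\tho]$ commutator in the mixed component --- these are equivalent statements.
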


\begin{proof}
This can be shown by direct comparison with \eqref{eq:MetricFromDeybe}-\eqref{eq:MetricFromDeybeComps}.
First, from \eqref{ooX} we see that the only non-trivial components are 
\begin{subequations}\label{componentsX}
\begin{align}
 X_{1'1'} ={}&- \iota^{A}\bar{\iota}^{A'}\iota^{B}\bar{\iota}^{B'}h_{AA'BB'}=-n^{a}n^{b}h_{ab}= -h_{nn}, \\
 X_{0'1'} ={}& -\iota^{A}\bar{\iota}^{A'}\iota^{B}\bar{o}^{B'}h_{AA'BB'}=-n^{a}\bar{m}^{b}h_{ab}=- h_{n\bar{m}}, \\
 X_{0'0'} ={}& -\iota^{A}\bar{o}^{A'}\iota^{B}\bar{o}^{B'}h_{AA'BB'}=-\bar{m}^{a}\bar{m}^{b}h_{ab}=- h_{\bar{m}\bar{m}}, 
\end{align}
\end{subequations}
so the general structure \eqref{eq:MetricFromDeybe} is recovered.
Now we check that these components coincide with \eqref{eq:MetricFromDeybeComps}. 
To do this, we write $X_{A'B'}$ as
\begin{equation}
 X_{A'B'}=-\tilde{\Theta}_{A'}\tilde{\Theta}_{B'}\chi - 2(\tilde{\Theta}_{(A'}\log\phi)(\tilde{\Theta}_{B')}\chi).
\end{equation}
Next, we project this expression over the primed spin dyad $(\bar{o}^{A'},\bar{\iota}^{A'})$ to compute the 
components \eqref{componentsX}.
We will need the following identities:
\begin{equation}
 \tho\log\phi=\rho, \qquad \edt\log\phi=\tau,
\end{equation}
which follow from the fact that $\phi=\Psi^{1/3}_2$ (eq. \eqref{phieta}) together with the Bianchi identities for a Petrov type II spacetime. Then we have, for example,
\begin{align}
\nonumber X_{1'1'} ={}& -\bar\iota^{A'}\bar\iota^{B'}\tilde{\Theta}_{A'}\tilde{\Theta}_{B'}\chi - 2\edt\log\phi\edt\chi \\
\nonumber ={}& -\edt\edt\chi+(\edt\bar\iota^{B'})\tilde{\Theta}_{B'}\chi - 2\tau\edt\chi \\
 ={}& -(\edt+2\tau)\edt\chi - \bar\sigma'\tho\chi
\end{align}
where we used $\edt\bar\iota^{B'}=-\bar\sigma'\bar{o}^{B'}$, see \cite[Eq. (4.12.28)]{PR1}.
The other components can be computed along the same line.
\end{proof}

\begin{proposition}
Let $(\mathcal{M}, g_{ab})$ be a type II Einstein spacetime with repeated principal spinor $o^A$. The Ricci, Bianchi and commutator identities yield
\begin{subequations} \label{eq:ThThtSimpSet}
\begin{align}
\tilde\Theta_{A'} \tilde f_{B'} ={}& \tilde f_{A'} \tilde f_{B'}, \\
\tilde\Theta^{A'}f_{A'} ={}& \Psi_2 + 2 \Lambda, \\
\Theta^{A'}\tilde f_{A'} ={}& -\Psi_2 - 2 \Lambda, \\
\tilde\Theta_{A'} \Psi_2 ={}& 3 \tilde f_{A'} \Psi_2, \\
 \tilde{\Theta}_{A'}\tilde{\Theta}_{B'}\phi^{-1} ={}& 0, \label{squareThetaphi}
\end{align}
\end{subequations}
with the scalar field $\phi$ defined in \eqref{phieta}.
\end{proposition}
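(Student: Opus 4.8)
The plan is to project each of the five tensorial relations onto the null tetrad attached to $(o^A,\iota^A)$ and to identify the resulting scalar equations with the Ricci, Bianchi and curvature identities of the background, specialized to the type~II Einstein conditions $\kappa=\sigma=\Psi_0=\Psi_1=0$ and $\Phi_{ABC'D'}=0$. The tools are the structure equations \eqref{eq:FirstStructureEq}, their complex conjugates (the action of $\tilde\Theta_{A'},\Theta_{A'}$ on $\bar o_{A'},\bar\iota_{A'}$, which with $\kappa=\sigma=0$ reduce to primed spin coefficients, e.g. $\tho\bar o_{A'}=0$, $\edt\bar o_{A'}=-\bar\rho\bar\iota_{A'}$, $\tho\bar\iota_{A'}=-\bar\tau'\bar o_{A'}$, $\edt\bar\iota_{A'}=-\bar\sigma'\bar o_{A'}$ and their primes), together with the component dictionary obtained by contracting with $\bar o^{A'},\bar\iota^{A'}$: namely $\tilde\Theta_{A'}\leftrightarrow(\tho,\edt)$, $\Theta_{A'}\leftrightarrow(\edt',\tho')$, $\tilde f_{A'}=\tau\bar o_{A'}-\rho\bar\iota_{A'}$ and $f_{A'}=\rho'\bar o_{A'}-\tau'\bar\iota_{A'}$.

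First I would dispose of the three relations needing no new curvature input. For the fourth I would use $\tilde f_{A'}=\tilde\Theta_{A'}\log\phi$ from \eqref{tildef} with $\phi=\Psi_2^{1/3}$, whence $\tilde\Theta_{A'}\Psi_2=3\Psi_2\,\tilde\Theta_{A'}\log\phi=3\tilde f_{A'}\Psi_2$; in components this is the Bianchi pair $\tho\Psi_2=3\rho\Psi_2$, $\edt\Psi_2=3\tau\Psi_2$. For the first I would write $\tilde\Theta_{A'}\tilde f_{B'}=\tilde\Theta_{A'}\tilde\Theta_{B'}\log\phi$, which is symmetric by \eqref{commutator-tildeTheta}; expanding its three independent components by the Leibniz rule and the primed-dyad formulas above, each collapses to exactly one of the Ricci identities $\tho\rho=\rho^2$, $\edt\tau=\tau^2-\bar\sigma'\rho$, $\tho\tau=\rho(\tau-\bar\tau')$, $\edt\rho=\tau(\rho-\bar\rho)$ already used in this section. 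The fifth is then purely algebraic: the fourth gives $\tilde\Theta_{B'}\phi^{-1}=-\phi^{-1}\tilde f_{B'}$, and differentiating once more, $\tilde\Theta_{A'}\tilde\Theta_{B'}\phi^{-1}=\phi^{-1}\bigl(\tilde f_{A'}\tilde f_{B'}-\tilde\Theta_{A'}\tilde f_{B'}\bigr)=0$ by the first. (Note \eqref{squareTildeTheta} is insufficient here, since for the symmetric object $\tilde\Theta_{A'}\tilde\Theta_{B'}\phi^{-1}$ the trace it controls vanishes identically.)

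The genuine curvature content lies in the second and third relations, and these I expect to be the main obstacle, since unlike the others they cannot be reduced to the structure equations and the flat commutator $[\tilde\Theta_{A'},\tilde\Theta_{B'}]=0$ but require the Weyl spinor through $\Box_{AB}$. The cleanest route is to recognize the right-hand sides as components of the Einstein spinor Ricci identity $\Box_{AB}\xi_C=\Psi_{ABCD}\xi^D-2\Lambda\,\xi_{(A}\epsilon_{B)C}$: one checks $o^Ao^B\iota^C\Box_{AB}\iota_C=\Psi_2+2\Lambda$ and $\iota^A\iota^Bo^C\Box_{AB}o_C=\Psi_2+2\Lambda$. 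I would therefore rewrite $\tilde\Theta^{A'}f_{A'}$ and $\Theta^{A'}\tilde f_{A'}$, using $f_{A'}=-\iota^B\tilde\Theta_{A'}\iota_B$ and $\tilde f_{A'}=o^B\Theta_{A'}o_B$ (immediate from \eqref{eq:FirstStructureEq}), as such symmetrized second-derivative contractions plus lower-order terms arising when $\Theta$ meets the remaining dyad legs, the latter again controlled by \eqref{eq:FirstStructureEq}. Equivalently, projecting directly onto the tetrad turns the second relation into $\tho\rho'-\edt\tau'=\rho'\bar\rho-\tau'\bar\tau'-\Psi_2-2\Lambda$ and the third into $\tho'\rho-\edt'\tau=\rho\bar\sigma'-\tau\bar\tau-\Psi_2-2\Lambda$, the Newman--Penrose field equations carrying the $\Psi_2+2\Lambda$ term once $\kappa=\sigma=0$, $\Phi_{ABC'D'}=0$. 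The delicate points are matching the correct field equation, fixing the normalization of the $\Lambda$-term, and checking that the spin-coefficient quadratics cancel to exactly $\mp(\Psi_2+2\Lambda)$; a useful consistency check is that the two target values sum to zero, forced by $\tilde\Theta^{A'}f_{A'}+\Theta^{A'}\tilde f_{A'}=0$. One could alternatively extract them from the commutator \eqref{commutator-tildeThetaTheta} on a weight-$\{p,0\}$ scalar, but that reintroduces the box $\Box^{\Theta}_{A'B'}$ and is no shorter.
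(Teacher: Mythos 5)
Your proposal is correct and is essentially the paper's own proof, which consists of the single remark that the five relations can be checked by GHP expansion using the Ricci, Bianchi and commutator identities of Penrose--Rindler \S 4.12: your component dictionary $\tilde{\Theta}_{A'}\leftrightarrow(\tho,\edt)$, $\Theta_{A'}\leftrightarrow(\edt',\tho')$, $\tilde f_{A'}\leftrightarrow(\rho,\tau)$, $f_{A'}\leftrightarrow(\rho',\tau')$ and the reductions---the first relation to the Ricci identities $\tho\rho=\rho^2$, $\edt\tau=\tau^2-\bar\sigma'\rho$, $\tho\tau=\rho(\tau-\bar\tau')$, $\edt\rho=\tau(\rho-\bar\rho)$ already listed in this section, the fourth to the Bianchi pair $\tho\Psi_2=3\rho\Psi_2$, $\edt\Psi_2=3\tau\Psi_2$ (equivalently to \eqref{tildef} with $\phi=\Psi_2^{1/3}$), the fifth to the first via $\tilde{\Theta}_{B'}\phi^{-1}=-\phi^{-1}\tilde f_{B'}$ (and you are right that \eqref{squareTildeTheta} alone only controls the skew part), and the second and third to the GHP field equation $\tho\rho'-\edt\tau'=\rho'\bar\rho-\tau'\bar\tau'-\Psi_2-2\Lambda$ and its prime, with your consistency check $\tilde{\Theta}^{A'}f_{A'}+\Theta^{A'}\tilde f_{A'}=0$ being sound---are exactly the intended expansion. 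The only slip is a typo in your component form of the third relation: the term $\rho\bar\sigma'$ has GHP weight $\{2,-2\}$ and so cannot appear in a weight-$\{0,0\}$ equation; it should read $\rho\bar\rho'$, i.e. $\tho'\rho-\edt'\tau=\rho\bar\rho'-\tau\bar\tau-\Psi_2-2\Lambda$.
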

\begin{proof}
These equations can be checked by GHP expansion using for example \cite[Section 4.12]{PR1}.
\end{proof}
The main result of this section is given in lemma~\ref{lem:asdDebyeCurvature} below, from which theorem~\ref{thm} follows. The proof of lemma~\ref{lem:asdDebyeCurvature} involves long computations and we transfer some intermediate steps into the following:
\begin{proposition}
The Debye equation in this formulation is given by
\begin{align} \label{eq:ThThtDebye}
\mathcal{O}^\dagger \chi = 2 \Theta^{A'} (\tilde\Theta_{A'} + 3\tilde f_{A'} )\chi - 6\Psi_2 \chi.
\end{align}
Repeated application of commutators on $\chi$ of weight $\{-4,0\}$ leads to
\begin{align}\label{eq:Psi3CommEq1}
\tilde{\Theta}_{A'}\Theta_{B'}\tilde{\Theta}^{B'}\tilde{\Theta}^{A'}\chi ={}&- \tilde{f}^{A'} \tilde{f}^{B'} \Theta_{B'}\tilde{\Theta}_{A'}\chi
 + 3 \tilde{f}^{A'} \Lambda \tilde{\Theta}_{A'}\chi
 -  \tfrac{15}{2} \tilde{f}^{A'} \Psi_{2} \tilde{\Theta}_{A'}\chi
 -  \tfrac{3}{2} f^{A'} \tilde{f}^{B'} \tilde{\Theta}_{A'}\tilde{\Theta}_{B'}\chi\nonumber\\
& - 3 f^{A'} \tilde{f}_{A'} \tilde{f}^{B'} \tilde{\Theta}_{B'}\chi
 + \tilde{f}^{A'} \tilde{\Theta}_{B'}\Theta^{B'}\tilde{\Theta}_{A'}\chi
 -  \tfrac{3}{2} f^{A'} \tilde{f}^{B'} \tilde{\Theta}_{B'}\tilde{\Theta}_{A'}\chi\nonumber\\
& + \tfrac{3}{2} \tilde{f}^{A'} \tilde{\Theta}_{A'}f_{B'} \tilde{\Theta}^{B'}\chi
 -  \tfrac{3}{2} \tilde{f}^{A'} \tilde{\Theta}_{B'}f_{A'} \tilde{\Theta}^{B'}\chi .
\end{align}
\begin{align}\label{eq:Psi3CommEq2}
\tilde{\Theta}_{B'}\Theta_{A'}\tilde{\Theta}^{B'}\tilde{\Theta}^{A'}\chi ={}&- \tilde{f}^{A'} \tilde{f}^{B'} \Theta_{B'}\tilde{\Theta}_{A'}\chi
 + 3 \tilde{f}^{A'} \Lambda \tilde{\Theta}_{A'}\chi
 -  \tfrac{15}{2} \tilde{f}^{A'} \Psi_{2} \tilde{\Theta}_{A'}\chi
 -  \tfrac{3}{2} f^{A'} \tilde{f}^{B'} \tilde{\Theta}_{A'}\tilde{\Theta}_{B'}\chi\nonumber\\
& - 3 f^{A'} \tilde{f}_{A'} \tilde{f}^{B'} \tilde{\Theta}_{B'}\chi
 + \tilde{f}^{A'} \tilde{\Theta}_{B'}\Theta^{B'}\tilde{\Theta}_{A'}\chi
 -  \tfrac{3}{2} f^{A'} \tilde{f}^{B'} \tilde{\Theta}_{B'}\tilde{\Theta}_{A'}\chi\nonumber\\
& + \tfrac{3}{2} \tilde{f}^{A'} \tilde{\Theta}_{A'}f_{B'} \tilde{\Theta}^{B'}\chi
 -  \tfrac{3}{2} \tilde{f}^{A'} \tilde{\Theta}_{B'}f_{A'} \tilde{\Theta}^{B'}\chi .
\end{align}

\begin{align}\label{eq:Psi4CommEq1}
\Theta_{B'}\Theta_{A'}\tilde{\Theta}^{B'}\tilde{\Theta}^{A'}\chi ={}&-9 \chi f^{A'} \tilde{f}_{A'} \Lambda
 + 18 \chi \Lambda^2
 -  \tfrac{3}{4} f^{A'} \tilde{f}_{A'} \mathcal{O}^{\dagger}{}\chi
 + \tfrac{9}{2} \Lambda \mathcal{O}^{\dagger}{}\chi
 + \tfrac{1}{4} \mathcal{O}^{\dagger}{}\mathcal{O}^{\dagger}{}\chi
 - 9 \chi f^{A'} \tilde{f}_{A'} \Psi_{2}\nonumber\\
& - 36 \chi \Lambda \Psi_{2}
 + \tfrac{3}{2} \mathcal{O}^{\dagger}{}\chi \Psi_{2}
 - 36 \chi \Psi_{2}^2
 + 3 \tilde{f}^{A'} \Lambda \Theta_{A'}\chi
 - 30 \tilde{f}^{A'} \Psi_{2} \Theta_{A'}\chi\nonumber\\
& + \tfrac{3}{2} \tilde{f}^{A'} \Theta_{A'}\mathcal{O}^{\dagger}{}\chi
 - 18 \chi \tilde{f}^{A'} \Theta_{A'}\Psi_{2}
 - 3 \Lambda \Theta_{A'}\tilde{\Theta}^{A'}\chi
 - 6 \Psi_{2} \Theta_{A'}\tilde{\Theta}^{A'}\chi\nonumber\\
& -  \tfrac{3}{2} f^{A'} \tilde{f}^{B'} \Theta_{A'}\tilde{\Theta}_{B'}\chi
 -  \tfrac{9}{2} f^{A'} \tilde{f}_{A'} \tilde{f}^{B'} \Theta_{B'}\chi
 - 3 \tilde{f}^{A'} \tilde{f}^{B'} \Theta_{B'}\Theta_{A'}\chi\nonumber\\
& + \tilde{f}^{A'} \Theta_{B'}\Theta^{B'}\tilde{\Theta}_{A'}\chi
 -  \tfrac{3}{2} f^{A'} \tilde{f}^{B'} \Theta_{B'}\tilde{\Theta}_{A'}\chi
 - 3 \tilde{f}^{A'} \Theta_{B'}\tilde{\Theta}^{B'}\Theta_{A'}\chi\nonumber\\
& + 3 \tilde{f}^{A'} \Theta_{A'}\tilde{f}_{B'} \Theta^{B'}\chi
 -  \Theta_{B'}\tilde{\Theta}_{A'}\chi \Theta^{B'}\tilde{f}^{A'}
 - 3 f^{A'} \Lambda \tilde{\Theta}_{A'}\chi
 -  \tfrac{3}{2} f^{A'} \Psi_{2} \tilde{\Theta}_{A'}\chi\nonumber\\
& + 3 \Theta^{A'}\Psi_{2} \tilde{\Theta}_{A'}\chi
 -  \tfrac{3}{2} \tilde{f}^{A'} \Theta_{B'}f^{B'} \tilde{\Theta}_{A'}\chi
 -  \tfrac{3}{2} \tilde{f}^{A'} \Theta_{A'}f^{B'} \tilde{\Theta}_{B'}\chi\nonumber\\
& -  \tfrac{3}{2} f^{A'} \Theta_{A'}\tilde{f}^{B'} \tilde{\Theta}_{B'}\chi
 + \tfrac{3}{2} \tilde{f}^{A'} \Theta^{B'}f_{A'} \tilde{\Theta}_{B'}\chi
 -  \tfrac{3}{2} f^{A'} \Theta^{B'}\tilde{f}_{A'} \tilde{\Theta}_{B'}\chi\nonumber\\
& + 3 \Theta^{B'}\tilde{f}^{A'} \tilde{\Theta}_{B'}\Theta_{A'}\chi .
\end{align}
\begin{align}\label{eq:Psi4CommEq2}
\Theta_{B'}\Theta_{A'}\tilde{\Theta}^{B'}\chi ={}&-6 \Lambda \Theta_{A'}\chi
 - 6 \Psi_{2} \Theta_{A'}\chi
 -  \tfrac{1}{2} \Theta_{A'}\mathcal{O}^{\dagger}{}\chi
 - 6 \chi \Theta_{A'}\Psi_{2}
 + 3 \tilde{f}_{B'} \Theta_{A'}\Theta^{B'}\chi\nonumber\\
& + 3 \Theta_{A'}\tilde{f}_{B'} \Theta^{B'}\chi
 -  \tfrac{3}{2} \tilde{f}_{B'} \sigma '^{B'} \tilde{\Theta}_{A'}\chi
 -  \tfrac{3}{2} \Theta_{B'}f^{B'} \tilde{\Theta}_{A'}\chi
 + \tfrac{3}{2} \tilde{\Theta}_{A'}\chi \tilde{\Theta}_{B'}\sigma '^{B'}\nonumber\\
& -  \sigma '^{B'} \tilde{\Theta}_{B'}\tilde{\Theta}_{A'}\chi .
\end{align}
\begin{align}\label{eq:Psi4CommEq3}
\Theta_{A'}\tilde{\Theta}^{A'}\Theta^{B'}\chi ={}&-4 \chi f^{B'} \Lambda
 - 2 \chi f^{B'} \Psi_{2}
 + 4 \chi \tilde{f}^{A'} \tilde{f}^{B'} \sigma '_{A'}
 - 2 f^{B'} \tilde{f}^{A'} \Theta_{A'}\chi
 - 2 f^{A'} \tilde{f}^{B'} \Theta_{A'}\chi\nonumber\\
& - 6 \chi \tilde{f}^{B'} \Theta_{A'}f^{A'}
 - 2 \chi \tilde{f}^{A'} \Theta_{A'}f^{B'}
 - 2 \chi f^{A'} \Theta_{A'}\tilde{f}^{B'}
 -  \tilde{f}^{B'} \Theta_{A'}\Theta^{A'}\chi
 -  f^{A'} \Theta_{A'}\tilde{\Theta}^{B'}\chi\nonumber\\
& + \Theta_{A'}\tilde{f}^{B'} \Theta^{A'}\chi
 - 2 f^{A'} \tilde{f}_{A'} \Theta^{B'}\chi
 - 2 \Lambda \Theta^{B'}\chi
 - 10 \Psi_{2} \Theta^{B'}\chi
 + 2 \chi \tilde{f}^{A'} \Theta^{B'}f_{A'}\nonumber\\
& - 2 \chi f^{A'} \Theta^{B'}\tilde{f}_{A'}
 + 3 \Theta^{A'}\chi \Theta^{B'}\tilde{f}_{A'}
 -  \tfrac{1}{2} \Theta^{B'}O^{\dagger}{}\chi
 - 10 \chi \Theta^{B'}\Psi_{2}
 - 3 \tilde{f}^{A'} \Theta^{B'}\Theta_{A'}\chi\nonumber\\
& - 2 \tilde{f}^{B'} \sigma '^{A'} \tilde{\Theta}_{A'}\chi
 + 4 \chi \tilde{f}^{B'} \tilde{\Theta}_{A'}\sigma '^{A'}
 -  \sigma '^{A'} \tilde{\Theta}_{A'}\tilde{\Theta}^{B'}\chi
 + \tfrac{3}{2} \tilde{f}^{A'} \sigma '_{A'} \tilde{\Theta}^{B'}\chi\nonumber\\
& -  \tfrac{5}{2} \Theta_{A'}f^{A'} \tilde{\Theta}^{B'}\chi
 + \tfrac{3}{2} \tilde{\Theta}_{A'}\sigma '^{A'} \tilde{\Theta}^{B'}\chi .
\end{align}
\begin{align}\label{eq:Psi4CommEq4}
\Theta_{B'}\Theta^{B'}\tilde{\Theta}_{A'}\chi ={}&\tfrac{1}{2} (-3 \tilde{f}_{B'} \sigma '^{B'} \tilde{\Theta}_{A'}\chi - 3 \Theta_{B'}f^{B'} \tilde{\Theta}_{A'}\chi + 3 \tilde{\Theta}_{A'}\chi \tilde{\Theta}_{B'}\sigma '^{B'} - 2 \sigma '^{B'} \tilde{\Theta}_{B'}\tilde{\Theta}_{A'}\chi).
\end{align}
\end{proposition}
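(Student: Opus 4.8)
The plan is to verify every identity by purely algebraic manipulation of the projected operators $\tilde\Theta_{A'}$ and $\Theta_{A'}$, drawing on only four inputs: the commutators \eqref{eq:ProjCommutators}, the structure equations \eqref{eq:FirstStructureEq}, the contraction identities \eqref{eq:ThThtSimpSet}, and the flatness relation \eqref{squareTildeTheta}. All of the type~II Einstein geometry is already encoded in these, so no further use of the Ricci or Bianchi identities is needed beyond what produced them.

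For the Debye equation \eqref{eq:ThThtDebye} I would first project $\Theta_{AA'}$ onto the null tetrad, using $\tho=\bar o^{A'}\tilde\Theta_{A'}$, $\edt=\bar\iota^{A'}\tilde\Theta_{A'}$, $\edt'=\bar o^{A'}\Theta_{A'}$ and $\tho'=\bar\iota^{A'}\Theta_{A'}$, together with $\tilde f_{A'}=\tilde\Theta_{A'}\log\phi$ from \eqref{tildef}, whose tetrad components are $\rho$ and $\tau$. Expanding $\Theta^{A'}(\tilde\Theta_{A'}+3\tilde f_{A'})\chi$ with $\bar\epsilon^{A'B'}=\bar o^{A'}\bar\iota^{B'}-\bar\iota^{A'}\bar o^{B'}$, the leading part reproduces $\tho'\tho-\edt'\edt$, while the spin-coefficient terms generated when $\Theta^{A'}$ differentiates the primed dyad and the weight of $\tilde\Theta_{A'}\chi$, supplemented by the $3\tilde f_{A'}$ contribution, complete this to $(\tho'-\bar\rho')(\tho+3\rho)-(\edt'-\bar\tau)(\edt+3\tau)$; comparison with \eqref{eq:Oadj} then gives \eqref{eq:ThThtDebye}.

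For the six nested-operator identities \eqref{eq:Psi3CommEq1}--\eqref{eq:Psi4CommEq4} the strategy is to start from the ordering shown on the left and commute adjacent operators inward. Whenever two $\tilde\Theta$'s become fully contracted they annihilate $\chi$ via $\tilde\Theta^{A'}\tilde\Theta_{A'}\chi=0$, eq. \eqref{squareTildeTheta}; pulling a $\Theta$ past a $\tilde\Theta$ invokes \eqref{commutator-tildeThetaTheta}, which contributes the weight-dependent curvature term together with the first-order corrections $-f_{A'}\tilde\Theta_{B'}+\tilde f_{B'}\Theta_{A'}$, while reordering two $\Theta$'s uses \eqref{commutator-Theta}. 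Products of $\tilde f$, $f$, $\sigma'$ and their projected derivatives are reduced throughout by \eqref{eq:ThThtSimpSet}, and in the four $\dot\Psi_4$-identities the combination $\Theta^{A'}(\tilde\Theta_{A'}+3\tilde f_{A'})\chi$ is recognized and replaced by $\tfrac12\mathcal{O}^\dagger\chi+3\Psi_2\chi$ using \eqref{eq:ThThtDebye}; this is the source of the $\mathcal{O}^\dagger\chi$ terms on the right of \eqref{eq:Psi4CommEq1}--\eqref{eq:Psi4CommEq3}, which vanish on shell and collapse the curvature component to first order in the subsequent Lemma. I would also note that \eqref{eq:Psi3CommEq1} and \eqref{eq:Psi3CommEq2} coincide: since $[\tilde\Theta_{A'},\tilde\Theta_{B'}]=0$ by \eqref{commutator-tildeTheta}, the inner factor $\tilde\Theta^{B'}\tilde\Theta^{A'}\chi$ is symmetric in $A'B'$, so relabelling reduces one to the other and only one needs independent verification.

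The main obstacle is organizational rather than conceptual: each of \eqref{eq:Psi4CommEq1}--\eqref{eq:Psi4CommEq3} carries on the order of twenty to thirty terms, and the $p$-dependent piece of \eqref{commutator-tildeThetaTheta} must be tracked through every reordering, the weight starting at $p=-4$ for $\chi$ and changing by $+1$ at each $\tilde\Theta_{A'}$ and $-1$ at each $\Theta_{A'}$. The intermediate expressions are not manifestly symmetric in their free primed indices, so the safest route is to carry out the reordering by machine, using the Spinframes/xAct setup mentioned in the introduction, and to use the closed GHP expansions behind \eqref{eq:ThThtSimpSet} as independent consistency checks at each stage.
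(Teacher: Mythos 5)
Your proposal is correct and follows essentially the same route as the paper: the Debye equation \eqref{eq:ThThtDebye} is identified with \eqref{eq:Oadj} by GHP expansion, and the six nested identities are established by direct, machine-assisted computation with the projected operators \eqref{projectedTheta}, their commutators \eqref{eq:ProjCommutators} and the simplification set \eqref{eq:ThThtSimpSet}, which is exactly what the authors did using SpinFrames/xAct. Your additional observation that \eqref{eq:Psi3CommEq1} and \eqref{eq:Psi3CommEq2} coincide---since $[\tilde{\Theta}_{A'},\tilde{\Theta}_{B'}]\chi=0$ makes the inner block $\tilde{\Theta}^{B'}\tilde{\Theta}^{A'}\chi$ symmetric, so relabelling dummy indices identifies the two left-hand sides---is a correct shortcut, consistent with the identical right-hand sides in the paper.
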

\begin{proof}
The identities are verified by direct computation using the projected operators $\T_{A'},\tT_{A'}$ defined in \eqref{projectedTheta} and their commutator properties \eqref{eq:ProjCommutators}.
The Debye equation \eqref{eq:ThThtDebye} can be seen to coincide with \eqref{eq:Oadj} by GHP expansion.
\end{proof}

\begin{lemma} \label{lem:asdDebyeCurvature}
The ASD curvature components of a metric of the form \eqref{eq:MetricFromDeybe},\eqref{eq:MetricFromDeybeComps} are given by
\begin{subequations} 
\begin{align}
\dot\Psi_{0}={}& 0 \\
\dot\Psi_{1}={}& 0 \\
\dot\Psi_{2}={}& 0 \\
\dot\Psi_{3}={}&- \tfrac{1}{4} (\tau \tho - \rho \edt) \mathcal{O}^{\dagger}{}\chi, \label{eq:LinPsi3Final}\\
\dot\Psi_{4}={}&   -(\Psi_2^{4/3} \xi^a \Theta_a + 3 \Psi_{2}^2 + 6 \Psi_2 \Lambda) \chi - (\tfrac{1}{8}\mathcal{O}^{\dagger} - \rho\tho' + \tau\edt' + 2\Psi_2 + \tfrac{5}{2} \Lambda) \mathcal{O}^{\dagger}{}\chi .\label{eq:LinPsi4Final}
\end{align}
\end{subequations} 
\end{lemma}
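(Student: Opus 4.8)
The plan is to substitute the Debye metric \eqref{ooX}--\eqref{X} into the Penrose--Rindler formula \eqref{linearizedWeyl} for the ASD linearized Weyl spinor and to read off each dyad component $\dot\Psi_k$ by contracting $\dot\Psi_{ABCD}$ with $k$ copies of $\iota^A$ and $4-k$ copies of $o^A$. Because metrics generated from a scalar Debye potential are trace-free, $g^{ef}h_{ef}=0$, the curvature term in \eqref{linearizedWeyl} drops and we are left with $\dot\Psi_{ABCD}=\tfrac12\nabla_{(A}{}^{A'}\nabla_{B}{}^{B'}h_{CD)A'B'}$ where $h_{CDA'B'}=o_Co_DX_{A'B'}$. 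I would then organize the whole calculation by trading every Levi-Civita derivative for the GHP connection $\Theta_{AA'}$ and its two projections $\tilde\Theta_{A'}=o^A\Theta_{AA'}$, $\Theta_{A'}=\iota^A\Theta_{AA'}$ from \eqref{projectedTheta}, using the structure equations \eqref{eq:FirstStructureEq} and the commutators \eqref{eq:ProjCommutators} as the basic bookkeeping tools.

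First I would dispose of the three lowest components. Since $h_{ab}=o_Ao_BX_{A'B'}$, every tetrad component in which an unprimed index is contracted with $o^A$ vanishes, so $h_{ll}=h_{lm}=h_{mm}=0$; as the operator $\mathcal{T}$ in \eqref{eq:operatorT} depends only on these three components, $\dot\Psi_0=\mathcal{T}(h_{ab})=0$ is immediate. For $\dot\Psi_1$ and $\dot\Psi_2$ I would contract the formula with three and with two copies of $o^A$; pushing these through the derivatives with $\tilde\Theta_{A'}o^B=0$ from \eqref{eq:FirstStructureEq} and then invoking the flatness identity $\tilde\Theta^{A'}\tilde\Theta_{A'}=0$ of \eqref{squareTildeTheta} (equivalently \eqref{commutator-tildeThetaTheta} restricted to primed spinors), the surviving terms cancel and both components vanish identically, for any $\chi$. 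This already fixes the half type N structure \eqref{eq:linearizedWeyl} once $\dot\Psi_3$ is shown to vanish on shell.

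The substance of the lemma is the reduction of $\dot\Psi_3$ and $\dot\Psi_4$. For $\dot\Psi_3=o^A\iota^B\iota^C\iota^D\dot\Psi_{ABCD}$ I would expand the double derivative, commute the projected operators past one another using \eqref{commutator-tildeThetaTheta}--\eqref{commutator-Theta}, and feed in the auxiliary identities \eqref{eq:Psi3CommEq1}--\eqref{eq:Psi3CommEq2}; after recognizing the relevant combination through the form \eqref{eq:ThThtDebye} of $\mathcal{O}^\dagger$, everything collapses to the single term $-\tfrac14(\tau\tho-\rho\edt)\mathcal{O}^\dagger\chi$ of \eqref{eq:LinPsi3Final}, which vanishes whenever $\chi$ is a genuine Debye potential. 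For $\dot\Psi_4=\iota^A\iota^B\iota^C\iota^D\dot\Psi_{ABCD}$ the same procedure, now drawing on the longer identities \eqref{eq:Psi4CommEq1}--\eqref{eq:Psi4CommEq4}, produces a first-order operator acting on $\chi$ together with a block of terms proportional to $\mathcal{O}^\dagger\chi$ and $\mathcal{O}^\dagger\mathcal{O}^\dagger\chi$; the latter assemble into the second summand of \eqref{eq:LinPsi4Final}.

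The hard part is recognizing the remaining first-order piece of $\dot\Psi_4$ as the stated geometric operator $-(\Psi_2^{4/3}\xi^a\Theta_a+3\Psi_2^2+6\Psi_2\Lambda)\chi$. Here I would use the GHP form \eqref{eq:xiGHP} of the projected Killing vector $\xi^a$ together with $\phi=\Psi_2^{1/3}$ and $\tilde f_{A'}=\tilde\Theta_{A'}\log\phi$ from \eqref{tildef} to rewrite $\Psi_2^{4/3}\xi^a\Theta_a$ as a combination of $\tilde f^{A'}\Theta_{A'}$, $f^{A'}\tilde\Theta_{A'}$ and the scalar relations \eqref{eq:ThThtSimpSet}, and then match this against the first-order terms emerging from \eqref{eq:Psi4CommEq1}--\eqref{eq:Psi4CommEq4}. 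This matching of coefficients is the most delicate and computation-heavy step, and is most safely verified with the computer-algebra setup used throughout the paper; the residual scalar coefficients $3\Psi_2^2+6\Psi_2\Lambda$ then come from the $\Psi_2$- and $\Lambda$-dependent remainders of those same identities.
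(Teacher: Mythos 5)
Your proposal is correct and follows essentially the same route as the paper's proof: substituting \eqref{ooX}--\eqref{X} into \eqref{linearizedWeyl}, obtaining the vanishing of $\dot\Psi_0,\dot\Psi_1$ from the algebraic structure and of $\dot\Psi_2$ from the flatness identity \eqref{squareTildeTheta}, and then reducing $\dot\Psi_3$ and $\dot\Psi_4$ via the commutators \eqref{eq:ProjCommutators}, the auxiliary identities \eqref{eq:Psi3CommEq1}--\eqref{eq:Psi4CommEq4} together with \eqref{eq:ThThtDebye} and \eqref{eq:ThThtSimpSet}, and a final GHP expansion matched against \eqref{eq:xiGHP}. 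The deferral of the coefficient matching to computer algebra also mirrors the paper's actual practice.
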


\begin{proof}
The result is verified by direct computation using the projected operators $\T_{A'},\tT_{A'}$ defined in \eqref{projectedTheta} and their commutator properties \eqref{eq:ProjCommutators}. The first step is to derive an appropriate form for the components of the linearized ASD Weyl curvature spinor \eqref{linearizedWeyl}. Replacing \eqref{ooX} in \eqref{linearizedWeyl} and using \eqref{projectedTheta}, \eqref{projectedf}, \eqref{eq:FirstStructureEq}, we find
\begin{subequations}
\begin{align}
\dot \Psi_0 ={}&0, \label{linpsi0} \\
\dot \Psi_1 ={}&0, \label{linpsi1} \\
\dot \Psi_2 ={}&\frac{1}{12} (\tilde\Theta_{B'} + 2 \tilde f_{B'})(\tilde\Theta_{A'} + 2 \tilde f_{A'})X^{A'B'}, \label{linpsi2} \\
\dot \Psi_3 ={}&\frac{1}{8}\left((\Theta_{B'} +  f_{B'})(\tilde\Theta_{A'} + 2 \tilde f_{A'}) + (\tilde\Theta_{B'} + 3 \tilde f_{B'})\Theta_{A'} \right)X^{A'B'}, \label{linpsi3} \\
\dot \Psi_4 ={}&\frac{1}{2} \left( \Theta_{B'}\Theta_{A'} - \sigma'_{B'}(\tilde\Theta_{A'} + 2\tilde f_{A'}) \right)X^{A'B'}. \label{linpsi4}
\end{align}
\end{subequations}
So the first two equations, \eqref{linpsi0} and \eqref{linpsi1}, follow from the algebraic structure of \eqref{X}. For $\dot\Psi_2$, rewrite the operator in \eqref{linpsi2} using \eqref{tildef} and insert \eqref{X},
\begin{align}
 \dot\Psi_2 ={} \tfrac{1}{12} \phi^{-2}\tilde{\Theta}_{B'}\tilde{\Theta}_{A'}(\phi^{2}X^{A'B'}) 
 ={} \tfrac{1}{12} \phi^{-2}\tilde{\Theta}_{B'}\tilde{\Theta}_{A'}\tilde{\Theta}^{A'}(\phi^2\tilde{\Theta}^{B'}\chi)
 ={} 0,
\end{align}
where the last step follows from \eqref{squareTildeTheta}. Next we compute $\dot\psi_3$ by first inserting \eqref{X} into \eqref{linpsi3} and expanding out,
\begin{align}
\dot{\Psi}_{3}{}={}&- \tfrac{1}{8} \tilde{f}^{A'} \Theta_{A'}\tilde{\Theta}_{B'}\tilde{\Theta}^{B'}\chi
 -  \tfrac{1}{16} \Theta_{A'}\tilde{\Theta}_{B'}\tilde{\Theta}^{B'}\tilde{\Theta}^{A'}\chi
 + \tfrac{5}{8} \tilde{f}^{A'} \tilde{f}^{B'} \Theta_{B'}\tilde{\Theta}_{A'}\chi
 + \tfrac{3}{16} \tilde{f}^{A'} \Theta_{B'}\tilde{\Theta}_{A'}\tilde{\Theta}^{B'}\chi\nonumber\\
& -  \tfrac{1}{16} \Theta_{B'}\tilde{\Theta}_{A'}\tilde{\Theta}^{B'}\tilde{\Theta}^{A'}\chi
 + \tfrac{5}{16} \tilde{f}^{A'} \Theta_{B'}\tilde{\Theta}^{B'}\tilde{\Theta}_{A'}\chi
 + \tfrac{5}{8} \tilde{f}^{A'} \Theta_{B'}\tilde{f}^{B'} \tilde{\Theta}_{A'}\chi
 + \tfrac{1}{4} \Theta^{B'}\tilde{\Theta}^{A'}\chi \tilde{\Theta}_{A'}\tilde{f}_{B'}\nonumber\\
& -  \tfrac{1}{8} \tilde{f}^{A'} \tilde{\Theta}_{A'}\Theta_{B'}\tilde{\Theta}^{B'}\chi
 -  \tfrac{1}{16} \tilde{\Theta}_{A'}\Theta_{B'}\tilde{\Theta}^{B'}\tilde{\Theta}^{A'}\chi
 -  \tfrac{1}{8} f^{A'} \tilde{f}^{B'} \tilde{\Theta}_{A'}\tilde{\Theta}_{B'}\chi
 + \tfrac{1}{8} \Theta^{B'}\tilde{f}^{A'} \tilde{\Theta}_{A'}\tilde{\Theta}_{B'}\chi\nonumber\\
& -  \tfrac{1}{8} \tilde{\Theta}_{A'}\Theta_{B'}\tilde{f}^{B'} \tilde{\Theta}^{A'}\chi
 -  \tfrac{1}{4} f^{A'} \tilde{f}_{A'} \tilde{f}^{B'} \tilde{\Theta}_{B'}\chi
 + \tfrac{1}{4} \tilde{f}^{A'} \Theta_{A'}\tilde{f}^{B'} \tilde{\Theta}_{B'}\chi
 + \tfrac{1}{8} \Theta_{A'}\tilde{\Theta}^{B'}\tilde{f}^{A'} \tilde{\Theta}_{B'}\chi\nonumber\\
& -  \tfrac{3}{8} \tilde{f}^{A'} \Theta^{B'}\tilde{f}_{A'} \tilde{\Theta}_{B'}\chi
 + \tfrac{1}{8} \Theta^{B'}\tilde{\Theta}_{A'}\tilde{f}^{A'} \tilde{\Theta}_{B'}\chi
 -  \tfrac{1}{4} \Theta_{A'}\tilde{\Theta}^{A'}\chi \tilde{\Theta}_{B'}\tilde{f}^{B'}
 + \tfrac{1}{8} f^{A'} \tilde{\Theta}_{A'}\chi \tilde{\Theta}_{B'}\tilde{f}^{B'}\nonumber\\
& -  \tfrac{1}{8} \tilde{\Theta}^{A'}\chi \tilde{\Theta}_{B'}\Theta_{A'}\tilde{f}^{B'}
 -  \tfrac{1}{8} \tilde{f}^{A'} \tilde{\Theta}_{B'}\Theta_{A'}\tilde{\Theta}^{B'}\chi
 -  \tfrac{1}{16} \tilde{\Theta}_{B'}\Theta_{A'}\tilde{\Theta}^{B'}\tilde{\Theta}^{A'}\chi
 -  \tfrac{1}{8} \Theta^{B'}\tilde{f}^{A'} \tilde{\Theta}_{B'}\tilde{\Theta}_{A'}\chi\nonumber\\
& + \tfrac{1}{16} f^{A'} \tilde{\Theta}_{B'}\tilde{\Theta}_{A'}\tilde{\Theta}^{B'}\chi
 + \tfrac{1}{8} f^{A'} \tilde{f}_{A'} \tilde{\Theta}_{B'}\tilde{\Theta}^{B'}\chi
 -  \tfrac{1}{4} \Theta_{A'}\tilde{f}^{A'} \tilde{\Theta}_{B'}\tilde{\Theta}^{B'}\chi
 + \tfrac{1}{16} f^{A'} \tilde{\Theta}_{B'}\tilde{\Theta}^{B'}\tilde{\Theta}_{A'}\chi\nonumber\\
& + \tfrac{1}{8} f^{A'} \tilde{\Theta}_{B'}\tilde{f}_{A'} \tilde{\Theta}^{B'}\chi .
\end{align}
Using \eqref{eq:ThThtSimpSet} and \eqref{commutator-tildeTheta} yields
\begin{align}
\dot{\Psi}_{3}{}={}&\tfrac{7}{8} \tilde{f}^{A'} \tilde{f}^{B'} \Theta_{B'}\tilde{\Theta}_{A'}\chi
 + \tfrac{1}{2} \tilde{f}^{A'} \Theta_{B'}\tilde{\Theta}^{B'}\tilde{\Theta}_{A'}\chi
 + \tfrac{3}{2} \tilde{f}^{A'} \Lambda \tilde{\Theta}_{A'}\chi
 + \tfrac{9}{8} \tilde{f}^{A'} \Psi_{2} \tilde{\Theta}_{A'}\chi
 -  \tfrac{1}{8} \tilde{f}^{A'} \tilde{\Theta}_{A'}\Theta_{B'}\tilde{\Theta}^{B'}\chi\nonumber\\
& -  \tfrac{1}{16} \tilde{\Theta}_{A'}\Theta_{B'}\tilde{\Theta}^{B'}\tilde{\Theta}^{A'}\chi
 -  \tfrac{3}{8} f^{A'} \tilde{f}_{A'} \tilde{f}^{B'} \tilde{\Theta}_{B'}\chi
 + \tfrac{3}{8} \tilde{f}^{A'} \Theta_{A'}\tilde{f}^{B'} \tilde{\Theta}_{B'}\chi
 -  \tfrac{3}{8} \tilde{f}^{A'} \Theta^{B'}\tilde{f}_{A'} \tilde{\Theta}_{B'}\chi\nonumber\\
& -  \tfrac{1}{8} \tilde{\Theta}^{A'}\chi \tilde{\Theta}_{B'}\Theta_{A'}\tilde{f}^{B'}
 -  \tfrac{1}{8} \tilde{f}^{A'} \tilde{\Theta}_{B'}\Theta_{A'}\tilde{\Theta}^{B'}\chi
 -  \tfrac{1}{16} \tilde{\Theta}_{B'}\Theta_{A'}\tilde{\Theta}^{B'}\tilde{\Theta}^{A'}\chi
 -  \tfrac{1}{8} f^{A'} \tilde{f}^{B'} \tilde{\Theta}_{B'}\tilde{\Theta}_{A'}\chi .
\end{align}
To eliminate 4th order terms, we use \eqref{eq:Psi3CommEq1} and \eqref{eq:Psi3CommEq2}, leading to
\begin{align}
\dot{\Psi}_{3}{}={}&\tilde{f}^{A'} \tilde{f}^{B'} \Theta_{B'}\tilde{\Theta}_{A'}\chi
 + \tfrac{1}{2} \tilde{f}^{A'} \Theta_{B'}\tilde{\Theta}^{B'}\tilde{\Theta}_{A'}\chi
 + \tfrac{9}{8} \tilde{f}^{A'} \Lambda \tilde{\Theta}_{A'}\chi
 + \tfrac{33}{16} \tilde{f}^{A'} \Psi_{2} \tilde{\Theta}_{A'}\chi
 -  \tfrac{1}{8} \tilde{f}^{A'} \tilde{\Theta}_{A'}\Theta_{B'}\tilde{\Theta}^{B'}\chi\nonumber\\
& + \tfrac{3}{16} f^{A'} \tilde{f}^{B'} \tilde{\Theta}_{A'}\tilde{\Theta}_{B'}\chi
 + \tfrac{3}{8} \tilde{f}^{A'} \Theta_{A'}\tilde{f}^{B'} \tilde{\Theta}_{B'}\chi
 -  \tfrac{3}{8} \tilde{f}^{A'} \Theta^{B'}\tilde{f}_{A'} \tilde{\Theta}_{B'}\chi
 -  \tfrac{1}{8} \tilde{\Theta}^{A'}\chi \tilde{\Theta}_{B'}\Theta_{A'}\tilde{f}^{B'}\nonumber\\
& -  \tfrac{1}{8} \tilde{f}^{A'} \tilde{\Theta}_{B'}\Theta_{A'}\tilde{\Theta}^{B'}\chi
 -  \tfrac{1}{8} \tilde{f}^{A'} \tilde{\Theta}_{B'}\Theta^{B'}\tilde{\Theta}_{A'}\chi
 + \tfrac{1}{16} f^{A'} \tilde{f}^{B'} \tilde{\Theta}_{B'}\tilde{\Theta}_{A'}\chi
 -  \tfrac{3}{16} \tilde{f}^{A'} \tilde{\Theta}_{A'}f_{B'} \tilde{\Theta}^{B'}\chi\nonumber\\
& + \tfrac{3}{16} \tilde{f}^{A'} \tilde{\Theta}_{B'}f_{A'} \tilde{\Theta}^{B'}\chi .
\end{align}
Using the commutators \eqref{commutator-tildeTheta}, \eqref{commutator-tildeThetaTheta} and the Debye equation \eqref{eq:ThThtDebye} together with \eqref{eq:ThThtSimpSet} takes care of third order terms,
\begin{align}
\dot{\Psi}_{3}{}={}&- \tfrac{13}{8} \tilde{f}^{A'} \Lambda \tilde{\Theta}_{A'}\chi
 -  \tfrac{13}{16} \tilde{f}^{A'} \Psi_{2} \tilde{\Theta}_{A'}\chi
 -  \tfrac{1}{4} \tilde{f}^{A'} \tilde{\Theta}_{A'}\mathcal{O}^{\dagger}{}\chi
 + \tfrac{3}{8} \tilde{f}^{A'} \Theta_{A'}\tilde{f}^{B'} \tilde{\Theta}_{B'}\chi
 -  \tfrac{3}{8} \tilde{f}^{A'} \Theta^{B'}\tilde{f}_{A'} \tilde{\Theta}_{B'}\chi\nonumber\\
& + \tfrac{7}{16} \tilde{f}^{A'} \tilde{\Theta}_{A'}f_{B'} \tilde{\Theta}^{B'}\chi
 -  \tfrac{7}{16} \tilde{f}^{A'} \tilde{\Theta}_{B'}f_{A'} \tilde{\Theta}^{B'}\chi .
\end{align}
\begin{subequations}
Finally, the irreducible decompositions
\begin{align}
\tilde{\Theta}_{A'}f_{B'}={}&\tilde{\Theta}_{(A'}f_{B')}
 + \tfrac{1}{2} \bar\epsilon_{A'B'} \tilde{\Theta}_{C'}f^{C'}, \\
\Theta_{A'}\tilde{f}_{B'}={}&\Theta_{(A'}\tilde{f}_{B')}
 + \tfrac{1}{2} \bar\epsilon_{A'B'} \Theta_{C'}\tilde{f}^{C'}, \label{eq:IrrDecThft}
\end{align}
\end{subequations}
lead to
\begin{align}
\dot{\Psi}_{3}{}={}&- \tfrac{1}{4} \tilde{f}^{A'} \tilde{\Theta}_{A'}\mathcal{O}^{\dagger}{}\chi,
\end{align}
which gives \eqref{eq:LinPsi3Final} by GHP expanding $ \tilde{f}^{A'} \tilde{\Theta}_{A'}$.

To compute $\dot\Psi_4$, insert \eqref{X} into \eqref{linpsi4} and use the identity \eqref{eq:Psi4CommEq1} together with the Debye equation \eqref{eq:ThThtDebye} and \eqref{eq:ThThtSimpSet} leading to
\begin{align}
\dot{\Psi}_{4}{}={}&\tfrac{9}{2} \chi f^{A'} \tilde{f}_{A'} \Lambda
 - 18 \chi \Lambda^2
 + \tfrac{3}{8} f^{A'} \tilde{f}_{A'} \mathcal{O}^{\dagger}{}\chi
 - 2 \Lambda \mathcal{O}^{\dagger}{}\chi
 -  \tfrac{1}{8} \mathcal{O}^{\dagger}{}\mathcal{O}^{\dagger}{}\chi
 + \tfrac{9}{2} \chi f^{A'} \tilde{f}_{A'} \Psi_{2}
 + 6 \chi \Lambda \Psi_{2}\nonumber\\
& -  \tfrac{7}{4} \mathcal{O}^{\dagger}{}\chi \Psi_{2}
 + \tfrac{15}{2} \chi \Psi_{2}^2
 + \tfrac{21}{2} \tilde{f}^{A'} \Psi_{2} \Theta_{A'}\chi
 + 6 \chi \Lambda \Theta_{A'}\tilde{f}^{A'}
 -  \tfrac{3}{2} \chi \Psi_{2} \Theta_{A'}\tilde{f}^{A'}
 -  \tfrac{1}{2} \tilde{f}^{A'} \Theta_{A'}\mathcal{O}^{\dagger}{}\chi\nonumber\\
& + \tfrac{21}{2} \chi \tilde{f}^{A'} \Theta_{A'}\Psi_{2}
 + \tfrac{3}{2} \chi \tilde{f}^{A'} \Theta_{A'}\Theta_{B'}\tilde{f}^{B'}
 + \tfrac{3}{4} f^{A'} \tilde{f}^{B'} \Theta_{A'}\tilde{\Theta}_{B'}\chi
 + \tfrac{9}{4} f^{A'} \tilde{f}_{A'} \tilde{f}^{B'} \Theta_{B'}\chi\nonumber\\
& + \tfrac{3}{2} \tilde{f}^{A'} \Theta_{A'}\chi \Theta_{B'}\tilde{f}^{B'}
 + 3 \tilde{f}^{A'} \tilde{f}^{B'} \Theta_{B'}\Theta_{A'}\chi
 -  \tfrac{1}{2} \tilde{f}^{A'} \Theta_{B'}\Theta_{A'}\tilde{\Theta}^{B'}\chi
 -  \tfrac{1}{2} \tilde{f}^{A'} \Theta_{B'}\Theta^{B'}\tilde{\Theta}_{A'}\chi\nonumber\\
& + \tfrac{3}{4} f^{A'} \tilde{f}^{B'} \Theta_{B'}\tilde{\Theta}_{A'}\chi
 + \tfrac{3}{2} \tilde{f}^{A'} \Theta_{B'}\tilde{\Theta}^{B'}\Theta_{A'}\chi
 - 3 \tilde{f}^{A'} \Theta_{A'}\tilde{f}_{B'} \Theta^{B'}\chi
 + \Theta_{A'}\tilde{\Theta}_{B'}\chi \Theta^{B'}\tilde{f}^{A'}\nonumber\\
& + \tfrac{1}{2} \Theta_{B'}\tilde{\Theta}_{A'}\chi \Theta^{B'}\tilde{f}^{A'}
 + \tfrac{3}{2} f^{A'} \Lambda \tilde{\Theta}_{A'}\chi
 + \tfrac{3}{4} f^{A'} \Psi_{2} \tilde{\Theta}_{A'}\chi
 -  \Theta^{A'}\Psi_{2} \tilde{\Theta}_{A'}\chi
 + \tfrac{3}{4} \tilde{f}^{A'} \Theta_{B'}f^{B'} \tilde{\Theta}_{A'}\chi\nonumber\\
& -  \tfrac{3}{2} \tilde{f}^{A'} \tilde{f}^{B'} \sigma '_{A'} \tilde{\Theta}_{B'}\chi
 + \tfrac{3}{4} \tilde{f}^{A'} \Theta_{A'}f^{B'} \tilde{\Theta}_{B'}\chi
 + \tfrac{3}{4} f^{A'} \Theta_{A'}\tilde{f}^{B'} \tilde{\Theta}_{B'}\chi
 + \tfrac{1}{2} \Theta_{A'}\Theta^{B'}\tilde{f}^{A'} \tilde{\Theta}_{B'}\chi\nonumber\\
& -  \tfrac{3}{4} \tilde{f}^{A'} \Theta^{B'}f_{A'} \tilde{\Theta}_{B'}\chi
 + \tfrac{3}{4} f^{A'} \Theta^{B'}\tilde{f}_{A'} \tilde{\Theta}_{B'}\chi
 -  \tfrac{3}{2} \Theta^{B'}\tilde{f}^{A'} \tilde{\Theta}_{B'}\Theta_{A'}\chi
 + \tfrac{1}{2} \tilde{f}^{A'} \sigma '^{B'} \tilde{\Theta}_{B'}\tilde{\Theta}_{A'}\chi\nonumber\\
& -  \tfrac{1}{4} \sigma '^{A'} \tilde{\Theta}_{B'}\tilde{\Theta}_{A'}\tilde{\Theta}^{B'}\chi .
\end{align}
To convert third order terms we use identities \eqref{eq:Psi4CommEq2}, \eqref{eq:Psi4CommEq3}, \eqref{eq:Psi4CommEq4}. After a commutator  \eqref{commutator-tildeThetaTheta} and \eqref{eq:ThThtSimpSet} is used, we have
\begin{align}
\dot{\Psi}_{4}{}={}&\tfrac{9}{2} \chi f^{A'} \tilde{f}_{A'} \Lambda
 + 6 \chi \Lambda^2
 + \tfrac{3}{8} f^{A'} \tilde{f}_{A'} \mathcal{O}^{\dagger}{}\chi
 - 2 \Lambda \mathcal{O}^{\dagger}{}\chi
 -  \tfrac{1}{8} \mathcal{O}^{\dagger}{}\mathcal{O}^{\dagger}{}\chi
 + \tfrac{9}{2} \chi f^{A'} \tilde{f}_{A'} \Psi_{2}
 + 3 \chi \Lambda \Psi_{2}\nonumber\\
& -  \tfrac{7}{4} \mathcal{O}^{\dagger}{}\chi \Psi_{2}
 + 3 \tilde{f}^{A'} \Lambda \Theta_{A'}\chi
 -  \tilde{f}^{A'} \Theta_{A'}\mathcal{O}^{\dagger}{}\chi
 -  \tfrac{3}{4} f^{A'} \tilde{f}^{B'} \Theta_{A'}\tilde{\Theta}_{B'}\chi
 + \tfrac{9}{4} f^{A'} \tilde{f}_{A'} \tilde{f}^{B'} \Theta_{B'}\chi\nonumber\\
& + \tfrac{3}{4} f^{A'} \tilde{f}^{B'} \Theta_{B'}\tilde{\Theta}_{A'}\chi
 -  \tfrac{1}{2} \Theta_{A'}\tilde{\Theta}_{B'}\chi \Theta^{B'}\tilde{f}^{A'}
 + \tfrac{1}{2} \Theta_{B'}\tilde{\Theta}_{A'}\chi \Theta^{B'}\tilde{f}^{A'}
 + \tfrac{3}{2} f^{A'} \Lambda \tilde{\Theta}_{A'}\chi\nonumber\\
& + \tfrac{3}{4} f^{A'} \Psi_{2} \tilde{\Theta}_{A'}\chi
 -  \Theta^{A'}\Psi_{2} \tilde{\Theta}_{A'}\chi
 -  \tfrac{3}{2} \tilde{f}^{A'} \Theta_{B'}f^{B'} \tilde{\Theta}_{A'}\chi
 -  \tfrac{3}{4} \tilde{f}^{A'} \tilde{f}^{B'} \sigma '_{A'} \tilde{\Theta}_{B'}\chi\nonumber\\
& + \tfrac{3}{4} \tilde{f}^{A'} \Theta_{A'}f^{B'} \tilde{\Theta}_{B'}\chi
 -  \tfrac{3}{4} f^{A'} \Theta_{A'}\tilde{f}^{B'} \tilde{\Theta}_{B'}\chi
 + \tfrac{1}{2} \Theta_{A'}\Theta^{B'}\tilde{f}^{A'} \tilde{\Theta}_{B'}\chi
 -  \tfrac{3}{4} \tilde{f}^{A'} \Theta^{B'}f_{A'} \tilde{\Theta}_{B'}\chi\nonumber\\
& + \tfrac{3}{4} f^{A'} \Theta^{B'}\tilde{f}_{A'} \tilde{\Theta}_{B'}\chi
 + \tfrac{3}{4} \tilde{f}^{A'} \tilde{\Theta}_{A'}\chi \tilde{\Theta}_{B'}\sigma '^{B'}
 -  \tfrac{1}{4} \sigma '^{A'} \tilde{\Theta}_{B'}\tilde{\Theta}_{A'}\tilde{\Theta}^{B'}\chi .
\end{align}
Now, the commutator  \eqref{commutator-tildeTheta} together with the irreducible decomposition
\begin{align}
\Theta_{A'}\tilde{\Theta}_{B'}\chi ={}&\Theta_{(A'}\tilde{\Theta}_{B')}\chi
 + \tfrac{1}{2} \bar\epsilon_{A'B'} \Theta_{C'}\tilde{\Theta}^{C'}\chi,
\end{align}
and \eqref{eq:ThThtDebye}, \eqref{eq:ThThtSimpSet} yields
\begin{align}
\dot{\Psi}_{4}{}={}&- \tfrac{5}{2} \Lambda \mathcal{O}^{\dagger}{}\chi
 -  \tfrac{1}{8} \mathcal{O}^{\dagger}{}\mathcal{O}^{\dagger}{}\chi
 - 6 \chi \Lambda \Psi_{2}
 - 2 \mathcal{O}^{\dagger}{}\chi \Psi_{2}
 - 3 \chi \Psi_{2}^2
 -  \tfrac{3}{2} \tilde{f}^{A'} \Psi_{2} \Theta_{A'}\chi
 -  \tilde{f}^{A'} \Theta_{A'}\mathcal{O}^{\dagger}{}\chi\nonumber\\
& + \tfrac{3}{2} f^{A'} \Lambda \tilde{\Theta}_{A'}\chi
 + \tfrac{3}{4} f^{A'} \Psi_{2} \tilde{\Theta}_{A'}\chi
 -  \tfrac{1}{2} \Theta^{A'}\Psi_{2} \tilde{\Theta}_{A'}\chi
 -  \tfrac{5}{4} \tilde{f}^{A'} \Theta_{B'}f^{B'} \tilde{\Theta}_{A'}\chi\nonumber\\
& -  \tfrac{1}{2} \tilde{f}^{A'} \tilde{f}^{B'} \sigma '_{A'} \tilde{\Theta}_{B'}\chi
 + \tfrac{3}{4} \tilde{f}^{A'} \Theta_{A'}f^{B'} \tilde{\Theta}_{B'}\chi
 -  \tfrac{3}{4} f^{A'} \Theta_{A'}\tilde{f}^{B'} \tilde{\Theta}_{B'}\chi
 -  \tfrac{3}{4} \tilde{f}^{A'} \Theta^{B'}f_{A'} \tilde{\Theta}_{B'}\chi\nonumber\\
& + \tfrac{3}{4} f^{A'} \Theta^{B'}\tilde{f}_{A'} \tilde{\Theta}_{B'}\chi
 + \tfrac{1}{2} \tilde{f}^{A'} \tilde{\Theta}_{A'}\chi \tilde{\Theta}_{B'}\sigma '^{B'}.
\end{align}
The Ricci identity
\begin{align}
2 \Psi_{3} -  \tilde{f}_{B'} \sigma '^{B'} + \Theta_{A'}f^{A'} -  \tilde{\Theta}_{B'}\sigma '^{B'}={}&0.
\end{align}
and the irreducible decompositions \eqref{eq:IrrDecThft} and
\begin{align}
\Theta_{A'}f_{B'}={}&\Theta_{(A'}f_{B')}
 + \tfrac{1}{2} \bar\epsilon_{A'B'} \Theta_{C'}f^{C'},
\end{align}
lead to
\begin{align}
\dot{\Psi}_{4}{}={}&- \tfrac{5}{2} \Lambda \mathcal{O}^{\dagger}{}\chi
 -  \tfrac{1}{8} \mathcal{O}^{\dagger}{}\mathcal{O}^{\dagger}{}\chi
 - 6 \chi \Lambda \Psi_{2}
 - 2 \mathcal{O}^{\dagger}{}\chi \Psi_{2}
 - 3 \chi \Psi_{2}^2
 -  \tfrac{3}{2} \tilde{f}^{A'} \Psi_{2} \Theta_{A'}\chi
 -  \tilde{f}^{A'} \Theta_{A'}\mathcal{O}^{\dagger}{}\chi\nonumber\\
& + \tilde{f}^{A'} \Psi_{3} \tilde{\Theta}_{A'}\chi
 -  \tfrac{1}{2} \Theta^{A'}\Psi_{2} \tilde{\Theta}_{A'}\chi .
\end{align}
To bring it into the final form we use the Bianchi identity
\begin{align}
-3 f_{A'} \Psi_{2} + 2 \tilde{f}_{A'} \Psi_{3} + \Theta_{A'}\Psi_{2} -  \tilde{\Theta}_{A'}\Psi_{3}={}&0,
\end{align}
resulting in
\begin{align}
\dot{\Psi}_{4}{}={}&
- \tfrac{3}{2}\Psi_2 ( f^{A'} \tilde{\Theta}_{A'} + \tilde{f}^{A'} \Theta_{A'} + 2 \Psi_{2} + 4 \Lambda)\chi
 + \tfrac{1}{2} (\tilde{\Theta}_{A'} - 4\tilde{f}_{A'})\Psi_{3} \tilde{\Theta}^{A'}\chi \nonumber \\
 {}& -\left(\tfrac{1}{8} \mathcal{O}^{\dagger}{} +\tilde{f}^{A'} \Theta_{A'} + 2 \Psi_{2} + \tfrac{5}{2} \Lambda  \right)\mathcal{O}^{\dagger}{}\chi.
\end{align}
GHP expansion of $ \tilde{f}_{A'},  \Theta_{A'},  \tilde{\Theta}_{A'}$ leads to 
\begin{align}
\dot \Psi_{4}={}&  -  \tfrac{3}{2} \Psi_{2} (\rho' \tho - \rho \tho' + \tau \edt' - \tau' \edt + 2 \Psi_2 + 4 \Lambda)\chi  + \tfrac{1}{2} ((\tho - 4\rho) \Psi_{3}) \edt \chi - \tfrac{1}{2} ((\edt -4\tau) \Psi_{3}) \tho \chi \nonumber \\
& - (\tfrac{1}{8}\mathcal{O}^{\dagger} - \rho\tho' + \tau\edt' + 2\Psi_2 + \tfrac{5}{2} \Lambda) \mathcal{O}^{\dagger}{}\chi .
\end{align}
Comparison to the projected Killing vector defined in \eqref{eq:xiGHP} shows
\eqref{eq:LinPsi4Final}.
\end{proof}

\begin{proof}[Proof of Theorem~\ref{thm}]
The result follows from Lemma~\ref{lem:asdDebyeCurvature} by imposing the Debye equation \eqref{eq:DebyeEquation1}.
\end{proof}

\begin{remark}
In the special case of vacuum Petrov type D and for tetrads invariant under $\xi^a$, \eqref{eq:LinPsi4Final} reduces to $\dot \Psi_{4} = \xi^a \nabla_a \chi$, 
see \cite{aksteiner:thesis}.
\end{remark}

\begin{remark} \label{rem:RelatedWork}
Let us finally compare to three references closely related to the results of this section.
\begin{enumerate}
\item In \cite{CK1979} Kegeles and Cohen discuss Debye potentials for algebraically special geometries. They restricted to vacuum Petrov type D for the derivation of the linearized Weyl spinor, see equation (5.28) in that reference. They reduced the ASD Weyl curvature to type N and also the $\dot\Psi_4$ to first order.
 \item In \cite{1979Stewart}, Stewart derived the linearized connection and curvature components for vacuum type II perturbations in terms of a Debye potential. However, the result was not fully simplified, see equation (4.27) of that reference, so that the type N property could not be observed. The result was also presented in terms of a real metric, which means that all terms involving $\bar\chi$ correspond to self dual Weyl curvature, while $\chi$ terms belong to anti-self dual Weyl curvature.
 
 It should also be noted that in general, linearized Dyad components differ from dyad components of the linearized field. In this paper $\dot\Psi_i$ refers to the latter, while Stewart used the linearized Newman-Penrose equations, i.e. the former. However, he made a special choice of linearized tetrad for which the two sets of linearized Weyl components coincide. 
\item In \cite{Jeffryes:1986}, Jeffryes discusses a reduction to scalar potentials for algebraically special solutions to the full non-linear Einstein-Yang-Mills equations. Further it is shown that, to linear order, this construction reduces to the Debye potential formulation. Remarkably, the ASD Weyl curvature can be simplified already on the non-linear level, so that Theorem~\ref{thm} we discuss here follows from the linearized equations (8.75-77) of that reference.
\end{enumerate}
\end{remark} 

\subsection*{Acknowledgements}
This work started while the authors were in residence at Institut Mittag-Leffler in Djursholm, Sweden during the fall of 2019, supported by the Swedish Research Council under grant no. 2016-06596. SA thanks Thomas B\"ackdahl for xAct support and Benjamin Jeffryes for comments on his preprint \cite{Jeffryes:1986}. BA is supported by a postdoctoral fellowship from Conicet (Argentina). BFW acknowledges support from NSF grant PHY 1607323, sabbatical support from the University of Florida and the Observaroire de Paris at Meudon, and the Institut d'Astrophysique de Paris.

\appendix 

\section{Robinson-Trautman metrics} \label{app:RobinsonTrautman}

In 1968, Robinson and Trautman, \cite{1962RobinsonTrautman}, published a line element for which the vacuum Einstein equations 
reduce essentially to a non-linear fourth order equation for a real scalar function. It admits solutions of all Petrov types. 
Here we briefly review the reduction and also the explicit example of Petrov type II given in \cite{1962RobinsonTrautman}.

In coordinates $(u, r, \zeta, \bar{\zeta})$ and with real functions $H,P$, define the tetrad
\begin{align} \label{eq:RTtetrad1}
l^{a}={}(\partial_r)^{a}, \qquad
n^{a}={}(\partial_u)^{a} - H (\partial_r)^{a},\qquad
m^{a}={}P r^{-1}(\partial_\zeta)^{a}, \qquad
\bar{m}^{a}={} P r^{-1}(\partial_{\bar{\zeta}})^{a}.
\end{align}
Due to the normalization $l^a n_a = 1, m^a \bar{m}_a = -1$, the inverse is given by
\begin{align}
l_{a}={}\dd u_{a}, \qquad
n_{a}={}\dd r_{a} + H \dd u_{a} , \qquad
m_{a}={}- r P^{-1} \dd \zeta_{a}, \qquad
\bar{m}_{a}={}- r P^{-1}\dd\bar{\zeta}_{a},
\end{align}
so that the metric $g_{ab} = 2l_{(a} n_{b)} - 2 m_{(a} \bar{m}_{b)}$ is of the form
\begin{align}
g_{ab} ={}&
 2\dd r_{(a} \dd u_{b)}
 +2 H \dd u_{a} \dd u_{b}
 -2r^2 P^{-2}\dd \zeta_{(a} \dd \bar{\zeta}_{b)}.
\end{align}
This metric is Ricci flat if $H$ is given by
\begin{align} \label{eq:RTHdef}
H={}& P^2 \partial_\zeta \partial_{\bar{\zeta}}\log(P)
 - r \partial_u \log(P) - m r^{-1}, \qquad \text{with } m = m(u),
\end{align}
and $P$ being independent of $r$, solving
\begin{align}
P^2 \partial_\zeta \partial_{\bar{\zeta}} \left(P^2 \partial_\zeta  \partial_{\bar{\zeta}}\log(P)\right) -  \partial_u m + 3 m \partial_u \log(P) ={}&0.
\end{align}
For the connection and curvature we find
\begin{align}
\kappa = \sigma = \sigma' = \tau = \tau' = \epsilon = 0, \qquad R_{ab}=0, \qquad \Psi_0 = \Psi_1 = 0,
\end{align}
in particular the metric is algebraically special. The non-vanishing spin coefficients are given by
\begin{subequations} 
\begin{align}
\kappa'={}&- P r^{-1} \partial_{\bar{\zeta}} H,&&&
\rho ={}&- r^{-1},&&&
\rho'={}& H r^{-1} \partial_u \log(P),\\
\gamma ={}&\tfrac{1}{2} \partial_r H,&&&
\alpha ={}& \tfrac{1}{2} r^{-1} \partial_{\bar{\zeta}} P,&&&
\beta ={}&- \tfrac{1}{2} r^{-1} \partial_{\zeta} P,
\end{align}
\end{subequations} 
with $H$ given in \eqref{eq:RTHdef} and the remaining Weyl components are of the form
\begin{subequations} 
\begin{align}
\Psi_2 ={}& -mr^{-3}, \qquad
\Psi_3 = - P r^{-2} \partial_{\bar{\zeta}} \left( P^2 \partial_\zeta \partial_{\bar{\zeta}} \log(P) \right), \\
\Psi_4 ={}& r^{-2} \partial_{\bar{\zeta}} \left( P^2 \partial_{\bar{\zeta}} \left( P^2 \partial_{\zeta}\partial_{\bar{\zeta}}\log(P) -r\partial_u \log(P)  \right) \right).
\end{align}
\end{subequations}

An explicit example of a type II geometry, found in \cite{1962RobinsonTrautman}, is given by
\begin{align}
 P = (\zeta + \bar{\zeta})^{3/2}, \qquad m = \text{const.},
\end{align}
leading to $H = - m/r - 3(\zeta + \bar{\zeta})/2$ and to the metric
\begin{align} \label{eq:TypeIIRTsol}
g_{ab} ={}&
 2\dd r_{(a} \dd u_{b)}
 -\left(3(\zeta + \bar{\zeta}) + 2m/r \right) \dd u_{a} \dd u_{b}
 -2r^2 (\zeta + \bar{\zeta})^{-3} \dd \zeta_{(a} \dd \bar{\zeta}_{b)},
\end{align}
see also \cite[\S 28]{stephani_kramer_maccallum_hoenselaers_herlt_2003}. The spin coefficients reduce to
\begin{align}
\kappa' =  \frac{3 (\zeta +\bar{\zeta})^{3/2}}{2 r}, &&
\rho = - \frac{1}{r},&&
\rho' = -\frac{m}{r^2} -\frac{3 (\zeta +\bar{\zeta})}{2 r},&&
\gamma = \frac{m}{2 r^2},&&
\alpha = - \beta = \frac{3 (\zeta +\bar{\zeta})^{1/2}}{4 r},
\end{align}
and the curvature components are given by
\begin{align}
\Psi_{2} = - \frac{m}{r^3},&&
\Psi_{3} = \frac{3 (\zeta +\bar{\zeta})^{3/2}}{2 r^2},&&
\Psi_{4} = - \frac{9 (\zeta +\bar{\zeta})^2}{2 r^2}.
\end{align}

\newcommand{\prd}{Phys. Rev. D} 
\newcommand{\apj}{Astrophysical J.}

\bibliographystyle{plain}

\end{document}